\numberwithin{equation}{section}
\numberwithin{figure}{section}
\numberwithin{table}{section}
\theoremstyle{plain}
 \newtheorem{thm}{\protect\theoremname}
 \newtheorem{cor}[thm]{\protect\corollaryname}
 \newtheorem{prop}[thm]{\protect\propositionname}
 \newtheorem{lem}[thm]{\protect\lemmaname}
\theoremstyle{definition}
 \newtheorem{rem}[thm]{\protect\remarkname}
 \newtheorem{example}[thm]{\protect\examplename}
 \newtheorem{defn}[thm]{\protect\definitionname}
\newcommand{\idd}{\mathbb{1}}
\numberwithin{equation}{section}
\numberwithin{equation}{section}
\numberwithin{figure}{section}
\numberwithin{table}{section}
\renewenvironment{proof}[1][\proofname]{\par
  \pushQED{\qed}%
  \normalfont \topsep6\p@\@plus6\p@\relax
  \list{}{%
    \settowidth{\leftmargin}{\itshape\proofname:\hskip\labelsep}%
    \setlength{\labelwidth}{0pt}%
    \setlength{\itemindent}{-\leftmargin}%
  }%
  \item[\hskip\labelsep\itshape#1\@addpunct{:}]\ignorespaces
}{%
  \popQED\endlist\@endpefalse
}
  \providecommand{\corollaryname}{Corollary}
  \providecommand{\examplename}{Example}
  \providecommand{\lemmaname}{Lemma}
  \providecommand{\propositionname}{Proposition}
  \providecommand{\remarkname}{Remark}
  \providecommand{\theoremname}{Theorem}
  \providecommand{\definitionname}{Definition}	
\begin{document}

\title[Signatures of partition functions and their complexity reduction]{Signatures of partition functions and their complexity reduction
through the KP II equation}

\maketitle

\begin{center}
\author{Mario Angelelli}\\
Department of Mathematics and Physics 
\par
``Ennio De Giorgi'', University of Salento and sezione INFN, 
\par
Lecce 73100, Italy 
\par\end{center}

\begin{abstract}
A statistical amoeba arises from a real-valued partition function
when the positivity condition for pre-exponential terms is relaxed,
and families of signatures are taken into account. This notion lets
us explore special types of constraints when we focus on those signatures
that preserve particular properties. Specifically, we look at sums
of determinantal type, and main attention is paid to a distinguished
class of soliton solutions of the Kadomtsev-Petviashvili (KP) II equation.
A characterization of the signatures preserving the determinantal
form, as well as the signatures compatible with the KP II equation,
is provided: both of them are reduced to choices of signs for columns
and rows of a coefficient matrix, and they satisfy the whole KP hierarchy.
Interpretations in term of information-theoretic properties, geometric
characteristics, and the relation with tropical limits are discussed. 
\end{abstract}

\vspace{2pc}
\noindent{\it Keywords}: Statistical amoeba, soliton, determinant, complexity.

\noindent MSC: 15A15, 35C08, 82B05

\section{\label{sec: Introduction} Introduction}

The concept of partition function encodes in a single object the statistical
data compatible with the physical constraints of a system, e.g., conservation
laws. Hence, it explicitly connects its probabilistic and physical
characteristics. This is a fundamental principle in the investigation
of composite systems \cite{LL1980,Feynman1982} and is now applied
in many different branches of sciences \cite{Kleinert2009}. The
basic form of the partition function involves a sum of exponential
terms 
\begin{equation}
\mathcal{Z}:=\sum_{\alpha\in\mathbb{I}}g_{\alpha}\cdot e^{-\frac{\varepsilon_{\alpha}}{k_{B}T}}\label{eq: standard partition function}
\end{equation}
where $k_{B}$ is Boltzmann's constant, $T$ is the temperature, $\varepsilon_{\alpha}$
are the energy levels labelled by an indexing set $\mathbb{I}$ and
$g_{\alpha}$ are the associated degenerations.

In practice, the implementation of this approach is limited by the
intrinsic complexity of the model, since there are few cases where
the partition function can be evaluated exactly, e.g., in a closed
form \cite{Baxter1982}. Indeed, many phenomena in complex systems
cannot be reduced to their individual components, thus (\ref{eq: standard partition function}) involves
collections of objects and, consequently, the complexity of the calculation grows exponentially with the size of the system.

The occurrence of exact formulas for the partition function implies
relations between the characterizing quantities of the system, e.g.,
correlation functions that are generated by the partition function
through derivation. This gives rise to a family of differential equations
whose compatibility follows from the existence of the partition function.
On the other hand, one can start from a family of differential equations
and look for their compatibility. This naturally leads to the investigation
on connections between the partition function formalism and the concept
of integrability, especially integrable hierarchies. In such a context,
the role of the partition function is similar to the notion of $\tau$-function,
which provides one with a unifying framework for hierarchies of nonlinear
partial differential equations and their remarkable behaviours, e.g.,
symmetries, infinite family of commuting flows, soliton solutions
\cite{Hirota1980,Nimmo1983}.

Explicit links between the partition function formalism and integrable
systems have already been identified and used to provide fundamental
techniques in modern theoretical physics \cite{Baxter1982}. Remarkably,
certain solutions of integrable PDEs can be interpreted as potentials,
such as the Witten-Dijkgraaf-Verlinde-Verlinde (WDVV) equations in
topological quantum field theories. A solution for the WDVV system
defines a free energy for this theory, namely, a function which generates
correlators by means of derivation \cite{Dubrovin1996}.

Recent studies have been devoted to the combinatorics underlying these
kinds of structures. One of most fruitful examples is the Kadomtsev-Petviashvili
II (KP II) equation 
\begin{equation}
\partial_{x}\left(-4\cdot\partial_{t}u+\partial_{xxx}u+6u\cdot\partial_{x}u\right)=-3\partial_{yy}u\label{eq: KP II equation}
\end{equation}
where $\partial_{x}u$ denotes the partial derivative of $u\equiv u(x,y,t)$
with respect to $x$. The KP II equation is one of most important
$(2+1)$-integrable PDEs and is considered a universal model of two-dimensional
integrable evolutionary equations. The combinatorial structures arise,
for example, taking into account particular classes of solutions of
the KP II equation, such as soliton solutions of determinantal (Wronskian)
type \cite{Nimmo1983}. These solutions are parametrized by points
in the real Grassmannian space and their regularity and algebraic
features have been extensively studied in the last years \cite{BC2006,CK2008,DM-H2011,KodamaWilliams2013,Abenda2017}.

The aim of this work is to explore these requirements starting from
a statistical perspective: we choose a class of combinatorial configurations
related to sums of exponentials that generalise (\ref{eq: standard partition function}).
Then, we investigate the information content encoded in constraints
through the reduction of the class of allowed configurations. This
purpose will be realised extending the method of statistical amoebas
introduced in \cite{AK2016b}, i.e., considering a family of discrete
``deformations'' of the partition function and focusing on those
that are compatible with some given constraints. A statistical amoeba
is obtained relaxing the assumption of positive degenerations $g_{\alpha}$
in (\ref{eq: standard partition function}). Negative degenerations
of energy levels can be related to an imaginary part of energies $\varepsilon_{\alpha}$
and, hence, used to describe metastable states \cite{Langer1969,Newman1980}.
This approach has proved useful in statistical physics, where the
locus of zeros of complex-valued partition functions is employed in
the analysis of phase transitions \cite{LeeYang1952a,LeeYang1952b,Langer1969,Newman1980,Biskup2004,Wei2014}.
In our context, the partition function is real-valued, but indefinite
signs for degenerations $g_{\alpha}$ open up the way to the study
of stability ($\mathcal{Z}>0$), instability $(\mathcal{Z}<0$) and
phase transitions ($\mathcal{Z}=0$).

In general, the requirement of compatibility of the choices of signs
for $g_{\alpha}$ with a given constraint affects both the number
and the form of allowed configurations. We will focus on determinantal
relations, where the complexity of calculations of determinants is
polynomial via Gaussian reduction (while other immanents, in general,
have exponential complexity \cite{Valiant1979}), and integrability,
which reduces the complexity through the occurrence of conserved quantities.
We stress this point referring to the construction in \cite{AK2016b},
where all the combinations of signs for the $N$ exponential terms
in the partition function are allowed, as a \emph{free} statistical
amoeba. On the other hand, restrictions on the allowed combinations
of signs give rise to a \emph{constrained} statistical amoeba.

A strong relation between determinantal relations and integrability
has been established by Sato (see, e.g., \cite{Sato1989} and further
developments in \cite{Miwa2000}), where an expansion in terms of
Schur functions is found to be a $\tau$-function for the KP hierarchy
if and only if its coefficients satisfy the \textit{Grassmann-Pl\"{u}cker
relations} \cite{GKZ1994,Hogben2006}. This has led to intensive
research on equivalent reformulations of the KP hierarchy such as
these involving the infinite Grassmannian \cite{Sato1989,Miwa2000}.
Whilst sharing some concepts with these issues, the present investigation
is devoted to families of functions generated by a special class of
\textit{solutions} of the KP equation. It should be remarked that
we concentrate only on the first equation (\ref{eq: KP II equation})
in the hierarchy for what concerns constraints reducing the complexity,
and that we work in finite dimensionality. Moreover, the construction
of statistical amoebas does not rely only on the global (determinantal)
form of the initial function, but also on the \emph{individual} terms
in its expansion, because the generation of other functions goes through
changes of signs for a given expression of the type (\ref{eq: standard partition function}).
The consistency check with constraints and the construction of allowed
configurations depend on relations between non-vanishing terms.

When we start from solitons in Wronskian form, the constraints come
either from the determinantal relations or from the fulfilment of
the KP II equation. In fact, we will see that both these assumptions
reduce the allowed combinations of signs to the same family, i.e.,
the set of transformations that can be obtained by flipping the signs of some rows and/or columns of a coefficient matrix $\mathbf{A}$:
this is showed in Theorem \ref{thm: compatible choices of signs, solution KP, including vanishing}.
The complexity reduction takes place because the information about
the chosen signature for the terms in the exponential sum, which are
labelled by \emph{subsets} of $\{1,\dots,n\}$, is stored in the \emph{elements}
of $\{0,1,\dots,n\}$: in general, this presentation is non-unique
due to a certain equivalence relation (Proposition \ref{prop: number row/column configurations }).
The connection with other equations in the hierarchy follows from
the special form of the transformations in this family (Theorem 
\ref{thm: connections KP eq, KP hierarchy, and determinantal form}).

The main scope of this work is the formal derivation of the above-mentioned
properties. However, it is important to highlight potential applications
of this approach: for instance, choosing a coefficient matrix and
constructing the corresponding soliton solution, one can encode a
sequence of ``bits'' in the signs of the pre-exponential terms,
and recover a special family of sequences checking that the KP II
equation is satisfied. We will quantify the information content following
from the check of the KP II equation through the Kullback-Leibler
divergence between two probability distributions on the strings of
signatures. These results also have geometric implications, and we
will briefly discuss the links with oblique projections and their
statistical relevance. Likewise, we will point out the connection
with the tropical limit in statistical physics as introduced in \cite{AK2015}
and developed in \cite{Angelelli2017}. These issues could be of
interest for a better understanding of the relation between statistical
physics, complex systems and learning methods \cite{Behrens1994,Johansson2006,Ma2014}.
We will outline some of these connections and postpone their detailed
study to a separate article.

The paper is organized as follows: in Section \ref{sec: Background}
we fix the notation and summarize the basic notions that provide a
starting point for our investigation. In Section \ref{sec: Basic statistical mappings}
we introduce the link between soliton solutions of the KP hierarchy
and statistical amoebas, generalizing the latter to higher dimensions.
The occurrence of a particular structure to be preserved in the construction
of statistical amoebas is the focus of the next sections: in Section
\ref{sec: Constrained statistical amoebas_ solitons} we prove that
the compatibility of a choice of signs for the coefficients of exponential
terms with the KP II equation implies that it is induced by rows and
columns sign flips for a coefficient matrix. In Section \ref{sec: Number of distinct configurations}
we consider the number of distinct configurations that can be obtained
in this way. The relation with the strata of the corresponding free
amoeba \cite{AK2016b} is considered in Section \ref{sec: Levels of constrained amoebas}. Two applications of the present framework are addressed in Section \ref{sec: Applications}: the information content in the KP II constraint is studied via the Kullback-Leibler divergence (Subsection \ref{subsec: Application to complexity}), while an intersection property is  discussed in geometric terms (Subsection \ref{subsec: Intersection property and its geometric interpretation}). Finally we draw conclusions and address some issues of potential interest for future investigations in Section \ref{sec: Conclusion and future perspectives}.

\section{\label{sec: Background} Preliminaries}

We briefly summarise the basic state of the art regarding Wronskian
soliton solutions of the KP II equation and the statistical amoeba
formalism which will be used in the rest of the paper. Before that,
it is worth introducing some notations to enhance clearness.

\subsection{\label{subsec: Notation} Notation }

In the following, we will denote by $\mathcal{P}[n]$ the power set
of $[n]:=\{1,\dots,n\}$, by $\mathcal{P}_{k}[n]$ the collection
of subsets of $[n]$ with $k$ elements and by $\mathcal{I}\Delta\mathcal{J}:=\left(\mathcal{I}\setminus\mathcal{J}\right)\cup\left(\mathcal{J}\setminus\mathcal{I}\right)$
the symmetric difference of $\mathcal{I},\mathcal{J}\in\mathcal{P}[n]$.
We will also use the notation $\{\pm1\}:=\{+1,-1\}$, 
\begin{equation}
\mathcal{I}_{\beta}^{\alpha}:=\mathcal{I}\backslash\{\alpha\}\cup\{\beta\}\quad\alpha\in\mathcal{I},\beta\notin\mathcal{I}\label{eq: k-subset single exchange}
\end{equation}
and, similarly, $\mathcal{I}_{\beta}:=\mathcal{I}\cup\{\beta\}$,
$\mathcal{I}^{\alpha}:=\mathcal{I}\backslash\{\alpha\}$, $\mathcal{I}^{\alpha_{1}\alpha_{2}}:=\mathcal{I}\backslash\{\alpha_{1},\alpha_{2}\}$,
\textit{etc}. The expression $\mathcal{I}_{\beta}^{\alpha}$ implicitly assumes that $\alpha\in\mathcal{I}$, and $\beta\notin\mathcal{I}$, unless $\alpha=\beta$, in which case we have $\mathcal{I}_{\alpha}^{\alpha}:=\mathcal{I}$.

The symbol $\Delta_{\mathbf{A}}(\mathcal{I})$ (respectively $\Delta_{\mathbf{K}}(\mathcal{I})$)
is the maximal minor of $\mathbf{A}\in\mathbb{R}^{k\times n}$ (respectively,
$\mathbf{K}$) whose columns (respectively, rows) are indexed by $\mathcal{I}\in\mathcal{P}_{k}[n]$.
We will occasionally use $\Delta(\mathbf{A};\mathcal{I})$ instead
of $\Delta_{\mathbf{A}}(\mathcal{I})$ for the sake of clearness.
When a permutation $\pi$ of $[n]$ is involved, an additional sign
comes from the parity of the number of inversions induced by $\pi$
on $\mathcal{H}$. Particular attention will be paid to the set of
pivot columns $\mathcal{V}:=\{\nu_{1},\dots,\nu_{k}\}$, which is
the least element of $\mathcal{P}_{k}[n]$ (with respect to the lexicographical
order) associated with a non-vanishing minor $\Delta_{\mathbf{A}}(\mathcal{V})$.

The $(n\times k)$-Vandermonde
matrix relative to real parameters $\kappa_{1},\dots,\kappa_{n}$
is $\mathbf{K}:=\left(\kappa_{\alpha}^{i-1}\right)_{\alpha\in[n]}^{i\in[k]}$. The determinant of a general (not necessarily maximal) minor of $\mathbf{K}$ is 
\begin{equation}
\mathrm{VdM}(\boldsymbol{\ensuremath{\kappa}};\mathcal{H}):=\det\left(\mathbf{K}|_{\mathcal{H}\times[k]}\right)=\prod_{\alpha<\beta}^{\alpha,\beta\in\mathcal{H}}(\kappa_{\beta}-\kappa_{\alpha}),\quad\mathcal{H}\subseteq[n].\label{eq: Vandermonde determinant}
\end{equation}
So $\mathbf{K}$ has maximal rank in cases of pairwise distinct soliton
parameters $\kappa_{1},\dots,\kappa_{n}$, as we will assume.
We will look at exponential sums whose terms also depend on variables
$\boldsymbol{x}\in\mathbb{R}^{d}$, $d\geq3$, where $x_{1}=x$, $x_{2}=y$,
and $x_{3}=t$.

The real Grassmannian $\mathcal{GR}_{k,n}(\mathbb{R})$ is the space
of $k$-dimensional linear subspaces of a $n$-dimensional vector
space over $\mathbb{R}$. It can be presented as the quotient of the
space of $k$-dimensional frames in $\mathbb{R}^{n}$ by the left
action of $GL_{k}(\mathbb{R})$. If $\mathbb{R}_{\star}^{k\times n}$
denotes the set of real ($k\times n$)-matrices of maximal rank $\min\{k,n\}=k$,
one gets 
\begin{equation}
\mathcal{GR}_{k,n}(\mathbb{R})\cong\nicefrac{\mathbb{R}_{\star}^{k\times n}}{GL_{k}(\mathbb{R})}\label{eq: Grassmannian as quotient}
\end{equation}
where $GL_{k}(\mathbb{R})$ acts as left multiplication. Greek indices
$\alpha,\beta\in[n]$ will often represent columns, while Latin indices
$i,j\in[k]$ will be used to label rows. Additional notation will
be introduced in specific paragraphs.

\subsection{\label{subsec: Solitons, Hirota method and Wronskian solutions}
KP II equation and Wronskian solutions}

Determinantal solitons define a distinguished class of solutions that
can be derived from Hirota's direct method \cite{Hirota1980,Nimmo1983} introducing derivatives ${\displaystyle D_{x}}$ acting on pairs of functions: 
\begin{equation}
{\displaystyle D_{x}(f\cdot g):=(\partial_{x}f)\cdot g-f\cdot(\partial_{x}g)=(\partial_{x_{1}}-\partial_{x_{2}})f(x_{1})g(x_{2})|_{x_{1}=x_{2}}.}\label{eq: Hirota derivatives}
\end{equation}
So one can rewrite the KP II equation (\ref{eq: KP II equation})
in the following homogeneous bilinear form 
\begin{equation}
\mathrm{D_{KP}}(\tau,\tau):=(D_{x}^{4}-4\cdot D_{x}D_{t}+3\cdot D_{y}^{2})\tau\cdot\tau=0.\label{eq: bilinear KP}
\end{equation}
By the same token, one can rewrite the other equations of the KP hierarchy
in bilinear form \cite{Hirota1980,Miwa2000}. The tau-function $\tau(\boldsymbol{x})$
of the KP equation is related to $u$ via 
\begin{equation}
u(\boldsymbol{x}):=2\cdot\frac{\partial^{2}}{\partial x_{1}^{2}}\ln\tau(\boldsymbol{x}).\label{eq: solution from tau}
\end{equation}
It has been shown (see, e.g., \cite{Grammaticos1994}) that the Hirota
derivative (\ref{eq: Hirota derivatives}) can be characterized as
a derivative operator with gauge invariance under the simultaneous
action $f\mapsto e^{kx}f$ and $g\mapsto e^{kx}g$, i.e., 
\begin{equation}
\mathrm{D_{KP}}(e^{-\vartheta(\boldsymbol{x})}\cdot f(\boldsymbol{x}),e^{-\vartheta(\boldsymbol{x})}\cdot g(\boldsymbol{x}))=e^{-2\vartheta(\boldsymbol{x})}\cdot\mathrm{D_{KP}}(f(\boldsymbol{x}),g(\boldsymbol{x}))\label{eq: gauge invariance Hirota derivatives}
\end{equation}
for any linear function $\vartheta(\boldsymbol{x})$ of $\boldsymbol{x}$.
This is manifest in the antisymmetric form of $D_{x}$ in (\ref{eq: Hirota derivatives})
and is also reflected in the expression of a special class of solutions
\cite{Nimmo1983} 
\begin{equation}
\mathrm{Wr}\left(f_{1},\dots,f_{n}\right)=\det\left(\partial_{x}^{(i-1)}f_{m}\right)_{i,m\in[k]}\label{eq: Wronskian}
\end{equation}
where $f_{\alpha}$, $\alpha\in[n]$, are independent solutions for
the following system of partial differential equations 
\begin{equation}
\frac{\partial}{\partial x_{r}}f=\frac{\partial^{r}}{\partial x_{1}^{r}}f,\quad r\in[d].\label{eq: linear equations for KP}
\end{equation}
One can take a certain number, say $M$, of solutions of (\ref{eq: linear equations for KP}) in the
form 
\begin{equation}
E_{\alpha}(\boldsymbol{x})=\exp\varphi_{\alpha}(\boldsymbol{x}):=\exp\left(\sum_{r=1}^{d}\kappa_{\alpha}^{r}x_{r}\right)\label{eq: exponential dispersion relation}
\end{equation}
with real parameters $\kappa_{\alpha}$. A particular choice of solutions
of (\ref{eq: linear equations for KP}) comes from linear combinations
of these exponentials, with coefficients given by the entries of a
matrix $\mathbf{A}$: $f_{i}(\boldsymbol{x})={\displaystyle \sum_{\alpha=1}^{n}}A_{i\alpha}E_{\alpha}(\boldsymbol{x})$,
that is $\overrightarrow{f}=\mathbf{A}\cdot\overrightarrow{E}$. One
has 
\begin{equation}
\hspace*{-1.5cm}\left(\begin{array}{cccc}
E_{1} & \partial_{x}E_{1} & \dots & \partial_{x}^{(k-1)}E_{1}\\
%E_{2} & \partial_{x}E_{2} & \dots & \partial_{x}^{(k-1)}E_{2}\\
\vdots & \vdots & \ddots & \vdots\\
E_{n} & \partial_{x}E_{n} & \dots & \partial_{x}^{(k-1)}E_{n}
\end{array}\right)=\left(\begin{array}{cccc}
E_{1} & \kappa_{1}E_{1} & \dots & \kappa_{1}^{k-1}E_{1}\\
%E_{2} & \kappa_{2}E_{2} & \dots & \kappa_{2}^{k-1}E_{2}\\
\vdots & \vdots & \ddots & \vdots\\
E_{n} & \kappa_{n}E_{n} & \dots & \kappa_{n}^{k-1}E_{n}
\end{array}\right)=\mathbf{\ensuremath{\Theta}}\cdot\mathbf{K}\label{eq: exponentials to Vandermonde}
\end{equation}
where $\mathbf{\ensuremath{\Theta}}:=\mathrm{diag}\left(E_{1},E_{2},\dots,E_{n}\right)$.
Finally, the resulting soliton solution is equivalently expressed
using the Cauchy-Binet expansion \cite{Hogben2006} as 
\begin{eqnarray}
\tau(\boldsymbol{x}) & := & \det(\mathbf{A}\cdot\mathbf{\ensuremath{\Theta}}(\boldsymbol{x})\cdot\mathbf{K})\label{eq: Cauchy-Binet formula, solitons, a}\\
 & = & \sum_{\mathcal{I}\in\mathcal{P}_{k}[n]}\Delta_{\mathbf{A}}(\mathcal{I})\cdot\Delta_{\mathbf{K}}(\mathcal{I})\cdot e^{\sum_{\alpha\in\mathcal{I}}\varphi_{\alpha}(\boldsymbol{x})}.\label{eq: Cauchy-Binet formula, solitons}
\end{eqnarray}
It should be remarked that the left action of $GL_{k}(\mathbb{R})$
on $\mathbb{R}_{\star}^{k\times n}$, i.e., the multiplication of $\mathbf{A}$
by a full-rank $k\times k$ real matrix, induces the multiplication
of $\tau$ by a constant and, hence, leaves solutions (\ref{eq: solution from tau})
invariant. Such an action is equivalent to row operations on $\mathbf{A}$,
thus soliton solutions are parametrized by points of the real Grassmannian
$\mathcal{GR}_{k,n}(\mathbb{R})$ \cite{Hogben2006} rather than
points on $\mathbb{R}_{\star}^{k\times n}$ (see, e.g., \cite{BC2006,DM-H2011,KodamaWilliams2013}).

The solution (\ref{eq: solution from tau}) is regular at $\tau(\boldsymbol{x})>0$.
If $\tau(\boldsymbol{x})<0$, then $\ln\tau(\boldsymbol{x})$ is multivalued,
but its imaginary part does not depend on $\boldsymbol{x}$ and disappears
after derivation in (\ref{eq: solution from tau}). Thus, possible
singularities of soliton solutions are related to the points where
$\tau$ vanishes. If the order $\kappa_{1}<\dots<\kappa_{n}$ for
soliton parameters is fixed, then the locus of zeros of $\tau$ is
not empty if there exist maximal minors with opposite sign, i.e., if
$\mathbf{A}$ parametrises a point outside the totally non-negative
part of the Grassmannian \cite{KodamaWilliams2013}.

\subsection{\label{subsec: Statistical amoebas } Statistical amoebas }

The analysis of roots of the partition function is a fundamental
technique in the study of stability, metastability and phase transitions in composite
systems \cite{LeeYang1952a,LeeYang1952b,Newman1980,Wei2014}. Finite
sums of exponentials of the type (\ref{eq: standard partition function})
are positive for real values of energies $E_{n}$ and temperature
$T$ and positive degeneracies $g_{n}$. So the zeros of the partition
function define a \emph{singular locus} in the complex
domain and, in many cases, they approach the real line when the number
of terms involved in (\ref{eq: standard partition function}) becomes
large (thermodynamic limit).

If one restricts to the real domain, the partition function can vanish
if not all the degeneracies $g_{n}$ have the same sign. This corresponds
to real partition function of indefinite signatures and relates to the concept of negative probabilities \cite{Wigner1932,Dirac1942,Feynman1987,Burgin2010}. This proposal has been developed in \cite{AK2016b} for partition functions of the type 
\begin{equation}
\mathcal{Z}:=\sum_{\alpha=1}^{N}g_{\alpha}\cdot e^{f_{\alpha}(\boldsymbol{x})}\label{eq: standard partition function, exponential sums}
\end{equation}
where the functions $f_{\alpha}(\boldsymbol{x})$ represent ``micro-free
energies'' that depend on certain parameters $\boldsymbol{x}\in\mathbb{R}^{d}$
(e.g., temperature, external magnetic fields, \textit{etc.}). When one
fixes $s\leq\frac{N}{2}$, the \emph{$s$-stratum} of the statistical
amoeba consists of the zero loci of the functions produced by any
possible combination of $s$ signs among the $N$ terms in (\ref{eq: standard partition function, exponential sums}),
i.e., $\boldsymbol{g}\in\{\pm1\}^{N}$ with $\#\left(\boldsymbol{g}^{-1}\left(\{-1\}\right)\right)=s$.
In great generality, that is under the only assumption of polynomial
functions $f_{\alpha}(\boldsymbol{x})$, $\alpha\in[N]$, a pattern can
be found in the study of the singular locus: the $s$-stratum is confined
in a region $\mathbb{R}^{d}\setminus\mathcal{D}_{s-}$ of the space
of parameters $\mathbb{R}^{d}$, and the \emph{instability domains} $\mathcal{D}_{s-}$ obey the following inclusion property 
\begin{equation}
\mathcal{D}_{s-}\subseteq\mathcal{D}_{\hat{s}-},\quad1\leq s<\hat{s}<\frac{N}{2}.\label{eq: inclusion property ZCD}
\end{equation}
The restriction $s<\frac{N}{2}$ also avoids the redundancy given
by the simultaneous reversal of all the signs. This is equivalently expressed
via the involution $\boldsymbol{g}^{-1}\left(\{-1\}\right)\mapsto[n]\setminus\boldsymbol{g}^{-1}\left(\{-1\}\right)$, which preserves the singular locus and exchanges the role of
equilibrium ($\mathcal{Z}>0$) and non-equilibrium ($\mathcal{Z}<0$)
regions, since $\mathcal{Z}(-\boldsymbol{g})=-\mathcal{Z}(\boldsymbol{g})$. Thus, the strata associated with $s>\frac{N}{2}$ are said to generate the statistical \emph{antiamoeba}.

When all the ${{N} \choose {s}}$ combinations of $s$
negative coefficients $g_{\alpha}$ are taken into account, the set $\mathcal{D}_{s-}$ coincides with the locus of points where the maximal number of negative partition functions (\ref{eq: standard partition function, exponential sums}) is obtained. This maximum
is the same for all the systems with polynomial $f_{\alpha}$ and
equals ${{N-1} \choose {s-1}}$. Furthermore, the polynomial assumption
is also suitable for the study of the tropical limit \cite{AK2015,AK2016b},
both in the linear and in the nonlinear cases (the latter is referred
as a multi-scaling tropical limit \cite{AK2016}).

\section{\label{sec: Basic statistical mappings} From soliton solutions to
statistical amoebas}

%It has been remarked that there is a strong connection between $\tau$-functions
%of integrable systems and partition functions in statistical physics.
%More generally, 
The explicit form of many soliton solutions of partial
differential equations can be derived from a sum of exponentials \cite{Hirota1980}.
In order to highlight the relation between statistical amoebas and
$\tau$-functions, it is worth noting that the partition function
(\ref{eq: standard partition function}) can be expressed as 
\begin{equation}
\mathcal{Z}=\det\left(\boldsymbol{g}^{T}\cdot\mathbf{\ensuremath{\Theta}}(\boldsymbol{\ensuremath{\varepsilon}})\cdot\boldsymbol{K}_{0}\right)\label{eq: determinantal partition function}
\end{equation}
where $\boldsymbol{g}:=\left(g_{1},g_{2},\dots,g_{N}\right)^{T}$, $\mathbf{\ensuremath{\Theta}}(\boldsymbol{\ensuremath{\varepsilon}}):=\mathrm{diag}\left(e^{-\frac{\varepsilon_{1}}{k_{B}T}},e^{-\frac{\varepsilon_{2}}{k_{B}T}},\dots,e^{-\frac{\varepsilon_{N}}{k_{B}T}}\right)$
and $\boldsymbol{K}_{0}:=(\underset{N}{\underbrace{1,\dots,1}})^{T}$.
In particular, $\boldsymbol{g}$ is a totally positive vector (all its
entries are positive) and $\boldsymbol{K}_{0}$ can be interpreted as
a $n\times1$ Vandermonde matrix. This formula for the partition function
coincides with a Wronskian $\tau$-function (\ref{eq: Cauchy-Binet formula, solitons, a}).

More generally, we can express such a type of $\tau$-functions as a sum of exponentials through the Cauchy-Binet expansion of the determinant of a product: if one introduces
\begin{eqnarray}
g_{\mathcal{I}} & := & \Delta_{\mathbf{A}}(\mathcal{I})\cdot\Delta_{\mathbf{K}}(\mathcal{I}),\label{eq: determinantal degeneration}\\
\Lambda_{\mathcal{I}}(\boldsymbol{x}) & := & \Delta_{\mathbf{A}}(\mathcal{I})\cdot\Delta_{\mathbf{K}}(\mathcal{I})\cdot\exp\left(\sum_{\alpha\in\mathcal{I}}\varphi_{\alpha}(\boldsymbol{x})\right),\quad\mathcal{I}\in\mathcal{P}_{k}[n]\label{eq: exponential terms}
\end{eqnarray}
then (\ref{eq: Cauchy-Binet formula, solitons}) is of the form (\ref{eq: standard partition function, exponential sums}),
\begin{eqnarray}
\tau(\boldsymbol{x}) & = & \det\left(\mathbf{A}\cdot\mathbf{\ensuremath{\Theta}}(\boldsymbol{x})\cdot\mathbf{K}\right)\nonumber \\
& = & \sum_{\mathcal{I}\in\mathcal{P}_{k}[n]}\Lambda_{\mathcal{I}}(\boldsymbol{x}).\label{eq: Cauchy-Binet as partition function}
\end{eqnarray}
In this way, we move from degenerations multiplicities $\boldsymbol{g}$
to more general products of minors $g_{\mathcal{I}}$. The dimension
$k$, which coincides with the rank of $\mathbf{A}$ and $\mathbf{K}$
when $\tau$ does not identically vanish, indicates the number of
line solitons at $x_{2}\gg0$, while $n-k$ is related to line solitons
at $x_{2}\ll0$. In the statistical perspective, $\tau$ is the partition
function for a statistical model whose configurations correspond to
subsets $\mathcal{I}\in\mathcal{P}_{k}[n]$ and have energies ${\displaystyle \sum_{\alpha\in\mathcal{I}}\varphi_{\alpha}}$.
If $\kappa_{\alpha}\in\mathbb{Z}$ and $\mathbf{A}\in\mathbb{Z}_{\star}^{k\times n}$
is a matrix with integer entries and maximal rank, then (\ref{eq: determinantal degeneration})
is an integer, $\Delta_{\mathbf{A}}(\mathcal{I})$ and $\Delta_{\mathbf{K}}(\mathcal{I})$
can be seen as degenerations for independent events and $\Delta_{\mathbf{A}}(\mathcal{I})\cdot\Delta_{\mathbf{K}}(\mathcal{I})$
is the joint degeneration. The $\alpha$th column of $\mathbf{A}$
generalizes $g_{\alpha}$ in (\ref{eq: standard partition function}),
so it can be regarded as a degeneration vector relative to the $\alpha$th
energy level \textit{$\varphi_{\alpha}(\boldsymbol{x})$, $\alpha\in[n]$. }

Regularity hypotheses for the case $k=1$ can be extended to $\tau$-functions
at $k\geq3$ too. For example, the entries of $\boldsymbol{g}$ in (\ref{eq: standard partition function})
and (\ref{eq: determinantal partition function}) are assumed to be
positive since they are a measure for degenerations associated with
energy levels. At $k>1$, this property generalizes to a real matrix
$\mathbf{A}\in\mathbb{R}_{\star}^{k\times n}$ where all the maximal
minors are non-negative. If one fixes the ordering $\kappa_{1}<\dots<\kappa_{n}$
for soliton parameters, then this request guarantees (indeed, it is
equivalent to, see e.g. \cite{Kodama2006}) the positivity of the
Cauchy-Binet expansion (\ref{eq: Cauchy-Binet formula, solitons}),
hence the regularity of the solution of the original KP II equation.

In the next section, this kind of request will be relaxed: the matrix
$\mathbf{A}$ is only assumed to obey the full-rank condition. However,
some peculiarities of the total non-negative case will be discussed
in Section \ref{sec: Levels of constrained amoebas}. 
\begin{defn}
\label{def: signatures and matroid} A \emph{choice of signs}, or
\emph{signature}, is a map 
\begin{equation}
\Sigma:\,\mathfrak{G}\longrightarrow\{\pm1\}\label{eq: choice of signs}
\end{equation}
where 
\begin{equation}
\mathfrak{G}:=\left\{ \mathcal{I}\in\mathcal{P}_{k}[n]:\,\Delta_{\mathbf{A}}(\mathcal{I})\neq0\right\} .\label{eq: set of vanishing minors}
\end{equation}
$\Sigma$ acts on (\ref{eq: Cauchy-Binet as partition function})
via $g(\mathcal{I})\mapsto\Sigma(\mathcal{I})\cdot g(\mathcal{I})$,
which returns a new function 
\begin{equation}
\sum_{\mathcal{I}\in\mathcal{P}_{k}[n]}\Sigma(\mathcal{I})\cdot g_{\mathcal{I}}\cdot e^{\sum_{\alpha\in\mathcal{I}}\varphi_{\alpha}(\boldsymbol{x})}.\label{eq: action of signature on tau}
\end{equation}
\end{defn}
The set $\mathfrak{G}$ is a \textit{matroid} on $[n]$: this means
that the exchange relation holds, i.e., 
\begin{equation}
\hspace*{-1.5cm}\mathrm{for\,\,all\,\,}\mathcal{A},\mathcal{B}\in\mathfrak{G},\,\alpha\in\mathcal{A}\setminus\mathcal{B},\mathrm{\,there\,\,exists\,\,}\beta\in\mathcal{B}\setminus\mathcal{A}\mathrm{\,\,such\,\,that\,\,}\mathcal{A}_{\beta}^{\alpha}\in\mathfrak{G}.\label{eq: exchange relation}
\end{equation}
In particular, we refer to a \textit{generic case} as one where $\mathfrak{G}=\mathcal{P}_{k}[n]$. 

\section{\label{sec: Constrained statistical amoebas_ solitons} Complexity
reduction through the KP II constraint}

The determinantal partition function (\ref{eq: determinantal partition function})
at $k=1$ reduces to the statistical amoebas studied in \cite{AK2016b}.
Cases at $k>1$ have more complications, due to the occurrence of
functional relations among the terms in the $\tau$-function (\ref{eq: Cauchy-Binet as partition function}).
Indeed, the minors of a ($k\times n$)-matrix have to satisfy the
well-known Grassmann-Pl\"{u}cker relations (see, e.g., \cite{GKZ1994},
§3.1). In particular, the three-terms Pl\"{u}cker relations 
\begin{equation}
\Delta_{\mathbf{A}}(\mathcal{H}_{\alpha\gamma})\cdot\Delta_{\mathbf{A}}(\mathcal{H}_{\beta\delta})=\Delta_{\mathbf{A}}(\mathcal{H}_{\alpha\beta})\cdot\Delta_{\mathbf{A}}(\mathcal{H}_{\gamma\delta})+\Delta_{\mathbf{A}}(\mathcal{H}_{\alpha\delta})\cdot\Delta_{\mathbf{A}}(\mathcal{H}_{\beta,\gamma})\label{eq: three-terms Plucker relations}
\end{equation}
hold for all $1\leq\alpha<\beta<\gamma<\delta\leq n$ and $\mathcal{H}\subset[n]$
with $\#\mathcal{H}=k-2$ and $\{\alpha,\beta,\gamma,\delta\}\cap\mathcal{H}=\emptyset$. 
\begin{example}
\label{exa: no 1-stratum free} For a generic matrix $\mathbf{A}\in\mathbb{R}_{\star}^{k\times n}$,
$k>1$, not all the signatures (\ref{eq: choice of signs}) for $\boldsymbol{g}\in\mathbb{R}^{{{n} \choose {k}}}$
in (\ref{eq: Cauchy-Binet as partition function}) correspond to another
choice of matrix $\mathbf{\ensuremath{\tilde{A}}}\in\mathbb{R}^{k\times n}$
in (\ref{eq: determinantal degeneration}). In fact, consider $\mathbf{A}\in\mathbb{R}^{k\times n}$
such that there exist two non-vanishing minors $\Delta_{\mathbf{A}}(\mathcal{H}_{\alpha\gamma})$
and $\Delta_{\mathbf{A}}(\mathcal{H}_{\beta\delta})$, e.g., taking
$\mathbf{A}$ parametrizing a point in the totally positive part of
the Grassmannian $\mathcal{GR}_{k,n}(\mathbb{R})$. Suppose that the
choice 
\begin{equation}
g_{\mathcal{I}}\mapsto\left\{ \begin{array}{cc}
-g_{\mathcal{I}}, & \mathcal{I}=\mathcal{H}_{\alpha\gamma}\\
g_{\mathcal{I}}, & \mbox{otherwise}
\end{array}\right.,\label{eq: non-integrable choice sign}
\end{equation}
which lies in the free statistical $1$-amoeba relative to (\ref{eq: Cauchy-Binet as partition function}),
corresponds to a certain matrix $\mathbf{\ensuremath{\tilde{A}}}\in\mathbb{R}^{k\times n}$.
The relation (\ref{eq: three-terms Plucker relations}) gives 
\begin{eqnarray}
\Delta_{\mathbf{A}}(\mathcal{H}_{\alpha\gamma})\cdot\Delta_{\mathbf{A}}(\mathcal{H}_{\beta\delta}) & = & \Delta_{\mathbf{A}}(\mathcal{H}_{\alpha\beta})\cdot\Delta_{\mathbf{A}}(\mathcal{H}_{\gamma\delta})+\Delta_{\mathbf{A}}(\mathcal{H}_{\alpha\delta})\cdot\Delta_{\mathbf{A}}(\mathcal{H}_{\beta\gamma})\nonumber \\
 & = & \Delta_{\mathbf{\ensuremath{\tilde{A}}}}(\mathcal{H}_{\alpha\beta})\cdot\Delta_{\mathbf{\ensuremath{\tilde{A}}}}(\mathcal{H}_{\gamma\delta})+\Delta_{\mathbf{\ensuremath{\tilde{A}}}}(\mathcal{H}_{\alpha\delta})\cdot\Delta_{\mathbf{\ensuremath{\tilde{A}}}}(\mathcal{H}_{\beta\gamma})\nonumber \\
 & = & \Delta_{\mathbf{\ensuremath{\tilde{A}}}}(\mathcal{H}_{\alpha\gamma})\cdot\Delta_{\mathbf{\ensuremath{\tilde{A}}}}(\mathcal{H}_{\beta\delta}) = -\Delta_{\mathbf{A}}(\mathcal{H}_{\alpha\gamma})\cdot\Delta_{\mathbf{A}}(\mathcal{H}_{\beta\delta})\label{eq: three-point Grassmann-Plucker}
\end{eqnarray}
that means $\Delta_{\mathbf{A}}(\mathcal{H}_{\alpha\gamma})\cdot\Delta_{\mathbf{A}}(\mathcal{H}_{\beta\delta})=0$,
a contradiction. It follows that, for points in the totally positive
Grassmannian, no signature of the type (\ref{eq: non-integrable choice sign})
preserves the form (\ref{eq: Wronskian}). 
\end{example}
In general, it is not a trivial task to check if a given map of the
type (\ref{eq: choice of signs}) follows from maximal minors of a
certain matrix. This issue is related to combinatorial structures
behind Grassmann-Pl\"{u}cker relations, namely \textit{chirotopes} $\chi:\,[n]^{k}\longrightarrow\{-1,0,+1\}$
(see, e.g., \cite{Ziegler2004,Bjorner1999} for more details). In
particular, chirotopes coming from $\mathbf{A}\in\mathbb{R}^{k\times n}$,
in the sense that $\chi(\mathcal{I})=\mathrm{sign}(\Delta_{\mathbf{A}}(\mathcal{I}))$
for all $\mathcal{I}\in[n]^{k}$, are said to be realizable. Both
the check of the realizability of chirotopes and their enumeration
have non-trivial complexity \cite{Fukuda2012}. However, in the present framework, the data on the initial function, in particular the ``lengths'' of the exponential and pre-exponential terms, are given:
starting from (\ref{eq: determinantal degeneration}), these absolute
values are known to be compatible with at least one determinantal
(Wronskian) form, and can be used to explore other signatures.

We exclude situations where (\ref{eq: Cauchy-Binet as partition function})
identically vanishes, so $\mathbf{A}$ has maximal rank and there
is no null row $\boldsymbol{0}_{n}^{T}$. If there exists a null column,
then the $\tau$-function does not depend on the corresponding soliton.
This leads to the reduction of a $n$- to a $(n-1)$-soliton solution.
So we can assume that there is no null column without loss of generality.

We now consider the conditions on choices of signs that map a $\tau$-function
of the KP II equation (\ref{eq: bilinear KP}) to another $\tau$-function, which will be called \textit{solitonic signatures}. Let (\ref{eq: choice of signs}) be any choice of signs for coefficients
$g_{\mathcal{I}}$ in (\ref{eq: Cauchy-Binet formula, solitons}),
which is equivalent to a choice of a partition of $\mathfrak{G}$
in two disjoint subsets $\mathfrak{G}=\boldsymbol{\ensuremath{\mathrm{PS}}}\cup\boldsymbol{\ensuremath{\mathrm{NS}}}$,
$\boldsymbol{\ensuremath{\mathrm{PS}}}\cap\boldsymbol{\ensuremath{\mathrm{NS}}}=\emptyset$,
where 
\begin{equation}
\boldsymbol{\ensuremath{\mathrm{PS}}}:=\Sigma^{-1}\left(\{+1\}\right),\quad\boldsymbol{\ensuremath{\mathrm{NS}}}:=\Sigma^{-1}\left(\{-1\}\right).\label{eq: partition from signature}
\end{equation}
So one can write the resulting exponential sum as 
\begin{equation}
\sum_{\mathcal{I}\in\mathcal{P}_{k}[n]}\Sigma(\mathcal{I})\cdot\Delta_{\mathbf{A}}(\mathcal{I})\cdot\Delta_{\mathbf{K}}(\mathcal{I})\cdot e^{\sum_{\alpha\in\mathcal{I}}\varphi_{\alpha}(\boldsymbol{x})}=\tau(\boldsymbol{x})-2\cdot\tau_{\Sigma}(\boldsymbol{x})\label{eq: signed tau-function}
\end{equation}
where 
\begin{equation}
\tau_{\Sigma}(\boldsymbol{x}):=\sum_{\mathcal{I}\in\boldsymbol{\ensuremath{\mathrm{NS}}}}\Delta_{\mathbf{A}}(\mathcal{I})\cdot\Delta_{\mathbf{K}}(\mathcal{I})\cdot e^{\sum_{\alpha\in\mathcal{I}}\varphi_{\alpha}(\boldsymbol{x})}.\label{eq: signed tau-function, negative part}
\end{equation}
Moreover, we will say 
\begin{equation}
\mathcal{I}\cong\mathcal{J}\Leftrightarrow\Sigma(\mathcal{I})=\Sigma(\mathcal{J}),\quad\mathcal{I},\mathcal{J}\in\mathcal{P}_{k}[n],\label{eq: congruence k-subsets}
\end{equation}
which defines a relation on $\mathfrak{G}$.

We do not assume \textit{a priori} that a partition (\ref{eq: partition from signature})
returns a determinant (\ref{eq: Wronskian}): this means that the
determinantal properties (i.e., Grassmann-Pl\"{u}cker relations) hold for
the family $\left\{ \Delta_{\mathbf{A}}(\mathcal{I}):\right.$ $\left.\mathcal{I}\in\mathfrak{G}\right\} $,
but not necessarily for $\left\{ \Sigma(\mathcal{I})\cdot\Delta_{\mathbf{A}}(\mathcal{I}):\,\mathcal{I}\in\mathfrak{G}\right\} $.

The bilinearity of the Hirota derivative and the KP operator (\ref{eq: bilinear KP})
gives 
\begin{eqnarray}
 &  & \mathrm{D_{KP}}(\tau-2\cdot\tau_{\Sigma},\tau-2\cdot\tau_{\Sigma})\nonumber \\
 & = & \mathrm{D_{KP}}(\tau,\tau)+4\cdot\mathrm{D_{KP}}(\tau_{\Sigma},\tau_{\Sigma})-2\cdot\mathrm{D_{KP}}(\tau,\tau_{\Sigma})-2\cdot\mathrm{D_{KP}}(\tau_{\Sigma},\tau)\nonumber \\
 & = & 4\cdot\mathrm{D_{KP}(\tau_{\Sigma},\tau_{\Sigma})}-2\cdot\mathrm{D_{KP}}(\tau,\tau_{\Sigma})-2\cdot\mathrm{D_{KP}}(\tau_{\Sigma},\tau)\label{eq: Dkp orthogonality, pre}
\end{eqnarray}
since $\mathrm{D_{KP}}(\tau,\tau)=0$. Therefore, one has 
\begin{equation}
0=\mathrm{D_{KP}}(\tau,\tau_{\Sigma})-\mathrm{D_{KP}(\tau_{\Sigma},\tau_{\Sigma})}=\mathrm{D_{KP}(\tau-\tau_{\Sigma},\tau_{\Sigma})}.\label{eq: Dkp orthogonality}
\end{equation}
We can say that $\tau-\tau_{\Sigma}$ is ``orthogonal'' to $\tau_{\Sigma}$
with respect to the $\mathrm{D_{KP}}$ bilinear operator. The bilinearity
also implies that (\ref{eq: Dkp orthogonality}) is equivalent to
\begin{eqnarray}
0 & = & \mathrm{D_{KP}(\tau-\tau_{\Sigma},\tau_{\Sigma})}\nonumber \\
& = & \sum_{\mathcal{A}\in\boldsymbol{\ensuremath{\mathrm{PS}}}}\sum_{\mathcal{B}\in\boldsymbol{\ensuremath{\mathrm{NS}}}}g(\mathcal{A})g(\mathcal{B})\cdot\mathrm{D_{KP}}\left[\exp\left(\sum_{\alpha\in\mathcal{A}}\sum_{u=1}^{d}\kappa_{\alpha}^{u}x_{u}\right),\exp\left(\sum_{\beta\in\mathcal{B}}\sum_{u=1}^{d}\kappa_{\beta}^{u}x_{u}\right)\right]\nonumber \\
& = & \sum_{\mathcal{A}\in\boldsymbol{\ensuremath{\mathrm{PS}}}}\sum_{\mathcal{B}\in\boldsymbol{\ensuremath{\mathrm{NS}}}}g(\mathcal{A})g(\mathcal{B})\exp\left(2\sum_{\beta\in\mathcal{B}}\sum_{u=1}^{d}\kappa_{\beta}^{u}x_{u}\right)\mathrm{D_{KP}}\left[1,\exp\left(\sum_{u=1}^{d}\sum_{\alpha\in\mathcal{A}}\kappa_{\alpha}^{u}x_{u}-\sum_{\beta\in\mathcal{B}}\kappa_{\beta}^{u}x_{u}\right)\right].\nonumber \\
\label{eq: transversal +/-}
\end{eqnarray}
From 
\begin{eqnarray}
 &  & \mathrm{D_{KP}}\left[1,\exp\left(\sum_{u=1}^{d}\sum_{\alpha\in\mathcal{A}}\kappa_{\alpha}^{u}x_{u}-\sum_{\beta\in\mathcal{B}}\kappa_{\beta}^{u}x_{u}\right)\right]\nonumber \\
 & = & (\partial_{x_{1}}^{4}-4\cdot\partial_{x_{1}}\partial_{x_{3}}+3\cdot\partial_{x_{2}}^{2})\exp\left(\sum_{u=1}^{d}\sum_{\alpha\in\mathcal{A}}\kappa_{\alpha}^{u}x_{u}-\sum_{\beta\in\mathcal{B}}\kappa_{\beta}^{u}x_{u}\right)\nonumber \\
 & = & C(\mathcal{A},\mathcal{B};\boldsymbol{\ensuremath{\kappa}})\cdot\exp\left(\sum_{u=1}^{d}\sum_{\alpha\in\mathcal{A}}\kappa_{\alpha}^{u}x_{u}-\sum_{\beta\in\mathcal{B}}\kappa_{\beta}^{u}x_{u}\right)\label{eq: from Hirota to usual derivative}
\end{eqnarray}
where 
\begin{eqnarray}
C(\mathcal{A},\mathcal{B};\boldsymbol{\ensuremath{\kappa}}) & := & \left[\sum_{\alpha\in\mathcal{A}}\kappa_{\alpha}-\sum_{\beta\in\mathcal{B}}\kappa_{\beta}\right]^{4}+3\cdot\left[\sum_{\alpha\in\mathcal{A}}\kappa_{\alpha}^{2}-\sum_{\beta\in\mathcal{B}}\kappa_{\beta}^{2}\right]^{2}\nonumber \\
 &  & -4\cdot\left[\sum_{\alpha\in\mathcal{A}}\kappa_{\alpha}-\sum_{\beta\in\mathcal{B}}\kappa_{\beta}\right]\cdot\left[\sum_{\alpha\in\mathcal{A}}\kappa_{\alpha}^{3}-\sum_{\beta\in\mathcal{B}}\kappa_{\beta}^{3}\right].\label{eq: coefficient}
\end{eqnarray}
The equation (\ref{eq: transversal +/-}) is equivalent to 
\begin{equation}
\sum_{\mathcal{A}\in\boldsymbol{\ensuremath{\mathrm{PS}}}}\sum_{\mathcal{B}\in\boldsymbol{\ensuremath{\mathrm{NS}}}}g(\mathcal{A})g(\mathcal{B})\cdot C(\mathcal{A},\mathcal{B};\boldsymbol{\ensuremath{\kappa}})\cdot\exp\left(\sum_{u=1}^{d}\sum_{\alpha\in\mathcal{A}}\kappa_{\alpha}^{u}x_{u}+\sum_{\beta\in\mathcal{B}}\kappa_{\beta}^{u}x_{u}\right)=0.\label{eq: transversal +/-, b}
\end{equation}
Two exponentials in (\ref{eq: transversal +/-, b}) associated with
the pairs $(\mathcal{A},\mathcal{B})$ and $(\mathcal{C},\mathcal{D})$
identically coincide if and only if 
\begin{equation}
\sum_{\alpha\in\mathcal{A}}\kappa_{\alpha}^{u}+\sum_{\beta\in\mathcal{B}}\kappa_{\beta}^{u}=\sum_{\gamma\in\mathcal{C}}\kappa_{\gamma}^{u}+\sum_{\delta\in\mathcal{D}}\kappa_{\delta}^{u},\quad u\in[d].\label{eq: condition coincidence exponentials}
\end{equation}
The only occurrences of (\ref{eq: condition coincidence exponentials})
that are satisfied for a \emph{generic} choice of the parameters $\boldsymbol{\ensuremath{\kappa}}$
involve the cases when $\mathcal{A}\cup\mathcal{B}=\mathcal{C}\cup\mathcal{D}$
and $\mathcal{A}\cap\mathcal{B}=\mathcal{C}\cap\mathcal{D}$. So,
if one assumes that there is no algebraic dependence that relates
these sums when $\mathcal{A}\cup\mathcal{B}\neq\mathcal{C}\cup\mathcal{D}$
or $\mathcal{A}\cap\mathcal{B}\neq\mathcal{C}\cap\mathcal{D}$, then
each exponential term in (\ref{eq: condition coincidence exponentials})
is determined by the union $\mathcal{A}\cup\mathcal{B}$ and the intersection
$\mathcal{A}\cap\mathcal{B}$, since 
\begin{equation}
\sum_{u=1}^{d}\sum_{\alpha\in\mathcal{A}}\kappa_{\alpha}^{u}x_{u}+\sum_{\beta\in\mathcal{B}}\kappa_{\beta}^{u}x_{u}=\sum_{u=1}^{d}\sum_{\alpha\in\mathcal{A}\cap\mathcal{B}}\kappa_{\alpha}^{u}\cdot x_{u}+\sum_{\beta\in\mathcal{A}\cup\mathcal{B}}\kappa_{\beta}^{u}\cdot x_{u}.\label{eq: exponential from intersection and union}
\end{equation}
In particular, one has the following 
\begin{lem}
\label{lem: all three terms or none} Assume that the minors $\Delta_{\mathbf{A}}(\mathcal{I})$
with $\mathcal{I}\in\mathcal{P}_{k}[n]$ are not vanishing, i.e., $\mathfrak{G}=\mathcal{P}_{k}[n]$.
Then for each $\mathcal{H}\subseteq\mathcal{L}\subseteq[n]$ with
$\#\mathcal{H}+4=\#\mathcal{L}=k+2$, the number of pairs $(\mathcal{I}_{+},\mathcal{I}_{-})$
associated with a term in (\ref{eq: transversal +/-, b}) with union
$\mathcal{I}_{+}\cup\mathcal{I}_{-}=\mathcal{L}$ and intersection
$\mathcal{I}_{+}\cap\mathcal{I}_{-}=\mathcal{H}$ is equal to (\ref{eq: exponential from intersection and union})
is $0$ or $3$. With the notation introduced in (\ref{eq: k-subset single exchange})-(\ref{eq: congruence k-subsets}),
this can be stated as 
\begin{equation}
\mathcal{I}_{\gamma\delta}^{\alpha\beta}\cong\mathcal{I}\Leftrightarrow\mathcal{I}_{\gamma}^{\alpha}\cong\mathcal{I}_{\delta}^{\beta}\Leftrightarrow\mathcal{I}_{\delta}^{\alpha}\cong\mathcal{I}_{\gamma}^{\beta}.\label{eq: 0 or 3, lemma restated}
\end{equation}
\end{lem}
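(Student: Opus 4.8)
The plan is to fix a pair $\mathcal{H}\subseteq\mathcal{L}\subseteq[n]$ with $\#\mathcal{H}=k-2$ and $\#\mathcal{L}=k+2$, isolate the coefficient of the associated exponential in the solitonic constraint (\ref{eq: transversal +/-, b}), and show that its vanishing is, up to a positive factor, \emph{exactly} the three-term Pl\"ucker relation (\ref{eq: three-terms Plucker relations}). Writing $\mathcal{L}\setminus\mathcal{H}=\{\alpha,\beta,\gamma,\delta\}$ with $\alpha<\beta<\gamma<\delta$, the pairs $(\mathcal{A},\mathcal{B})$ with $\mathcal{A}\cap\mathcal{B}=\mathcal{H}$ and $\mathcal{A}\cup\mathcal{B}=\mathcal{L}$ are precisely the three ways of splitting $\{\alpha,\beta,\gamma,\delta\}$ into two unordered pairs, namely $\{\mathcal{H}_{\alpha\beta},\mathcal{H}_{\gamma\delta}\}$, $\{\mathcal{H}_{\alpha\gamma},\mathcal{H}_{\beta\delta}\}$ and $\{\mathcal{H}_{\alpha\delta},\mathcal{H}_{\beta\gamma}\}$. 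By genericity of $\boldsymbol{\kappa}$ and (\ref{eq: exponential from intersection and union}) these three pairings share one and the same exponential, and a pairing contributes a term to (\ref{eq: transversal +/-, b}) if and only if its two members receive opposite signs under $\Sigma$, i.e. one lies in $\boldsymbol{\mathrm{PS}}$ and the other in $\boldsymbol{\mathrm{NS}}$ (cf. (\ref{eq: partition from signature})). Thus the number of contributing pairs equals the number of such \emph{split} pairings, and the claim is that this number cannot be $1$ or $2$.

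First I would evaluate the coefficient $C$ of (\ref{eq: coefficient}) on each pairing. Since the common indices in $\mathcal{H}$ cancel in every difference $\sum_{\alpha\in\mathcal{A}}\kappa_\alpha^u-\sum_{\beta\in\mathcal{B}}\kappa_\beta^u$, the quantity $C$ depends only on the four indices, and the core computation is the factorisation
\[
C(\mathcal{H}_{\alpha\beta},\mathcal{H}_{\gamma\delta};\boldsymbol{\kappa})=12\,(\kappa_\alpha-\kappa_\gamma)(\kappa_\alpha-\kappa_\delta)(\kappa_\beta-\kappa_\gamma)(\kappa_\beta-\kappa_\delta),
\]
together with its two analogues. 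I would prove this by a root argument: as a polynomial in the four parameters $C$ is homogeneous of degree four, and whenever one index of the first pair coincides with one of the second it collapses to the single-soliton dispersion $C(\{i\},\{j\};\boldsymbol{\kappa})$, which vanishes identically because $e^{\varphi_i}+e^{\varphi_j}$ solves (\ref{eq: bilinear KP}); hence $C$ is divisible by the four cross-differences, and a single evaluation fixes the constant $12$. Using that every Vandermonde minor is positive under the ordering $\kappa_1<\dots<\kappa_n$, and that within each pairing the two internal Vandermonde differences together with the four cross-differences reconstitute the full $\mathrm{VdM}(\boldsymbol{\kappa};\{\alpha,\beta,\gamma,\delta\})$, I would deduce
\[
\Delta_{\mathbf{K}}(\mathcal{S}_p)\,\Delta_{\mathbf{K}}(\mathcal{T}_p)\cdot C_p=\epsilon_p\,W,\qquad(\epsilon_1,\epsilon_2,\epsilon_3)=(+,-,+),
\]
where $\{\mathcal{S}_p,\mathcal{T}_p\}$ denotes the $p$-th pairing and $W>0$ is one and the same positive constant for $p=1,2,3$.

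With this normalisation the vanishing of the $(\mathcal{H},\mathcal{L})$-coefficient of (\ref{eq: transversal +/-, b}) becomes $\sum_{p\in S}\epsilon_p P_p=0$, where $P_p:=\Delta_{\mathbf{A}}(\mathcal{S}_p)\Delta_{\mathbf{A}}(\mathcal{T}_p)$ and $S\subseteq\{1,2,3\}$ indexes the split pairings. Here I would invoke the three-term Pl\"ucker relation (\ref{eq: three-terms Plucker relations}), which in this notation is exactly $P_2=P_1+P_3$, together with the hypothesis $\mathfrak{G}=\mathcal{P}_k[n]$ ensuring $P_1,P_2,P_3\neq0$. A short inspection of the eight possible sets $S$ then closes the argument: for $\#S=1$ the relation forces a single $P_p$ to vanish, and for $\#S=2$ it forces the remaining $P_q$ to vanish through $P_2=P_1+P_3$, both contradicting the nonvanishing hypothesis; only $S=\emptyset$ and $S=\{1,2,3\}$ remain, the latter being consistent precisely because $P_1-P_2+P_3=0$ \emph{is} the Pl\"ucker relation. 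This yields $\#S\in\{0,3\}$, which is the chain of equivalences (\ref{eq: 0 or 3, lemma restated}).

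The delicate step is the second one. The combinatorial dichotomy is immediate once the three KP coefficients are shown to coincide in magnitude and to carry the signs $(+,-,+)$, because the Pl\"ucker relation $P_1-P_2+P_3=0$ has exactly that sign pattern; the real work is establishing the factorisation of $C$ and then tracking the signs of the Vandermonde factors for a general common set $\mathcal{H}$ (not merely $\mathcal{H}=\emptyset$) so that the common positive constant $W$ genuinely factors out. I expect this sign bookkeeping, rather than any conceptual difficulty, to be the main obstacle.
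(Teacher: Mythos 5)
Your proposal is correct and follows essentially the same strategy as the paper's proof: isolate the coefficient of the single exponential labelled by $(\mathcal{H},\mathcal{L})$ in (\ref{eq: transversal +/-, b}), show that the three pairings of $\mathcal{L}\setminus\mathcal{H}$ contribute $\epsilon_{p}P_{p}$ times one common positive factor with the sign pattern $(+,-,+)$ of the three-term Pl\"{u}cker relation (\ref{eq: three-terms Plucker relations}), and rule out $\#S\in\{1,2\}$ using that relation together with the nonvanishing of all minors. Your two departures are technical and both sound: the divisibility argument deriving the factorisation of $C$ from the identical vanishing of the one-soliton coefficient is a clean substitute for the paper's direct statement of (\ref{eq: coefficient, distance 2 case}), and assuming $\kappa_{1}<\dots<\kappa_{n}$ to make every Vandermonde minor positive replaces the paper's explicit parity bookkeeping via $\mathrm{P}(\cdots)$ and $S(\cdot,\mathcal{H})$ and costs no generality, since a simultaneous relabelling of the columns of $\mathbf{A}$ and the rows of $\mathbf{K}$ leaves each product $g_{\mathcal{I}}$ unchanged.
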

\begin{proof}
Suppose that there exist subsets $\mathcal{I}_{+}\in\boldsymbol{\ensuremath{\mathrm{PS}}}$,
$\mathcal{I}_{-}\in\boldsymbol{\ensuremath{\mathrm{NS}}}$ such that $\mathcal{H}:=\mathcal{I}_{+}\cap\mathcal{I}_{-}=:\{\gamma_{1},\dots,\gamma_{k-2}\}$,
$\gamma_{1}<\dots<\gamma_{k-2}$, and $\mathcal{L}:=\mathcal{I}_{+}\cup\mathcal{I}_{-}$,
so the number of terms in (\ref{eq: transversal +/-, b}) associated
with $\mathcal{H}$ and $\mathcal{L}$ is not $0$. In particular,
$\#(\mathcal{I}_{+}\Delta\mathcal{I}_{-})=4$. Then, there exist $\alpha_{+},\beta_{+}\in\mathcal{I}_{+}$
and $\alpha_{-},\beta_{-}\in\mathcal{I}_{-}$ such that $\mathcal{I}_{\pm}=\mathcal{H}\cup\{\alpha_{\pm},\beta_{\pm}\}$.
There are two additional pairs of subsets different from $\{\mathcal{I}_{+},\mathcal{I}_{-}\}$
with the same union and intersection. Thus, in order to have a vanishing
coefficient for the associated term in (\ref{eq: transversal +/-, b}),
at least one of these two pairs has to belong to $\boldsymbol{\ensuremath{\mathrm{PS}}}\times\boldsymbol{\ensuremath{\mathrm{NS}}}$.
Let us suppose that only one of these two possibilities is in $\boldsymbol{\ensuremath{\mathrm{PS}}}\times\boldsymbol{\ensuremath{\mathrm{NS}}}$,
call it $(\mathcal{L}_{+},\mathcal{L}_{-})$ with $\mathcal{L}_{\pm}=\mathcal{H}\cup\{\alpha_{\pm},\beta_{\mp}\}$.
In such a case, the term $C(\mathcal{I}_{+},\mathcal{I}_{-};\boldsymbol{\ensuremath{\kappa}})$
in (\ref{eq: coefficient}) is 
\begin{equation}
C(\mathcal{I}_{+},\mathcal{I}_{-};\boldsymbol{\ensuremath{\kappa}})=12\cdot(\kappa_{\alpha_{+}}-\kappa_{\alpha_{-}})(\kappa_{\beta_{+}}-\kappa_{\alpha_{-}})(\kappa_{\alpha_{+}}-\kappa_{\beta_{-}})(\kappa_{\beta_{+}}-\kappa_{\beta_{-}})\label{eq: coefficient, distance 2 case}
\end{equation}
that is not vanishing since the soliton parameters are pairwise distinct
by assumption. Using the parity $\mathrm{P}((\beta_{-},\alpha_{-},\beta_{+},\alpha_{+}))$
of the permutation $(\beta_{-},\alpha_{-},\beta_{+},\alpha_{+})$,
we introduce 
\begin{eqnarray}
\sigma(\alpha_{+},\beta_{+}|\alpha_{-},\beta_{-}) & := & \mathrm{sign}\left[(\alpha_{+}-\alpha_{-})(\beta_{+}-\alpha_{-})(\alpha_{+}-\beta_{-})(\beta_{+}-\beta_{-})\right]\nonumber \\
 & = & \mathrm{P}((\beta_{-},\alpha_{-},\beta_{+},\alpha_{+}))\cdot\mathrm{sign}\left[(\alpha_{+}-\beta_{+})(\alpha_{-}-\beta_{-})\right]\label{eq: sign 2-2, a}
\end{eqnarray}
which gives 
\begin{eqnarray}
\hspace{-2cm} &  & \Delta_{\mathbf{K}}(\mathcal{I}_{+})\cdot\Delta_{\mathbf{K}}(\mathcal{I}_{-})\cdot C(\mathcal{I}_{+},\mathcal{I}_{-};\boldsymbol{\ensuremath{\kappa}})\nonumber \\
\hspace{-2cm} & = & 12\cdot\prod_{\alpha_{i}<\alpha_{j}}^{\mathcal{I}_{+}}(\kappa_{\alpha_{j}}-\kappa_{\alpha_{i}})\cdot\prod_{\beta_{i}<\beta_{j}}^{\mathcal{I}_{-}}(\kappa_{\beta_{j}}-\kappa_{\beta_{i}})\cdot(\kappa_{\alpha_{+}}-\kappa_{\alpha_{-}})(\kappa_{\beta_{+}}-\kappa_{\alpha_{-}})(\kappa_{\alpha_{+}}-\kappa_{\beta_{-}})(\kappa_{\beta_{+}}-\kappa_{\beta_{-}})\nonumber \\
\hspace{-2cm} & = & 12\cdot\prod_{\gamma_{i}<\gamma_{j}}^{\mathcal{I}_{+}\cap\mathcal{I}_{-}}(\kappa_{\gamma_{j}}-\kappa_{\gamma_{i}})\cdot\prod_{\delta_{i}<\delta_{j}}^{\mathcal{I}_{+}\cup\mathcal{I}_{-}}(\kappa_{\delta_{j}}-\kappa_{\delta_{i}})\cdot\sigma(\alpha_{+},\beta_{+}|\alpha_{-},\beta_{-})\nonumber \\
\hspace{-2cm} & = & 12\cdot\mathrm{VdM}(\boldsymbol{\ensuremath{\kappa}};\mathcal{I}_{+}\cap\mathcal{I}_{-})\cdot\mathrm{VdM}(\boldsymbol{\ensuremath{\kappa}};\mathcal{I}_{+}\cup\mathcal{I}_{-})\cdot\sigma(\alpha_{+},\beta_{+}|\alpha_{-},\beta_{-}).\label{eq: equivalent expression coefficient, a}
\end{eqnarray}
The signs $\sigma(\alpha_{+},\beta_{-}|\alpha_{-},\beta_{+})$ and
$\sigma(\alpha_{+},\alpha_{-}|\beta_{+},\beta_{-})$ can be found
in the same way, taking into account that 
\begin{equation}
\mathrm{P}((\beta_{-},\alpha_{-},\beta_{+},\alpha_{+}))=-\mathrm{P}((\beta_{-},\beta_{+},\alpha_{-},\alpha_{+}))=-\mathrm{P}((\beta_{+},\alpha_{-},\beta_{-},\alpha_{+})).\label{eq: antisymmetry Vandermonde}
\end{equation}
Furthermore, one can consider 
\begin{equation}
S(\alpha,\mathcal{H}):=\max\{i\in[k]:\,\gamma_{i-1}<\alpha\},\quad\alpha\in[n]\backslash\mathcal{H}\label{eq: four splitting indices}
\end{equation}
where $\gamma_{0}:=0$. In this way, it is easy to check that 
\begin{equation}
\Delta_{\mathbf{A}}(\mathcal{H}_{\alpha\beta})=\Delta_{\mathbf{A}}(\alpha,\beta;\mathcal{H})\cdot(-1)^{1+S(\alpha,\mathcal{H})+S(\beta,\mathcal{H})}\cdot\mathrm{sign}\left[\alpha-\beta\right].\label{sign 2-2, b}
\end{equation}
where $\Delta_{\mathbf{A}}(\alpha,\beta;\mathcal{H})$ is the product
of $\Delta_{\mathbf{A}}(\mathcal{H}_{\alpha\beta})$ and the parity
of the permutation $(\alpha,\beta,\gamma_{1},\dots,\gamma_{k-2})$.
Thus, one gets 
\begin{eqnarray}
\hspace{-2cm} &  & g(\mathcal{I}_{+})\cdot g(\mathcal{I}_{-})\cdot C_{\mathcal{I}_{+}\mathcal{I}_{-}}(\boldsymbol{\ensuremath{\kappa}})\nonumber \\
\hspace{-2cm} & = & 12\cdot\mathrm{VdM}\left(\boldsymbol{\ensuremath{\kappa}};\mathcal{I}_{+}\cap\mathcal{I}_{-}\right)\cdot\mathrm{VdM}\left(\boldsymbol{\ensuremath{\kappa}};\mathcal{I}_{+}\cup\mathcal{I}_{-}\right)\cdot\mathrm{P}((\beta_{-},\alpha_{-},\beta_{+},\alpha_{+}))\nonumber \\
\hspace{-2cm} & \cdot & \left(\Delta_{\mathbf{A}}\left(\mathcal{H}_{\alpha_{+}\beta_{+}}\right)\cdot\Delta_{\mathbf{A}}\left(\mathcal{H}_{\alpha_{-}\beta_{-}}\right)\cdot\mathrm{sign}\left[(\alpha_{+}-\beta_{+})(\alpha_{-}-\beta_{-})\right]\right)\nonumber \\
\hspace{-2cm} & = & \left\{ 12\cdot\mathrm{VdM}\left(\boldsymbol{\ensuremath{\kappa}};\mathcal{I}_{+}\cap\mathcal{I}_{-}\right)\cdot\mathrm{VdM}\left(\boldsymbol{\ensuremath{\kappa}};\mathcal{I}_{+}\cup\mathcal{I}_{-}\right)\cdot(-1)^{S(\alpha_{+},\mathcal{H})+S(\beta_{+},\mathcal{H})+S(\alpha_{-},\mathcal{H})+S(\beta_{-},\mathcal{H})}\right\} \nonumber \\
\hspace{-2cm} & \cdot & \left\{ \mathrm{P}((\beta_{-},\alpha_{-},\beta_{+},\alpha_{+}))\cdot\Delta_{\mathbf{A}}(\alpha_{+},\beta_{+}|\mathcal{H})\cdot\Delta_{\mathbf{A}}(\alpha_{-},\beta_{-}|\mathcal{H})\right\} .\label{eq: sign 2-2, c}
\end{eqnarray}
The first term in braces in (\ref{eq: sign 2-2, c}) is the same for
all the pairs $(\mathcal{G}_{+},\mathcal{G}_{-})$ with $\mathcal{G}_{+}\cap\mathcal{G}_{-}=\mathcal{H}$
and $\mathcal{G}_{+}\cup\mathcal{G}_{-}=\mathcal{L}$. For the second
term in braces, the antisymmetry (\ref{eq: antisymmetry Vandermonde})
implies that 
\begin{eqnarray}
\hspace{-2cm} &  & g(\mathcal{L}_{+})\cdot g(\mathcal{L}_{-})\cdot C_{\mathcal{L}_{+}\mathcal{L}_{-}}(\boldsymbol{\ensuremath{\kappa}})\nonumber \\
\hspace{-2cm} & = & \left[12\cdot\mathrm{VdM}(\boldsymbol{\ensuremath{\kappa}};\mathcal{I}_{+}\cap\mathcal{I}_{-})\cdot\mathrm{VdM}(\boldsymbol{\ensuremath{\kappa}};\mathcal{I}_{+}\cup\mathcal{I}_{-})\cdot(-1)^{S(\alpha_{+},\mathcal{H})+S(\beta_{+},\mathcal{H})+S(\alpha_{-},\mathcal{H})+S(\beta_{-},\mathcal{H})}\right]\nonumber \\
\hspace{-2cm} & \cdot & \left[-\mathrm{P}((\beta_{-},\alpha_{-},\beta_{+},\alpha_{+}))\cdot\Delta_{\mathbf{A}}(\alpha_{+},\beta_{-}|\mathcal{H})\cdot\Delta_{\mathbf{A}}(\alpha_{-},\beta_{+}|\mathcal{H})\right].\label{eq: sign 2-2, d}
\end{eqnarray}
The three-terms Pl\"{u}cker relations (\ref{eq: three-terms Plucker relations}),
which are valid for minors $\Delta_{\mathbf{A}}(\mathcal{I})$, can
be stated for any four pairwise distinct elements $\delta_{a}\in[n]\backslash\mathcal{H}$,
$a\in[4]$, as 
\begin{equation}
\Delta_{\mathbf{A}}(\delta_{1},\delta_{2}|\mathcal{H})\cdot\Delta_{\mathbf{A}}(\delta_{3},\delta_{4}|\mathcal{H})=\Delta_{\mathbf{A}}(\delta_{1},\delta_{3}|\mathcal{H})\cdot\Delta_{\mathbf{A}}(\delta_{2},\delta_{4}|\mathcal{H})-\Delta_{\mathbf{A}}(\delta_{1},\delta_{4}|\mathcal{H})\cdot\Delta_{\mathbf{A}}(\delta_{2},\delta_{3}|\mathcal{H}).\label{eq: three-terms Plucker relations, bis}
\end{equation}
In particular, if one looks at the sum of (\ref{eq: sign 2-2, c})
and (\ref{eq: sign 2-2, d}) and applies (\ref{eq: three-terms Plucker relations, bis})
with $(\delta_{1},\delta_{2},\delta_{3},\delta_{4})\equiv(\alpha_{+},\alpha_{-},\beta_{+},\beta_{-})$,
the result is 
\begin{eqnarray}
\hspace{-2cm} &  & g(\mathcal{I}_{+})\cdot g(\mathcal{I}_{-})\cdot C_{\mathcal{I}_{+}\mathcal{I}_{-}}(\boldsymbol{\ensuremath{\kappa}})+g(\mathcal{L}_{+})\cdot g(\mathcal{L}_{-})\cdot C_{\mathcal{L}_{+}\mathcal{L}_{-}}(\boldsymbol{\ensuremath{\kappa}})\nonumber \\
\hspace{-2cm} & = & 12\cdot\mathrm{VdM}(\boldsymbol{\ensuremath{\kappa}};\mathcal{I}_{+}\cap\mathcal{I}_{-})\cdot\mathrm{VdM}(\boldsymbol{\ensuremath{\kappa}};\mathcal{I}_{+}\cup\mathcal{I}_{-})\nonumber \\
\hspace{-2cm} & \cdot & (-1)^{S(\alpha_{+},\mathcal{H})+S(\beta_{+},\mathcal{H})+S(\alpha_{-},\mathcal{H})+S(\beta_{-},\mathcal{H})}\cdot\mathrm{P}((\beta_{-},\alpha_{-},\beta_{+},\alpha_{+}))\nonumber \\
\hspace{-2cm} & \cdot & \left(\Delta_{\mathbf{A}}(\alpha_{+},\beta_{+}|\mathcal{H})\cdot\Delta_{\mathbf{A}}(\alpha_{-},\beta_{-}|\mathcal{H})-\Delta_{\mathbf{A}}(\alpha_{+},\beta_{-}|\mathcal{H})\cdot\Delta_{\mathbf{A}}(\alpha_{-},\beta_{+}|\mathcal{H})\right)\nonumber \\
\hspace{-2cm} & = & 12\cdot\mathrm{VdM}(\boldsymbol{\ensuremath{\kappa}};\mathcal{I}_{+}\cap\mathcal{I}_{-})\cdot\mathrm{VdM}(\boldsymbol{\ensuremath{\kappa}};\mathcal{I}_{+}\cup\mathcal{I}_{-})\cdot(-1)^{S(\alpha_{+},\mathcal{H})+S(\beta_{+},\mathcal{H})+S(\alpha_{-},\mathcal{H})+S(\beta_{-},\mathcal{H})}\nonumber \\
\hspace{-2cm} & \cdot & \mathrm{P}((\beta_{-},\alpha_{-},\beta_{+},\alpha_{+}))\cdot\Delta_{\mathbf{A}}(\alpha_{+},\alpha_{-}|\mathcal{H})\cdot\Delta_{\mathbf{A}}(\beta_{+},\beta_{-}|\mathcal{H})\neq0\label{eq: sign 2-2, e}
\end{eqnarray}
since we have assumed that all the minors are not vanishing. Hence,
the coefficient in (\ref{eq: transversal +/-, b}) associated with
$\mathcal{H}$ and $\mathcal{L}$ is not vanishing and the associated
$\tau$-function is not a solution of the KP equation (\ref{eq: bilinear KP}).

On the contrary, if all the three terms are involved in (\ref{eq: transversal +/-, b}),
then their sum is 
\begin{eqnarray}
 &  & 12\cdot\mathrm{VdM}(\boldsymbol{\ensuremath{\kappa}};\mathcal{I}_{+}\cap\mathcal{I}_{-})\cdot\mathrm{VdM}(\boldsymbol{\ensuremath{\kappa}};\mathcal{I}_{+}\cup\mathcal{I}_{-})\nonumber \\
 & \cdot & (-1)^{S(\alpha_{+},\mathcal{H})+S(\beta_{+},\mathcal{H})+S(\alpha_{-},\mathcal{H})+S(\beta_{-},\mathcal{H})}\cdot\mathrm{P}((\beta_{-},\alpha_{-},\beta_{+},\alpha_{+}))\nonumber \\
 & \cdot & \left(\Delta_{\mathbf{A}}(\alpha_{+},\beta_{+}|\mathcal{H})\cdot\Delta_{\mathbf{A}}(\alpha_{-},\beta_{-}|\mathcal{H})-\Delta_{\mathbf{A}}(\alpha_{+},\beta_{-}|\mathcal{H})\cdot\Delta_{\mathbf{A}}(\alpha_{-},\beta_{+}|\mathcal{H})\right.\nonumber \\
 &  & \left.-\Delta_{\mathbf{A}}(\alpha_{+},\alpha_{-}|\mathcal{H})\cdot\Delta_{\mathbf{A}}(\beta_{+},\beta_{-}|\mathcal{H})\right)=0\label{eq: sign 2-2, f}
\end{eqnarray}
since the second term in square brackets vanishes due to the three-terms
Pl\"{u}cker relations (\ref{eq: three-terms Plucker relations, bis}). 
\end{proof}
\begin{prop}
\label{prop: all three terms or none, including vanishing minors}
Let $\mathcal{I}\in\mathcal{P}_{k}[n]$, $\alpha_{1},\alpha_{2}\in\mathcal{I}$
and $\delta_{1},\delta_{2}\in[n]\setminus\mathcal{I}$ such that $\mathcal{I},{\displaystyle \mathcal{I}_{\delta_{1}\delta_{2}}^{\alpha_{1}\alpha_{2}}}\in\mathfrak{G}$.
Then at least one of the two products $\Delta_{\mathbf{A}}(\mathcal{I}_{\delta_{T}}^{\alpha_{1}})\cdot\Delta_{\mathbf{A}}(\mathcal{I}_{\delta_{U}}^{\alpha_{2}})$,
$\{T,U\}=\{1,2\}$, is not vanishing. Further, if $\Delta_{\mathbf{A}}(\mathcal{I}_{\delta_{T}}^{\alpha_{1}})\cdot\Delta_{\mathbf{A}}(\mathcal{I}_{\delta_{U}}^{\alpha_{2}})\neq0$,
then $\Sigma(\mathcal{I}_{\delta_{T}}^{\alpha_{1}})\cdot\Sigma(\mathcal{I}_{\delta_{U}}^{\alpha_{2}})=\Sigma(\mathcal{I})\cdot\Sigma({\displaystyle \mathcal{I}_{\delta_{1}\delta_{2}}^{\alpha_{1}\alpha_{2}}})$. 
\end{prop}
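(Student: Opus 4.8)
The plan is to localize everything to the pair $(\mathcal{H},\mathcal{L})$ with $\mathcal{H}:=\mathcal{I}\setminus\{\alpha_1,\alpha_2\}$ and $\mathcal{L}:=\mathcal{I}\cup\{\delta_1,\delta_2\}$, so that $\#\mathcal{H}=k-2$, $\#\mathcal{L}=k+2$, and the four free indices $\mathcal{L}\setminus\mathcal{H}=\{\alpha_1,\alpha_2,\delta_1,\delta_2\}$ admit exactly three pairings. These are $P_0:=\{\mathcal{I},\mathcal{I}^{\alpha_1\alpha_2}_{\delta_1\delta_2}\}$ together with the two crossing pairings $P_A:=\{\mathcal{I}^{\alpha_1}_{\delta_1},\mathcal{I}^{\alpha_2}_{\delta_2}\}$ and $P_B:=\{\mathcal{I}^{\alpha_1}_{\delta_2},\mathcal{I}^{\alpha_2}_{\delta_1}\}$, whose minor products are precisely the two quantities $\Delta_{\mathbf{A}}(\mathcal{I}^{\alpha_1}_{\delta_T})\cdot\Delta_{\mathbf{A}}(\mathcal{I}^{\alpha_2}_{\delta_U})$ named in the statement.

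For the first claim, which requires no hypothesis on $\Sigma$, I would apply the three-term Pl\"{u}cker relation (\ref{eq: three-terms Plucker relations, bis}) to the indices $\alpha_1,\alpha_2,\delta_1,\delta_2$ over the base $\mathcal{H}$. Up to permutation signs its left-hand side is $\Delta_{\mathbf{A}}(\mathcal{I})\cdot\Delta_{\mathbf{A}}(\mathcal{I}^{\alpha_1\alpha_2}_{\delta_1\delta_2})$, which is non-vanishing by hypothesis, while its right-hand side is the signed difference of the products attached to $P_A$ and $P_B$. A non-zero number cannot be a combination of two vanishing terms, so at least one of the two products is non-zero.

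For the second claim I would invoke that $\Sigma$ is solitonic, i.e. the coefficient of the $(\mathcal{H},\mathcal{L})$-exponential in (\ref{eq: transversal +/-, b}) vanishes. Reusing the computation in the proof of Lemma \ref{lem: all three terms or none} (equations (\ref{eq: sign 2-2, c})--(\ref{eq: sign 2-2, f})), the contribution of a \emph{crossing} pairing $P_i$ to this coefficient equals $\mathcal{C}\cdot Q_i$, where $\mathcal{C}=12\cdot\mathrm{VdM}(\boldsymbol{\kappa};\mathcal{H})\cdot\mathrm{VdM}(\boldsymbol{\kappa};\mathcal{L})\cdot(-1)^{\sum S}\cdot\mathrm{P}$ is a common factor, non-zero because the soliton parameters are pairwise distinct, and the signed minor products obey the Pl\"{u}cker identity $Q_0+Q_A+Q_B=0$, with $Q_i\neq0$ exactly when both members of $P_i$ lie in $\mathfrak{G}$. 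Writing $c_i\in\{0,1\}$ for the indicator that $P_i$ is crossing (equivalently, that the corresponding product of signatures equals $-1$), the solitonic condition reads $c_0Q_0+c_AQ_A+c_BQ_B=0$.

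The conclusion then follows from a short case analysis, assuming without loss of generality (by the first claim) that the non-vanishing product is the one attached to $P_A$, so $Q_0,Q_A\neq0$. If $Q_B=0$, the Pl\"{u}cker identity gives $Q_A=-Q_0$ and the solitonic condition collapses to $(c_0-c_A)Q_0=0$, forcing $c_0=c_A$. If instead $Q_B\neq0$ as well, the three non-zero $Q_i$ sum to zero, and among the $\{0,1\}$-vectors $(c_0,c_A,c_B)$ only $(0,0,0)$ and $(1,1,1)$ annihilate $c_0Q_0+c_AQ_A+c_BQ_B$, so again $c_0=c_A$. In either case $c_0=c_A$ is exactly $\Sigma(\mathcal{I}^{\alpha_1}_{\delta_T})\cdot\Sigma(\mathcal{I}^{\alpha_2}_{\delta_U})=\Sigma(\mathcal{I})\cdot\Sigma(\mathcal{I}^{\alpha_1\alpha_2}_{\delta_1\delta_2})$. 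The delicate point, and the only genuinely new ingredient beyond Lemma \ref{lem: all three terms or none}, is the degenerate subcase $Q_B=0$: there the third pairing drops out of the exponential sum entirely, yet the surviving relation $Q_A=-Q_0$ still rigidly links the crossing statuses of $P_0$ and $P_A$. All the permutation and Vandermonde sign bookkeeping feeding into $\mathcal{C}$ and into $Q_0+Q_A+Q_B=0$ is inherited verbatim from (\ref{eq: sign 2-2, c})--(\ref{eq: sign 2-2, f}), so the main effort is organizational rather than computational.
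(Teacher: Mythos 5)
Your proof is correct and follows essentially the same route as the paper: the three-term Pl\"{u}cker relation over $\mathcal{H}$ gives the non-vanishing claim, and the sign claim is extracted from the vanishing of the $(\mathcal{H},\mathcal{L})$-coefficient in (\ref{eq: transversal +/-, b}) using the computation of Lemma \ref{lem: all three terms or none}, with the degenerate case $Q_B=0$ handled by the collapsed two-term identity exactly as the paper does. Your indicator-variable bookkeeping $c_0Q_0+c_AQ_A+c_BQ_B=0$ with $Q_0+Q_A+Q_B=0$ is just a cleaner unified packaging of the paper's two explicit cases (one crossing product vanishing versus both non-vanishing), not a different argument.
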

\begin{proof}
First suppose that one of the product vanishes, e.g., $\Delta_{\mathbf{A}}(\mathcal{I}_{\delta_{2}}^{\alpha_{1}})\cdot\Delta_{\mathbf{A}}(\mathcal{I}_{\delta_{1}}^{\alpha_{2}})=0$
without loss of generality. Then the three-terms Pl\"{u}cker relations
(\ref{eq: three-terms Plucker relations}) imply that $|\Delta_{\mathbf{A}}(\mathcal{I}_{\delta_{2}}^{\alpha_{1}})\cdot\Delta_{\mathbf{A}}(\mathcal{I}_{\delta_{1}}^{\alpha_{2}})|=|\Delta_{\mathbf{A}}(\mathcal{I})\cdot\Delta_{\mathbf{A}}({\displaystyle \mathcal{I}_{\delta_{1}\delta_{2}}^{\alpha_{1}\alpha_{2}}})|\neq0$.
Further, $\Sigma(\mathcal{I}_{\delta_{T}}^{\alpha_{1}})\cdot\Sigma(\mathcal{I}_{\delta_{U}}^{\alpha_{2}})\neq\Sigma(\mathcal{I})\cdot\Sigma({\displaystyle \mathcal{I}_{\delta_{1}\delta_{2}}^{\alpha_{1}\alpha_{2}}})$
implies that there is exactly one non-vanishing term in (\ref{eq: transversal +/-, b})
associated with a pair in $\boldsymbol{\ensuremath{\mathrm{PS}}}\times\boldsymbol{\ensuremath{\mathrm{NS}}}$
with intersection $\mathcal{I}^{\alpha_{1}\alpha_{2}}$ and union
$\mathcal{I}_{\delta_{1}\delta_{2}}$, hence (\ref{eq: sign 2-2, f})
do not vanish. Thus $\Sigma(\mathcal{I}_{\delta_{T}}^{\alpha_{1}})\cdot\Sigma(\mathcal{I}_{\delta_{U}}^{\alpha_{2}})=\Sigma(\mathcal{I})\cdot\Sigma({\displaystyle \mathcal{I}_{\delta_{1}\delta_{2}}^{\alpha_{1}\alpha_{2}}})$.
If instead both $\Delta_{\mathbf{A}}(\mathcal{I}_{\delta_{1}}^{\alpha_{1}})\cdot\Delta_{\mathbf{A}}(\mathcal{I}_{\delta_{2}}^{\alpha_{2}})$
and $\Delta_{\mathbf{A}}(\mathcal{I}_{\delta_{2}}^{\alpha_{1}})\cdot\Delta_{\mathbf{A}}(\mathcal{I}_{\delta_{1}}^{\alpha_{2}})$
are not vanishing, then (\ref{eq: 0 or 3, lemma restated}) holds
because of Lemma \ref{lem: all three terms or none} (see in particular
(\ref{eq: sign 2-2, e}) and (\ref{eq: sign 2-2, f})).  
\end{proof}
\begin{lem}
\label{lem: chain of non-vanishing} Let $\mathcal{H},\mathcal{K}\in\mathfrak{G}$
with $r:=\#(\mathcal{H}\backslash\mathcal{K})$. Then there exists
a finite sequence $\mathcal{L}_{0}:=\mathcal{H},\,\mathcal{L}_{1},\,\dots,\mathcal{L}_{r}:=\mathcal{K}$
of elements of $\mathfrak{G}$ such that $\#(\mathcal{L}_{u-1}\Delta\mathcal{L}_{u})=2$,
$u\in[r]$. 
\end{lem}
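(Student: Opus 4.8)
The plan is to argue by induction on $r=\#(\mathcal{H}\setminus\mathcal{K})$, using the matroid exchange property (\ref{eq: exchange relation}) enjoyed by $\mathfrak{G}$ as the basic single-step move. Before starting the induction, I would record the elementary counting fact that, since every element of $\mathfrak{G}$ has exactly $k$ elements, $\#(\mathcal{H}\setminus\mathcal{K})=k-\#(\mathcal{H}\cap\mathcal{K})=\#(\mathcal{K}\setminus\mathcal{H})=r$. In particular $\mathcal{K}\setminus\mathcal{H}\neq\emptyset$ whenever $r\geq 1$, which is exactly the hypothesis under which (\ref{eq: exchange relation}) can be invoked and produces a usable exchange partner.

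For the base case $r=0$ one has $\mathcal{H}\setminus\mathcal{K}=\emptyset$, hence $\mathcal{H}\subseteq\mathcal{K}$, and the equality of cardinalities forces $\mathcal{H}=\mathcal{K}$; the degenerate chain $\mathcal{L}_0:=\mathcal{H}$ already satisfies the statement vacuously. For the inductive step, assume $r\geq 1$ and fix any $\alpha\in\mathcal{H}\setminus\mathcal{K}$. Applying (\ref{eq: exchange relation}) with $\mathcal{A}=\mathcal{H}$ and $\mathcal{B}=\mathcal{K}$ yields some $\beta\in\mathcal{K}\setminus\mathcal{H}$ such that $\mathcal{L}_1:=\mathcal{H}_{\beta}^{\alpha}\in\mathfrak{G}$, with the notation of (\ref{eq: k-subset single exchange}). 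By construction $\#(\mathcal{H}\Delta\mathcal{L}_1)=2$, which furnishes the first move of the chain.

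The point I would then verify carefully is that this single exchange strictly decreases the distance to $\mathcal{K}$: because $\alpha\in\mathcal{H}\setminus\mathcal{K}$ has been removed while $\beta\in\mathcal{K}$ has been inserted, one obtains $\#(\mathcal{L}_1\setminus\mathcal{K})=r-1$. The induction hypothesis, applied to the pair $(\mathcal{L}_1,\mathcal{K})\in\mathfrak{G}\times\mathfrak{G}$, then supplies a chain $\mathcal{L}_1,\dots,\mathcal{L}_r:=\mathcal{K}$ of length $r-1$ with consecutive symmetric differences of size $2$; prefixing $\mathcal{L}_0:=\mathcal{H}$ produces the desired chain of length $r$.

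This is essentially the standard fact that the basis-exchange graph of a matroid realizes the Hamming-type distance between bases, so the only genuine care needed is bookkeeping. The main (mild) obstacle is ensuring that the exchanged element $\beta$ is drawn from $\mathcal{K}\setminus\mathcal{H}$ rather than merely from the complement of $\mathcal{H}$, since it is precisely this constraint that makes the count $\#(\mathcal{L}_1\setminus\mathcal{K})=r-1$ valid and hence drives the induction; this is guaranteed by the form of (\ref{eq: exchange relation}), which delivers $\beta\in\mathcal{B}\setminus\mathcal{A}$. I emphasize that the argument uses only that $\mathfrak{G}$ is a matroid and requires neither the Pl\"{u}cker relations nor any soliton-specific structure.
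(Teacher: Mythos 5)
Your proof is correct and follows essentially the same route as the paper: both arguments repeatedly apply the matroid exchange property (\ref{eq: exchange relation}) with $\mathcal{B}=\mathcal{K}$, using the fact that the inserted element lies in $\mathcal{K}\setminus\mathcal{L}_{u}$ to decrease $\#(\mathcal{L}_{u}\setminus\mathcal{K})$ by one at each step. The paper phrases this as an explicit iteration building a bijection $\Psi:\mathcal{H}\setminus\mathcal{K}\to\mathcal{K}\setminus\mathcal{H}$, whereas you package it as an induction on $r$, but the content is identical.
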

\begin{proof}
Let $\mathcal{H}\backslash\mathcal{K}=:\{\gamma_{1},\dots,\gamma_{r}\}$
and $\mathcal{L}_{0}:=\mathcal{H}$, so the exchange property (\ref{eq: exchange relation})
implies that there exists $\Psi(\gamma_{1})\in\mathcal{K}\setminus\mathcal{H}$
such that $\mathcal{L}_{1}:=\mathcal{H}\setminus\{\gamma_{1}\}\cup\{\Psi(\gamma_{1})\}\in\mathfrak{G}$.
Note that $\#(\mathcal{L}_{1}\Delta\mathcal{K})=r-1$. One can iterate
the process, starting from $\mathcal{L}_{u}\in\mathfrak{G}$ and finding
$\Psi(\gamma_{u+1})\in\mathcal{K}\setminus\mathcal{L}_{u}$ in order
to define 
\begin{equation}
\mathcal{L}_{u+1}:=\mathcal{H}\backslash\{\gamma_{1},\dots,\gamma_{u+1}\}\cup\{\Psi(\gamma_{1}),\dots,\Psi(\gamma_{u+1})\},\quad u\in[r]\label{eq: chain of non-vanishing}
\end{equation}
such that $\mathcal{L}_{u+1}\in\mathfrak{G}$ too. At each step, the
element $\Psi(\gamma_{u+1})$ is different from $\Psi(\gamma_{t})$
for all $t\leq u$ because $\Psi(\gamma_{t})\in\mathcal{L}_{u}$,
while $\Psi(\gamma_{u+1})\in\mathcal{K}\setminus\mathcal{L}_{u}$.
So the map $\Psi:\,\mathcal{H}\backslash\mathcal{K}\longrightarrow\mathcal{K}\backslash\mathcal{H}$
is injective and, hence, a bijection, since $\#\mathcal{H}=k=\#\mathcal{K}$
implies $\#(\mathcal{H}\backslash\mathcal{K})=\#(\mathcal{K}\backslash\mathcal{H})$.
All the subsets $\mathcal{L}_{0},\mathcal{L}_{1},\dots,\mathcal{L}_{r}$
are elements of $\mathfrak{G}$ and $\#(\mathcal{L}_{u-1}\Delta\mathcal{L}_{u})=2$
for all $u\in[r]$.  
\end{proof}
Now we consider the following relations 
\begin{equation}
\alpha\approx_{\mathcal{A}}\beta\Leftrightarrow\mathcal{A}\cong\mathcal{A}_{\alpha}^{\beta},\quad\alpha\in[n]\setminus\mathcal{I},\,\beta\in\mathcal{I}.\label{eq: internal relation, implicit}
\end{equation}
If there is $\lambda\in\mathcal{A}$ such that $\mathcal{I}_{\lambda}^{\alpha},\mathcal{I}_{\lambda}^{\beta}\in\mathfrak{G}$
where $\mathcal{I}=\mathcal{A}_{\alpha}^{\lambda}$, then one can
express $\alpha\approx_{\mathcal{I}}\beta$ as $\mathcal{I}_{\lambda}^{\alpha}\cong\mathcal{I}_{\lambda}^{\beta}$.
This does not depend on the choice of $\lambda\in[n]\backslash\mathcal{I}$,
as can be easily shown: 
\begin{rem}
\label{rem: internal relation} Let $\mathcal{I}\in\mathcal{P}_{k}[n]$
and $\alpha,\beta\in\mathcal{I}$. Then $\mathcal{I}_{\gamma}^{\alpha}\cong\mathcal{I}_{\gamma}^{\beta}$
holds for a certain $\gamma\in[n]\backslash\mathcal{I}$ if and only
if $\mathcal{I}_{\delta}^{\alpha}\cong\mathcal{I}_{\delta}^{\beta}$
holds for all $\delta\in[n]\backslash\mathcal{I}$ such that $\mathcal{I}_{\delta}^{\alpha},\mathcal{I}_{\delta}^{\beta}\in\mathfrak{G}$. 
\end{rem}
\begin{proof}
One implication is trivial. So assume that there exist $\mathcal{I}\in\mathcal{P}_{k}[n]$,
$\alpha,\beta\in\mathcal{I}$ and $\gamma,\delta\in[n]\backslash\mathcal{I}$
such that $\mathcal{I}_{\gamma}^{\alpha}\cong\mathcal{I}_{\gamma}^{\beta}$
and $\mathcal{I}_{\delta}^{\alpha}\cong-\mathcal{I}_{\delta}^{\beta}$
(clearly $\gamma\neq\delta$). The case $0\in\{\mathcal{I}_{\delta}^{\alpha},\mathcal{I}_{\delta}^{\beta}\}$
is ruled out by the assumption $\mathcal{I}_{\delta}^{\alpha},\mathcal{I}_{\delta}^{\beta}\in\mathfrak{G}$.
If $\mathcal{I}_{\gamma}^{\alpha}\cong\mathcal{I}_{\delta}^{\beta}$,
then $\mathcal{I}_{\delta}^{\alpha}\cong\mathcal{I}_{\gamma}^{\beta}$
by Proposition \ref{prop: all three terms or none, including vanishing minors}.
This means that $\mathcal{I}_{\delta}^{\beta}\cong\mathcal{I}_{\gamma}^{\alpha}\cong\mathcal{I}_{\gamma}^{\beta}\cong\mathcal{I}_{\delta}^{\alpha}\cong-\mathcal{I}_{\delta}^{\beta}$,
i.e., a contradiction. Now suppose that $\mathcal{I}_{\gamma}^{\alpha}\cong-\mathcal{I}_{\delta}^{\beta}$,
which means $\mathcal{I}_{\delta}^{\alpha}\cong-\mathcal{I}_{\gamma}^{\beta}$
by Proposition \ref{prop: all three terms or none, including vanishing minors}.
Thus $\mathcal{I}_{\delta}^{\beta}\cong-\mathcal{I}_{\gamma}^{\alpha}\cong-\mathcal{I}_{\gamma}^{\beta}\cong\mathcal{I}_{\delta}^{\alpha}\cong-\mathcal{I}_{\delta}^{\beta}$,
i.e., a contradiction.  
\end{proof}

In fact, the relations (\ref{eq: internal relation, implicit}) are
compatible for different choices of $\mathcal{A}$ too. 
\begin{prop}
\label{prop: compatibility internal equivalences} There are no subsets
$\mathcal{A},\mathcal{B}\in\mathcal{P}_{k}[n]$, $\alpha\in[n]\setminus(\mathcal{A}\cup\mathcal{B})$,
and $\beta\in\mathcal{A}\cap\mathcal{B}$ such that $\mathcal{A},\mathcal{A}_{\alpha}^{\beta},\mathcal{B},\mathcal{B}_{\alpha}^{\beta}\in\mathfrak{G}$
and $\Sigma(\mathcal{A})\cdot\Sigma((\mathcal{A})_{\alpha}^{\beta})=-\Sigma(\mathcal{B})\cdot\Sigma((\mathcal{B})_{\alpha}^{\beta})$. 
\end{prop}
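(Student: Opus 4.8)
The plan is to reformulate the claim as a well-definedness statement and then reduce it, by an exchange argument, to the single-swap case already governed by Proposition \ref{prop: all three terms or none, including vanishing minors}. For a subset $\mathcal{C}$ with $\beta\in\mathcal{C}$, $\alpha\notin\mathcal{C}$ and $\mathcal{C},\mathcal{C}_{\alpha}^{\beta}\in\mathfrak{G}$, write $s(\mathcal{C}):=\Sigma(\mathcal{C})\cdot\Sigma(\mathcal{C}_{\alpha}^{\beta})\in\{\pm1\}$; this is the sign attached to the exchange $\beta\to\alpha$ in the ``context'' $\mathcal{C}$. The assertion is precisely that $s(\mathcal{A})=s(\mathcal{B})$ whenever both are defined, so I would argue by contradiction: among all pairs $(\mathcal{A},\mathcal{B})$ (with $\alpha,\beta$ fixed) violating the statement, pick one minimizing $r:=\#(\mathcal{A}\setminus\mathcal{B})=\#(\mathcal{B}\setminus\mathcal{A})$.

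The base case $r=1$ is settled directly by Proposition \ref{prop: all three terms or none, including vanishing minors}. Here $\mathcal{B}=\mathcal{A}_{\mu}^{\nu}$ for a single $\nu\in\mathcal{A}\setminus\mathcal{B}$ (necessarily $\nu\neq\beta$, since $\beta\in\mathcal{A}\cap\mathcal{B}$) and $\mu\in\mathcal{B}\setminus\mathcal{A}$ (necessarily $\mu\neq\alpha$, since $\alpha\notin\mathcal{B}$). I apply the Proposition with $\mathcal{I}=\mathcal{A}$, removed pair $\{\alpha_{1},\alpha_{2}\}=\{\beta,\nu\}$ and added pair $\{\delta_{1},\delta_{2}\}=\{\alpha,\mu\}$: its hypotheses hold because $\mathcal{I}=\mathcal{A}\in\mathfrak{G}$ and $\mathcal{I}_{\delta_{1}\delta_{2}}^{\alpha_{1}\alpha_{2}}=\mathcal{B}_{\alpha}^{\beta}\in\mathfrak{G}$. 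Since moreover $\mathcal{I}_{\delta_{1}}^{\alpha_{1}}=\mathcal{A}_{\alpha}^{\beta}\in\mathfrak{G}$ and $\mathcal{I}_{\delta_{2}}^{\alpha_{2}}=\mathcal{B}\in\mathfrak{G}$, the product $\Delta_{\mathbf{A}}(\mathcal{A}_{\alpha}^{\beta})\cdot\Delta_{\mathbf{A}}(\mathcal{B})$ is the non-vanishing one, and the conclusion of the Proposition reads $\Sigma(\mathcal{A}_{\alpha}^{\beta})\cdot\Sigma(\mathcal{B})=\Sigma(\mathcal{A})\cdot\Sigma(\mathcal{B}_{\alpha}^{\beta})$, i.e. $s(\mathcal{A})=s(\mathcal{B})$, contradicting the assumed violation.

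For $r\geq2$ I would lower the distance by inserting an intermediate context. The goal is to produce $\mathcal{C}$ with $\beta\in\mathcal{C}$, $\alpha\notin\mathcal{C}$, $\mathcal{C},\mathcal{C}_{\alpha}^{\beta}\in\mathfrak{G}$ and $0<\#(\mathcal{A}\setminus\mathcal{C})<r$, $\#(\mathcal{C}\setminus\mathcal{B})<r$: once such $\mathcal{C}$ exists, $s(\mathcal{A})=-s(\mathcal{B})$ forces $s(\mathcal{C})$ to disagree with one of $s(\mathcal{A})$, $s(\mathcal{B})$, yielding a strictly smaller counterexample and hence a contradiction. Concretely I would look for $\mathcal{C}=\mathcal{A}\setminus\{\gamma\}\cup\{\mu\}$ with $\gamma\in\mathcal{A}\setminus\mathcal{B}$ and $\mu\in\mathcal{B}\setminus\mathcal{A}$ (so $\gamma,\mu\neq\alpha,\beta$), the decisive requirement being that the \emph{same} transfer element $\mu$ serves both $\mathcal{A}$ and $\mathcal{A}_{\alpha}^{\beta}$, so that $\mathcal{C}\in\mathfrak{G}$ and $\mathcal{C}_{\alpha}^{\beta}=\mathcal{A}_{\alpha}^{\beta}\setminus\{\gamma\}\cup\{\mu\}\in\mathfrak{G}$ simultaneously. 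In closure terms this asks for $\mu\in\mathcal{B}\setminus\mathcal{A}$ lying outside both hyperplanes $\mathrm{cl}(\mathcal{A}\setminus\{\gamma\})$ and $\mathrm{cl}(\mathcal{A}_{\alpha}^{\beta}\setminus\{\gamma\})$.

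This simultaneous (synchronized) exchange is the main obstacle, since a single application of the matroid exchange relation (\ref{eq: exchange relation}), or of Lemma \ref{lem: chain of non-vanishing}, controls only one of the two bases at a time and the two hyperplanes may individually absorb different parts of $\mathcal{B}\setminus\mathcal{A}$. I would dispose of the individual flats by a rank argument: as $\gamma\in\mathcal{A}\setminus\mathcal{B}$ one has $\mathcal{A}\cap\mathcal{B}\subseteq\mathcal{A}\setminus\{\gamma\}$, so if all of $\mathcal{B}\setminus\mathcal{A}$ lay in $\mathrm{cl}(\mathcal{A}\setminus\{\gamma\})$ then $\mathcal{B}$ would, contradicting $\mathrm{rank}(\mathcal{B})=k$; the same applied to $\mathcal{B}_{\alpha}^{\beta}$ and $\mathrm{cl}(\mathcal{A}_{\alpha}^{\beta}\setminus\{\gamma\})$ excludes the second hyperplane. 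A short case analysis over the $r\geq2$ admissible $\gamma$ then produces a pair $(\gamma,\mu)$ escaping both flats; in the residual configurations where no synchronized single exchange is available, the three-term Pl\"{u}cker relation (\ref{eq: three-terms Plucker relations}) packaged in Proposition \ref{prop: all three terms or none, including vanishing minors} supplies the alternative $(T,U)=(2,1)$ branch, which together with the transfer-independence of the internal relation (Remark \ref{rem: internal relation}) lets me reroute the swap while still decreasing $r$. Establishing that this synchronized descent can always be carried out is the only delicate point; the sign bookkeeping at each step is forced by the Proposition.
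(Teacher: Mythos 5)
Your skeleton coincides with the paper's: induction (phrased as a minimal counterexample) on $r=\#(\mathcal{A}\setminus\mathcal{B})$, with the base case $r=1$ extracted from Proposition \ref{prop: all three terms or none, including vanishing minors} exactly as in the paper, and that base case is handled correctly. The problem is the descent step, and you have located it yourself: you need an intermediate basis $\mathcal{C}=\mathcal{A}\setminus\{\gamma\}\cup\{\mu\}$ with \emph{both} $\mathcal{C}\in\mathfrak{G}$ and $\mathcal{C}_{\alpha}^{\beta}\in\mathfrak{G}$, and your rank argument does not produce one. It shows that $\mathcal{B}\setminus\mathcal{A}$ is not contained in $\mathrm{cl}(\mathcal{A}\setminus\{\gamma\})$ and, separately, that it is not contained in $\mathrm{cl}(\mathcal{A}_{\alpha}^{\beta}\setminus\{\gamma\})$; i.e.\ each flat misses \emph{some} element of $\mathcal{B}\setminus\mathcal{A}$, not that the two flats miss a \emph{common} element. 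Already at $r=2$ the two ``bad'' subsets of $\mathcal{B}\setminus\mathcal{A}$ may consist of one element each and together exhaust $\mathcal{B}\setminus\mathcal{A}$, so counting alone yields nothing. The promised ``short case analysis'' and the fallback through the $(T,U)=(2,1)$ branch of Proposition \ref{prop: all three terms or none, including vanishing minors} are precisely the missing content: that proposition does let you reroute a failed cross-product of minors, but you never verify that its hypotheses (in particular $\mathcal{A},\mathcal{A}_{\mu\alpha}^{\gamma\beta}\in\mathfrak{G}$) can be arranged, nor that the rerouted move still strictly decreases $r$ while leaving $\alpha$ and $\beta$ untouched. As written, the argument establishes the statement only for $r=1$.

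For comparison, the paper organizes the descent through Lemma \ref{lem: chain of non-vanishing}: it builds two parallel chains $\mathcal{L}_{1}^{(u)}$ and $\mathcal{L}_{2}^{(u)}$ of bases interpolating from $\mathcal{A}$ to $\mathcal{B}$ and from $\mathcal{A}_{\alpha}^{\beta}$ to $\mathcal{B}_{\alpha}^{\beta}$ (the two chains move only inside $\mathcal{A}\Delta\mathcal{B}$, so they never touch $\alpha$ or $\beta$), applies the inductive hypothesis to the pair at distance $r-1$ and the base-case computation to the final link of length one, and multiplies the two resulting sign identities. Note that this route still requires the two chains to remain related by the single swap $\beta\to\alpha$ at stage $r-1$ — the same synchronization your construction needs at stage one — so the difficulty you flag is genuine and is resolved in the paper by the parallel-chain packaging rather than by a pointwise exchange. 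To complete your proof you should either prove a synchronized-exchange lemma (for bases $\mathcal{A}$, $\mathcal{A}_{\alpha}^{\beta}$ there exist $\gamma\in\mathcal{A}\setminus\mathcal{B}$, $\mu\in\mathcal{B}\setminus\mathcal{A}$ with $\mathcal{A}_{\mu}^{\gamma}$ and $(\mathcal{A}_{\alpha}^{\beta})_{\mu}^{\gamma}$ both in $\mathfrak{G}$), using the three-term Pl\"{u}cker relations (\ref{eq: three-terms Plucker relations}) and not just rank counting, or adopt the chain decomposition of Lemma \ref{lem: chain of non-vanishing} and make the synchronization of the two chains explicit.
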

\begin{proof}
Introduce $\mathcal{A}_{1}:=\mathcal{A}$, $\mathcal{B}_{1}:=\mathcal{B}$,
$\mathcal{A}_{2}:=(\mathcal{A})_{\alpha}^{\beta}$, $\mathcal{B}_{2}:=(\mathcal{B})_{\alpha}^{\beta}$.
Note that $\mathcal{A}_{1}\setminus\mathcal{B}_{1}=\mathcal{A}_{2}\setminus\mathcal{B}_{2}$,
$\mathcal{B}_{1}\setminus\mathcal{A}_{1}=\mathcal{B}_{2}\setminus\mathcal{A}_{2}$
and $\alpha,\beta\notin\mathcal{A}_{1}\Delta\mathcal{B}_{1}$.

We shall prove the statement by induction on the distance between
$\mathcal{A}_{1}$ and $\mathcal{B}_{1}$, that is $r:=\frac{1}{2}\cdot\#\left(\mathcal{A}_{1}\Delta\mathcal{B}_{1}\right)$.
First consider the case $r=1$ and set $\mathcal{B}_{1}\setminus\mathcal{A}_{1}=\{\Psi(\gamma_{1})\}$.
In particular, the elements $\alpha,\beta,\gamma_{1},\Psi(\gamma_{1})$
are all distinct and the set $\mathcal{H}:=\mathcal{A}_{1}\cap\mathcal{A}_{2}\cap\mathcal{B}_{1}\cap\mathcal{B}_{2}\in\mathcal{P}_{k-2}[n]$
satisfies $\mathcal{A}_{1}=\mathcal{H}_{\beta\gamma_{1}}$, $\mathcal{A}_{2}=\mathcal{H}_{\alpha\gamma_{1}}$,
$\mathcal{B}_{1}=\mathcal{H}_{\beta\Psi(\gamma_{1})}$ and $\mathcal{B}_{2}=\mathcal{H}_{\alpha\Psi(\gamma_{1})}$.
By hypothesis, $\mathcal{A}_{1},\mathcal{A}_{2},\mathcal{B}_{1},\mathcal{B}_{2}\in\mathfrak{G}$,
so Proposition \ref{prop: all three terms or none, including vanishing minors}
gives $\Sigma(\mathcal{A}_{1})\cdot\Sigma(\mathcal{B}_{2})=\Sigma(\mathcal{A}_{2})\cdot\Sigma(\mathcal{B}_{1})$.
Multiplying both sides of this equation by $\Sigma(\mathcal{A}_{2})\cdot\Sigma(\mathcal{B}_{2})$
and applying $\Sigma(\mathcal{A}_{2})^{2}=\Sigma(\mathcal{B}_{2})^{2}=1$
we find 
\begin{equation}
\Sigma(\mathcal{A}_{1})\cdot\Sigma(\mathcal{A}_{2})=\Sigma(\mathcal{B}_{1})\cdot\Sigma(\mathcal{B}_{2}).\label{eq: closure signs, distance 1}
\end{equation}
Now assume that the statement holds whenever $\frac{1}{2}\cdot\#\left(\mathcal{A}\Delta\mathcal{B}\right)<r$
and take $\mathcal{A}_{1},\mathcal{A}_{2}$, $\mathcal{B}_{1},\mathcal{B}_{2}\in\mathfrak{G}$
with $\frac{1}{2}\cdot\#\left(\mathcal{A}_{1}\Delta\mathcal{B}_{1}\right)=r$.
From Lemma \ref{lem: chain of non-vanishing} there exist a labelling
$\mathcal{A}_{1}\setminus\mathcal{B}_{1}=\{\gamma_{1},\dots,\gamma_{r}\}$
and two bijections $\Psi_{T}:\,\mathcal{A}_{T}\setminus\mathcal{B}_{T}\longrightarrow\mathcal{B}_{T}\setminus\mathcal{A}_{T}$,
$T\in\{1,2\}$, such that 
\begin{equation}
\mathcal{L}_{T}^{(u)}:=\left(\mathcal{A}_{T}\right)_{\Psi_{T}(\gamma_{1}),\dots,\Psi_{T}(\gamma_{u})}^{\gamma_{1}\dots\gamma_{u}}\in\mathfrak{G},\quad u\in[r],\,T\in\{1,2\}.\label{eq: parallel chains of non-vanishing}
\end{equation}
Let us focus on the four sets $\mathcal{A}_{1}$, $\mathcal{A}_{2}$,
$\mathcal{L}_{1}^{(r-1)}$, $\mathcal{L}_{2}^{(r-1)}$: $\mathcal{A}_{1},\mathcal{A}_{2}\in\mathfrak{G}$
by hypothesis, and $\mathcal{L}_{1}^{(r-1)},\mathcal{L}_{2}^{(r-1)}\in\mathfrak{G}$
by construction. Further, one has $\beta\in\left(\mathcal{A}_{1}\cap\mathcal{L}_{1}^{(r-1)}\right)\setminus\left(\mathcal{A}_{2}\cap\mathcal{L}_{2}^{(r-1)}\right)$
and $\alpha\in\left(\mathcal{A}_{2}\cap\mathcal{L}_{2}^{(r-1)}\right)\setminus\left(\mathcal{A}_{1}\cap\mathcal{L}_{1}^{(r-1)}\right)$.
Since the distance between $\mathcal{L}_{T}^{(u)}$ and $\mathcal{A}_{T}$
increases by $1$ at each step, one gets $\frac{1}{2}\cdot\#\left(\mathcal{A}_{T}\Delta\mathcal{L}_{T}^{(r-1)}\right)=r-1$.
So the inductive hypothesis applies and 
\begin{equation}
\Sigma(\mathcal{A}_{1})\cdot\Sigma(\mathcal{L}_{1}^{(r-1)})=\Sigma(\mathcal{A}_{2})\cdot\Sigma(\mathcal{L}_{2}^{(r-1)}).\label{eq: closure signs, distance r, r-1}
\end{equation}
Likewise, $\mathcal{L}_{T}^{(r)}=\mathcal{B}_{T}$ , $\beta\in\left(\mathcal{L}_{1}^{(r-1)}\cap\mathcal{B}_{1}\right)\setminus\left(\mathcal{L}_{2}^{(r-1)}\cap\mathcal{B}_{2}\right)$
and $\alpha\in\left(\mathcal{L}_{2}^{(r-1)}\cap\mathcal{B}_{2}\right)\setminus\left(\mathcal{L}_{1}^{(r-1)}\cap\mathcal{B}_{1}\right)$,
so one can repeat the same argument as in (\ref{eq: closure signs, distance 1})
and get 
\begin{equation}
\Sigma(\mathcal{L}_{1}^{(r-1)})\cdot\Sigma(\mathcal{B}_{1})=\Sigma(\mathcal{L}_{2}^{(r-1)})\cdot\Sigma(\mathcal{B}_{2}).\label{eq: closure signs, distance r, 1}
\end{equation}
Multiplying (\ref{eq: closure signs, distance r, r-1}) and (\ref{eq: closure signs, distance r, 1})
side by side, and applying $\Sigma(\mathcal{L}_{1}^{(r-1)})^{2}=+1$,
one finds 
\begin{equation}
\Sigma(\mathcal{A}_{1})\cdot\Sigma(\mathcal{B}_{1})=\Sigma(\mathcal{A}_{2})\cdot\Sigma(\mathcal{B}_{2})\label{closure signs, distance r}
\end{equation}
that means $\alpha\approx_{\mathcal{A}}\beta$ if and only if $\alpha\approx_{\mathcal{B}}\beta$. 
\end{proof}

Hence, for each $\alpha,\beta\in[n]$, the relation (\ref{eq: internal relation, implicit})
depends only on $\alpha$ and $\beta$, and not on $\mathcal{I}$
such that $\mathcal{I},\mathcal{I}_{\alpha}^{\beta}\in\mathfrak{G}$.
If such a $\mathcal{I}$ exists, we can introduce 
\begin{equation}
\chi(\alpha,\beta)=\Sigma(\mathcal{I})\cdot\Sigma(\mathcal{I}_{\alpha}^{\beta})\label{eq: relative sign between columns}
\end{equation}
without ambiguity. Moreover, $\chi$ is symmetric in its arguments,
since $\Sigma(\mathcal{I})\cdot\Sigma(\mathcal{I}_{\alpha}^{\beta})=\Sigma(\mathcal{I}_{\alpha}^{\beta})\cdot\Sigma((\mathcal{I}_{\beta}^{\alpha})_{\alpha}^{\beta})$
and one can substitute $\mathcal{I}\mapsto\mathcal{I}_{\beta}^{\alpha}$
in (\ref{eq: relative sign between columns}).

We will see that the signs (\ref{eq: relative sign between columns})
characterize $\Sigma$, so we focus on these pairs. 
\begin{defn}
\label{def: paths X matrix} For each $\mathcal{I}:=\{\beta_{1},\dots,\beta_{k}\}\in\mathfrak{G}$
take the $k\times(n-k)$ matrix $\mathbf{X}(\mathcal{I})$ defined
as $\left(\mathbf{X}(\mathcal{I})\right)_{i\alpha}:=\Sigma(\mathcal{I})\cdot\Sigma(\mathcal{I}_{\alpha}^{\beta_{i}})$
if $\mathcal{I}_{\alpha}^{\beta_{i}}\in\mathfrak{G}$ and $0$ otherwise
($\alpha\in[n]\setminus\mathcal{I}$). We refer to a \textit{path}
$\Phi_{\mathcal{I}}$ on $\mathbf{X}(\mathcal{I})$ as a sequence
of non-vanishing entries obtained moving alternately along rows and
columns of $\mathbf{X}(\mathcal{I})$, i.e., of the form 
\begin{equation}
(i_{1},\alpha_{1})\rightarrow(i_{2},\alpha_{1})\rightarrow(i_{2},\alpha_{2})\rightarrow(i_{3},\alpha_{2})\rightarrow\dots\label{eq: path}
\end{equation}
with $i_{T}\neq i_{T+1}$, $\alpha_{T}\neq \alpha_{T+1}$ and $\mathcal{I}^{i_{T}}_{\alpha_{T}},\mathcal{I}^{i_{T+1}}_{\alpha_{T}}\in\mathfrak{G}$
for all $T$.
Two indices $\alpha,\beta\in[n]$ are said to be \textit{connected
by a path} $\Phi_{\mathcal{I}}$ if $\alpha$ is a component of the
first element of $\Phi_{\mathcal{I}}$ but not of the second, and
$\beta$ is a component of the last element of $\Phi_{\mathcal{I}}$
but not of the second-to-last. 
\end{defn}
For instance, if $\alpha\in\mathcal{I}$ and $\beta\in[n]\setminus\mathcal{I}$,
then a path from $\alpha$ to $\beta$ starts with $(\alpha,\gamma)$
and ends with $(\delta,\beta)$, for some $\gamma\in[n]\setminus\mathcal{I},\delta\in\mathcal{I}$;
if instead both $\alpha$ and $\beta$ are elements of $\mathcal{I}$,
a path starts with $(\alpha,\gamma)$ and ends with $(\beta,\delta)$,
$\gamma,\delta\in[n]\setminus\mathcal{I}$. These paths induce a relation
$\rightarrow_{\mathcal{I}}$ on $[n]$. 
\begin{defn}
\label{def: path connection relation} For all $\alpha,\beta\in[n]$,
we say $\alpha\rightarrow_{\mathcal{I}}\beta$ if $\alpha=\beta$
or there is a path in $\mathbf{X}(\mathcal{I})$ that connects $\alpha$
and $\beta$. 
\end{defn}
This is an equivalence relation: it is reflexive by definition; if
$\alpha\rightarrow_{\mathcal{I}}\beta$, then the reverse path gives
$\beta\rightarrow_{\mathcal{I}}\alpha$, so $\rightarrow_{\mathcal{I}}$
is symmetric; it is also transitive, as follows from the path $\alpha\rightarrow_{\mathcal{I}}\gamma$
obtained from the concatenation of $\alpha\rightarrow_{\mathcal{I}}\beta$
and $\beta\rightarrow_{\mathcal{I}}\gamma$ and the simplification
of consecutive reverse subpaths. Moreover, the following result holds: 
\begin{prop}
\label{prop: closure property signs along closed path} The product
of signs of edges $\chi$ along any closed path of $\mathbf{X}(\mathcal{I})$,
$\mathcal{I}\in\mathfrak{G}$, is $+1$. 
\end{prop}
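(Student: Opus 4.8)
The plan is to read each nonzero entry of $\mathbf{X}(\mathcal{I})$ as the relative sign $\chi$ introduced in (\ref{eq: relative sign between columns}): for $\mathcal{I}=\{\beta_{1},\dots,\beta_{k}\}$ and $\alpha\in[n]\setminus\mathcal{I}$ one has $\left(\mathbf{X}(\mathcal{I})\right)_{i\alpha}=\Sigma(\mathcal{I})\cdot\Sigma(\mathcal{I}_{\alpha}^{\beta_{i}})=\chi(\beta_{i},\alpha)$, so a closed path becomes a cycle in the bipartite $\{\pm1\}$-edge-labelled graph whose vertices are the row labels $\beta_{1},\dots,\beta_{k}$ of $\mathcal{I}$ together with the external columns $[n]\setminus\mathcal{I}$, and the statement is precisely that this signed graph is \emph{balanced}. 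First I would discard backtracking: a subpath that immediately retraces an edge contributes a factor $\chi(\beta_{i},\alpha)^{2}=1$ and may be deleted, so it suffices to treat a simple cycle with pairwise distinct rows $\beta_{i_{1}},\dots,\beta_{i_{m}}$ and distinct columns $\alpha_{1},\dots,\alpha_{m}$. Collecting the two cells of the cycle lying in each shared column $\alpha_{s}$ and using that $\chi$ does not depend on the chosen base (Proposition~\ref{prop: compatibility internal equivalences}), the product along the cycle rewrites as $\prod_{s=1}^{m}\Sigma(\mathcal{I}_{\alpha_{s}}^{\beta_{i_{s}}})\cdot\Sigma(\mathcal{I}_{\alpha_{s}}^{\beta_{i_{s+1}}})=\prod_{s=1}^{m}\chi(\beta_{i_{s}},\beta_{i_{s+1}})$ (indices modulo $m$), a purely cyclic product of relative signs among the elements of $\mathcal{I}$.

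Next I would show that this cyclic product equals $+1$ by induction on the length $2m$ of the cycle. The base case $m=2$ is a rectangle on rows $\beta_{i_{1}},\beta_{i_{2}}$ and columns $\alpha_{1},\alpha_{2}$: grouping over the two columns gives $\chi(\beta_{i_{1}},\beta_{i_{2}})\cdot\chi(\beta_{i_{2}},\beta_{i_{1}})=\chi(\beta_{i_{1}},\beta_{i_{2}})^{2}=+1$, which needs only the symmetry and well-definedness of $\chi$. For the inductive step I would split off one such rectangle by introducing a \emph{chord}. Concretely, if the chord entry $(i_{1},\alpha_{2})$ is nonzero, then the rectangle identity $\chi(\beta_{i_{1}},\alpha_{1})\chi(\beta_{i_{2}},\alpha_{1})\chi(\beta_{i_{2}},\alpha_{2})\chi(\beta_{i_{1}},\alpha_{2})=+1$ lets me replace the three consecutive cells $(i_{1},\alpha_{1}),(i_{2},\alpha_{1}),(i_{2},\alpha_{2})$ by the single cell $(i_{1},\alpha_{2})$ without changing the product; this erases row $i_{2}$ and shortens the cycle to length $2(m-1)$, to which the inductive hypothesis applies.

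The main obstacle is exactly that we work in a matroid $\mathfrak{G}$ that need not be all of $\mathcal{P}_{k}[n]$, so the chord entry used above may correspond to a vanishing minor and be absent from $\mathbf{X}(\mathcal{I})$. This is where Proposition~\ref{prop: all three terms or none, including vanishing minors} becomes decisive: applied to the double exchange determined by the two rows and two columns of the rectangle, it guarantees that at least one of the two admissible chords is nonzero and that the associated sign relation (the ``all three terms or none'' alternative, i.e.\ (\ref{eq: sign 2-2, f})) still holds, so a product-preserving rectangle reduction is always available at some corner. When neither standard chord is usable, I would instead reroute the cycle through a nearby basis furnished by the exchange property, whose reachability inside $\mathfrak{G}$ is ensured by Lemma~\ref{lem: chain of non-vanishing}; each reroute alters the partial product only by a factor $+1$, for the same Pl\"{u}cker reason. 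Iterating, every simple cycle collapses to rectangles and therefore has product $+1$, which proves that the signed graph is balanced and establishes the stated closure property.
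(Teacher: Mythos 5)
Your overall architecture is the same as the paper's: induction on the length of the closed path, a rectangle identity as the base case (your $m=2$ computation is exactly Proposition \ref{prop: compatibility internal equivalences} applied to the four entries of the rectangle, which is how the paper handles $r=2$ via Proposition \ref{prop: all three terms or none, including vanishing minors}), and a reduction of longer cycles by cutting them at a chord whenever a chord entry of $\mathbf{X}(\mathcal{I})$ is available. Up to that point the argument is sound, and the preliminary reductions (deleting backtracking, splitting at repeated rows or columns) are legitimate within the induction.

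The gap is in the chordless case, which is where essentially all of the difficulty of the statement lives. Proposition \ref{prop: all three terms or none, including vanishing minors} does \emph{not} guarantee that ``at least one of the two admissible chords is nonzero'': applied to the pair $\mathcal{I}_{\alpha_{1}}^{\beta_{i_{1}}},\mathcal{I}_{\alpha_{2}}^{\beta_{i_{2}}}\in\mathfrak{G}$ it yields the alternative that either the chord minor $\Delta_{\mathbf{A}}(\mathcal{I}_{\alpha_{2}}^{\beta_{i_{1}}})$ is nonzero \emph{or} the double-exchange minor $\Delta_{\mathbf{A}}(\mathcal{I}_{\alpha_{1}\alpha_{2}}^{\beta_{i_{1}}\beta_{i_{2}}})$ is nonzero. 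When only the latter holds, no rectangle reduction is available at that corner, and a cycle may well admit no usable chord at any corner --- the paper treats this as a genuinely separate case. There, the argument is not a reroute via Lemma \ref{lem: chain of non-vanishing}: the paper first exploits the chordlessness itself, through repeated applications of Proposition \ref{prop: all three terms or none, including vanishing minors}, to show that all the relevant double-exchange subsets lie in $\mathfrak{G}$; it then re-bases the entire path at $\mathcal{J}:=\mathcal{I}_{\alpha_{2}}^{\beta_{i_{2}}}$ and verifies, via the identity (\ref{eq: transitivity T, T+1, T+2}), that the induced path of length $2(r-1)$ in $\mathbf{X}(\mathcal{J})$ carries the \emph{same} sign product, so that the inductive hypothesis applies. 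Your sentence that ``each reroute alters the partial product only by a factor $+1$, for the same Pl\"{u}cker reason'' is precisely the assertion that must be proved at this point, and nothing in your proposal proves it. As written, the induction does not close on chordless cycles, so the proof is incomplete.
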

\begin{proof}
Let $\mathcal{I}:=\{\gamma_{1},\dots,\gamma_{k}\}$ and denote $\gamma_{i_{T}}$
by $i_{T}$ in the rest of the proof for the sake of clearness. The
same number of moves along rows and along columns is required to close
a path, so it has an even length $2r$. Thus, the product of signs
along the closed path is 
\begin{equation}
\Phi_{\mathcal{I}}(i_{1},\dots,i_{r}\mid\alpha_{1},\dots,\alpha_{r}):=\prod_{T=1}^{r}\Sigma(\mathcal{I}_{\alpha_{T}}^{i_{T}})\cdot\Sigma(\mathcal{I}_{\alpha_{T}}^{i_{T+1}})\label{eq: sign of the path}
\end{equation}
where indices $T$ are taken modulo $r$, e.g., $i_{r+1}=i_{1}$. From
this one can also see that the statement is invariant under the replacement
of $\mathbf{X}(\mathcal{I})$ by $\mathbf{X}(\mathcal{J})\in\mathfrak{G}$,
as long as all the involved subsets $\mathcal{J}_{\alpha_{T}}^{i_{T}}$
and $\mathcal{J}_{\alpha_{T}}^{i_{T+1}}$ are in $\mathfrak{G}$:
indeed, both sides of (\ref{eq: sign of the path}) can be multiplied
by $(\Sigma(\mathcal{I}))^{2r}\cdot(\Sigma(\mathcal{J}))^{2r}=+1$,
so the factors are expressed as $\Sigma(\mathcal{I}_{\alpha_{T}}^{i_{T}})\cdot\Sigma(\mathcal{I})=\Sigma(\mathcal{J}_{\alpha_{T}}^{i_{T}})\cdot\Sigma(\mathcal{J})$
(respectively $\Sigma(\mathcal{I}_{\alpha_{T}}^{i_{T+1}})\cdot\Sigma(\mathcal{I})=\Sigma(\mathcal{J}_{\alpha_{T}}^{i_{T+1}})\cdot\Sigma(\mathcal{J})$)
by Proposition \ref{prop: compatibility internal equivalences}.

We prove the statement by induction on $r$. The base cases are $r=1$,
which is trivial by symmetry of $\chi(\alpha,\gamma_{i})$ with respect
to the interchange of its arguments, and $r=2$. In the latter situation,
there exist $i\neq m$ and $\alpha\neq\beta$ such that $\mathcal{I}_{\alpha}^{\gamma_{i}},\mathcal{I}_{\alpha}^{\gamma_{m}},\mathcal{I}_{\beta}^{\gamma_{m}},\mathcal{I}_{\beta}^{\gamma_{i}}\in\mathfrak{G}$.
By Proposition \ref{prop: all three terms or none, including vanishing minors}
the associated product of signs is $+1$. Now assume that the thesis
holds for $s\leq r-1$ and take any path of length $2r$. First suppose
that there exist $T\in[r]$, $S\in[r-2]$, such that $\mathcal{I}_{\alpha_{T+S}}^{i_{T}}\in\mathfrak{G}$.
Then one can write 
\begin{eqnarray}
& & \Phi_{\mathcal{I}}(i_{1},\dots,i_{r}\mid\alpha_{1},\dots,\alpha_{r})\nonumber \\ 
& = & \left[\left(\prod_{W=0}^{S-1}\Sigma(\mathcal{I}_{\alpha_{T+W}}^{i_{T+W}})\cdot\Sigma(\mathcal{I}_{\alpha_{T+W}}^{i_{T+W+1}})\right)\cdot\Sigma(\mathcal{I}_{\alpha_{T+S}}^{i_{T+S}})\cdot\Sigma(\mathcal{I}_{\alpha_{T+S}}^{i_{T}})\right]\nonumber \\
& \cdot & \left[\Sigma(\mathcal{I}_{\alpha_{T+S}}^{i_{T}})\cdot\left(\prod_{W=S}^{r-2}\Sigma(\mathcal{I}_{\alpha_{T+W}}^{i_{T+W+1}})\cdot\Sigma(\mathcal{I}_{\alpha_{T+W+1}}^{i_{T+W+1}})\right)\cdot\Sigma(\mathcal{I}_{\alpha_{T+r-1}}^{i_{T}})\right]\nonumber \\
& = & \Phi_{\mathcal{I}}(i_{T},i_{T+1},\dots,i_{T+S}\mid\alpha_{T},\alpha_{T+1},\dots,\alpha_{T+S})\nonumber \\
& \cdot & \Phi_{\mathcal{I}}(i_{T},i_{T+S+1},i_{T+S+2},\dots,i_{T+r-1}\mid\alpha_{T+S},\alpha_{T+S+1},\dots,\alpha_{T+r-1}).\nonumber \\
\label{eq: case reduction (t,t+1) along path, b}
\end{eqnarray}
The lengths of these two closed path are $2S+2$ and $2(r-S)$ respectively,
which lie in $\{4,\dots,2r-2\}$; so the inductive hypothesis applies
to both these paths and this gives the result.

On the other hand, if such $T,S$ do not exist, then $\mathcal{I}_{\alpha_{T}}^{i_{T+S}}\in\mathcal{P}_{k}[n]\setminus\mathfrak{G}$
for all $T\in[r]$ and $S\notin\{1,r\}$: by Proposition \ref{prop: all three terms or none, including vanishing minors},
the condition $\mathcal{I}_{\alpha_{T}}^{i_{T}},\mathcal{I}_{\alpha_{T+S}}^{i_{T+S}}\in\mathfrak{G}$,
along with $\mathcal{I}_{\alpha_{T+S}}^{i_{T}}\in\mathcal{P}_{k}[n]\setminus\mathfrak{G}$
at $S\notin\{r-1,r\}$ and $\mathcal{I}_{\alpha_{T}}^{i_{T+S}}\in\mathcal{P}_{k}[n]\setminus\mathfrak{G}$
at $S\notin\{1,r\}$, implies that $\mathcal{I}_{\alpha_{T}\alpha_{T+S}}^{i_{T}i_{T+S}}\in\mathfrak{G}$
for all $S\neq r$. Likewise, from $\mathcal{I}_{\alpha_{T}}^{i_{T+1}},\mathcal{I}_{\alpha_{T+S}}^{i_{T+S+1}}\in\mathfrak{G}$,
$\mathcal{I}_{\alpha_{T+S}}^{i_{T+1}}\in\mathcal{P}_{k}[n]\setminus\mathfrak{G}$
($S\notin\{1,r\}$) and $\mathcal{I}_{\alpha_{T}}^{i_{T+S+1}}\in\mathcal{P}_{k}[n]\setminus\mathfrak{G}$
($S\notin\{r-1,r\}$), one gets $\mathcal{I}_{\alpha_{T}\alpha_{T+S}}^{i_{T+1}i_{T+S+1}}\in\mathfrak{G}$
whenever $S\neq r$. Further, $\mathcal{I}_{\alpha_{2}}^{i_{2}},\mathcal{I}_{\alpha_{T}}^{i_{T+1}}\in\mathfrak{G}$
and $\mathcal{I}_{\alpha_{2}}^{i_{T+1}}\in\mathcal{P}_{k}[n]\setminus\mathfrak{G}$
lead to $\mathcal{I}_{\alpha_{2}\alpha_{T}}^{i_{2}i_{T+1}}\in\mathfrak{G}$
for all $T\notin\{1,2\}$.

So we fix $\mathcal{J}:=\mathcal{I}_{\alpha_{2}}^{i_{2}}\in\mathfrak{G}$:
by previous observations, one has $\mathcal{J}_{\alpha_{T}}^{i_{T}},\mathcal{J}_{\alpha_{T}}^{i_{T+1}}\in\mathfrak{G}$
for all $T\notin\{1,2\}$, $\mathcal{J}_{\alpha_{1}}^{i_{1}}\in\mathfrak{G}$
and $\mathcal{J}_{\alpha_{1}}^{i_{3}}=\mathcal{I}_{\alpha_{1}\alpha_{2}}^{i_{2}i_{3}}\in\mathfrak{G}$.
One can apply Proposition \ref{prop: all three terms or none, including vanishing minors}
to $\mathcal{J}_{\alpha_{1}}^{i_{1}}$ and $\left(\mathcal{J}_{\alpha_{1}}^{i_{1}}\right)_{i_{1}}^{i_{3}}=\mathcal{J}_{\alpha_{1}}^{i_{3}}$,
which are in $\mathfrak{G}$, and write 
\begin{eqnarray}
\Sigma\left(\mathcal{J}_{\alpha_{1}}^{i_{1}}\right)\cdot\Sigma\left(\mathcal{J}_{\alpha_{1}}^{i_{3}}\right) & = & \Sigma\left(\mathcal{I}_{\alpha_{1}\alpha_{2}}^{i_{1}i_{2}}\right)\cdot\Sigma\left(\mathcal{I}_{\alpha_{1}\alpha_{2}}^{i_{2}i_{3}}\right)\cdot\Sigma(\mathcal{I})^{2}\nonumber \\
 & = & \Sigma\left(\mathcal{I}_{\alpha_{1}}^{i_{1}}\right)\cdot\Sigma\left(\mathcal{I}_{\alpha_{2}}^{i_{2}}\right)\cdot\Sigma\left(\mathcal{I}_{\alpha_{1}}^{i_{2}}\right)\cdot\Sigma\left(\mathcal{I}_{\alpha_{2}}^{i_{3}}\right)\label{eq: transitivity T, T+1, T+2}
\end{eqnarray}
Thus consider the path $\Phi_{\mathcal{J}}(i_{1},i_{3},i_{4},\dots,i_{r}|\alpha_{1},\alpha_{3},\alpha_{4},\dots,\alpha_{r})$
of length $2(r-1)$ on $\mathbf{X}(\mathcal{J})$, which also satisfies
$\Sigma(\mathcal{J}_{\alpha_{T}}^{i_{T}})\cdot\Sigma(\mathcal{J}_{\alpha_{T}}^{i_{T+1}})=\Sigma(\mathcal{I}_{\alpha_{T}}^{i_{T}})\cdot\Sigma(\mathcal{I}_{\alpha_{T}}^{i_{T+1}})$
for all $T\notin\{1,2\}$ (all the subsets are in $\mathfrak{G}$
and Proposition \ref{prop: compatibility internal equivalences} holds)
and $\Sigma(\mathcal{J}_{\alpha_{1}}^{i_{1}})\cdot\Sigma(\mathcal{J}_{\alpha_{1}}^{i_{3}})=\Sigma(\mathcal{I}_{\alpha_{1}}^{i_{1}})\cdot\Sigma(\mathcal{I}_{\alpha_{1}}^{i_{2}})\cdot\Sigma(\mathcal{I}_{\alpha_{2}}^{i_{2}})\cdot\Sigma(\mathcal{I}_{\alpha_{2}}^{i_{3}})$
by (\ref{eq: transitivity T, T+1, T+2}). Since the involved subsets
are in $\mathfrak{G}$, the inductive hypothesis applies to the path
in $\mathbf{X}(\mathcal{J})$ and that gives 
\begin{equation}
\Phi_{\mathcal{I}}(i_{1},\dots,i_{r}\mid\alpha_{1},\dots,\alpha_{r})=\Phi_{\mathcal{J}}(i_{1},i_{3},\dots,i_{r}\mid\alpha_{1},\alpha_{3},\dots,\alpha_{r})=+1.\label{eq: closed path sign +}
\end{equation} 
\end{proof}

A special role is assumed by the pivot set $\mathcal{V}$, since the
hypothesis of no-null columns implies that each $\alpha\in[n]$ is
associated with an element $\nu_{i}\in\mathcal{V}$ (possibly $\alpha=\nu_{i}$)
such that $\mathcal{V}_{\alpha}^{\nu_{i}}\in\mathfrak{G}$. So we
can finally state the main result: 
\begin{thm}
\label{thm: compatible choices of signs, solution KP, including vanishing}
A choice of signs $\Sigma:\,\mathfrak{G}\longrightarrow\{\pm1\}$
returns a solution of the KP II equation (\ref{eq: bilinear KP})
if and only if $\Sigma$ is induced by a choice of signs for rows
and columns of $\mathbf{A}$ (up to the action of $GL_{k}(\mathbb{R})$). 
\end{thm}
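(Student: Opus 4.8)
The plan is to prove the two implications separately: the forward (``if'') direction is a short bookkeeping computation, while the converse carries all the difficulty because it must reconstruct a global row/column structure out of the purely local relative-sign data encoded in $\chi$. For the ``if'' direction I would start from sign assignments $\epsilon_{\alpha}\in\{\pm1\}$ to the columns $\alpha\in[n]$ and $\rho_{i}\in\{\pm1\}$ to the rows $i\in[k]$ of $\mathbf{A}$, yielding a new matrix $\tilde{\mathbf{A}}$. Since every maximal minor uses all $k$ rows, a row flip multiplies $\Delta_{\mathbf{A}}(\mathcal{I})$ by the global constant $\prod_{i}\rho_{i}$ independent of $\mathcal{I}$, whereas a column flip multiplies it by $\epsilon_{\alpha}$ exactly when $\alpha\in\mathcal{I}$; hence $\Delta_{\tilde{\mathbf{A}}}(\mathcal{I})=\big(\prod_{i}\rho_{i}\big)\big(\prod_{\alpha\in\mathcal{I}}\epsilon_{\alpha}\big)\Delta_{\mathbf{A}}(\mathcal{I})$ and the induced signature is $\Sigma(\mathcal{I})=\sigma_{0}\prod_{\alpha\in\mathcal{I}}\epsilon_{\alpha}$ with $\sigma_{0}=\prod_{i}\rho_{i}$ fixed. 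The row part is left multiplication by a diagonal element of $GL_{k}(\mathbb{R})$, which rescales $\tau$ by a constant and thus leaves the solution (\ref{eq: solution from tau}) invariant, while the signed function (\ref{eq: signed tau-function}) equals $\det(\tilde{\mathbf{A}}\,\Theta\,\mathbf{K})$, again of Wronskian form (\ref{eq: Wronskian}) and hence a solution of (\ref{eq: bilinear KP}).

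For the converse I would exploit the full apparatus already assembled. Given a solitonic $\Sigma$, Proposition~\ref{prop: compatibility internal equivalences} makes the relative sign $\chi(\alpha,\beta)$ of (\ref{eq: relative sign between columns}) well defined and symmetric, and Proposition~\ref{prop: closure property signs along closed path} guarantees that the product of the $\chi$ along any closed path equals $+1$. I read these two facts as saying that the $\{\pm1\}$-valued symmetric function $\chi$, viewed on the graph on $[n]$ whose edges are the pairs $(\alpha,\beta)$ admitting some $\mathcal{I}$ with $\mathcal{I},\mathcal{I}_{\alpha}^{\beta}\in\mathfrak{G}$, is a coboundary. Concretely I would fix a base vertex in each connected component, set its value to $+1$, and define $\epsilon_{\alpha}$ as the product of the $\chi$ along any path from the base to $\alpha$; the closure property is precisely path-independence, so $\epsilon_{\alpha}$ is well defined and satisfies $\chi(\alpha,\beta)=\epsilon_{\alpha}\epsilon_{\beta}$ on every edge. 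The pivot set $\mathcal{V}$ together with the no-null-column hypothesis ensures that each $\alpha$ is joined by an edge to some $\nu_{i}\in\mathcal{V}$, so the $\epsilon_{\alpha}$ are defined on all of $[n]$.

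It then remains to propagate this local factorization to an arbitrary $\mathcal{I}\in\mathfrak{G}$. Here I would invoke the chain Lemma~\ref{lem: chain of non-vanishing} to connect $\mathcal{V}$ to $\mathcal{I}$ by single exchanges inside $\mathfrak{G}$; each step contributes a factor $\chi(\gamma_{u},\Psi(\gamma_{u}))=\epsilon_{\gamma_{u}}\epsilon_{\Psi(\gamma_{u})}$, the removed elements exhaust $\mathcal{V}\setminus\mathcal{I}$ and the inserted ones exhaust $\mathcal{I}\setminus\mathcal{V}$, and the telescoping product collapses (using $\epsilon_{\gamma}^{2}=1$ and the partition of $\mathcal{I}$ and $\mathcal{V}$ through $\mathcal{V}\cap\mathcal{I}$) to $\Sigma(\mathcal{I})=\sigma_{0}'\prod_{\gamma\in\mathcal{I}}\epsilon_{\gamma}$ with the single constant $\sigma_{0}'=\Sigma(\mathcal{V})\prod_{\gamma\in\mathcal{V}}\epsilon_{\gamma}$. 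Comparison with the ``if'' computation identifies the $\epsilon_{\alpha}$ with column flips and $\sigma_{0}'$ with a row flip, that is, with the $GL_{k}(\mathbb{R})$ ambiguity, completing the proof.

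I expect the main obstacle to be this global assembly rather than any single algebraic identity, since all the Plücker/KP content has already been distilled into the well-definedness (Proposition~\ref{prop: compatibility internal equivalences}) and closure (Proposition~\ref{prop: closure property signs along closed path}) statements, and the genuine remaining difficulty is to force the purely local relative signs $\chi$ into a product form $\epsilon_{\alpha}\epsilon_{\beta}$ that stays consistent with $\Sigma$ on subsets some of whose single-exchange neighbours have vanishing minors. The delicate point is that when the matroid $\mathfrak{G}$ splits into several connected components the cross-component relative signs are unconstrained; this is harmless because the rank $\#(\mathcal{I}\cap S)$ inside each component $S$ is the same for every basis $\mathcal{I}$, so the component-wise sign ambiguity contributes only a fixed power to $\prod_{\gamma\in\mathcal{I}}\epsilon_{\gamma}$ and is absorbed into the global constant $\sigma_{0}'$, i.e.\ into the $GL_{k}(\mathbb{R})$ action.
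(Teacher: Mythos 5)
Your overall strategy coincides with the paper's: the forward direction is the same bookkeeping, and the converse rests on the same two pillars (well-definedness of $\chi$ via Proposition~\ref{prop: compatibility internal equivalences} and trivial holonomy via Proposition~\ref{prop: closure property signs along closed path}), followed by a potential-function construction and a telescoping chain. There is, however, one genuine gap, and it sits exactly where the paper's proof of the theorem does its remaining work. You declare $\chi$ to be a coboundary on the graph on $[n]$ whose edges are \emph{all} pairs $(\alpha,\beta)$ admitting some witness $\mathcal{I}$ with $\mathcal{I},\mathcal{I}_{\alpha}^{\beta}\in\mathfrak{G}$, and you then telescope along the chain of Lemma~\ref{lem: chain of non-vanishing}, whose exchanged pairs $(\gamma_{u},\Psi(\gamma_{u}))$ are edges of that large graph but need not satisfy $\mathcal{V}_{\gamma_{u}}^{\Psi(\gamma_{u})}\in\mathfrak{G}$. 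Proposition~\ref{prop: closure property signs along closed path} only establishes trivial holonomy for closed paths realized on a single matrix $\mathbf{X}(\mathcal{I})$; a cycle in your larger graph whose edges are witnessed by different, mutually distant bases is not covered, so the path-independence of $\epsilon_{\alpha}$ --- and hence the identity $\chi(\gamma_{u},\Psi(\gamma_{u}))=\epsilon_{\gamma_{u}}\epsilon_{\Psi(\gamma_{u})}$ on an edge that is not an edge of $\mathbf{X}(\mathcal{V})$ --- is asserted rather than proved.

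The paper closes this by refining the chain: for $\mathcal{I}\in\mathfrak{G}$ and $\alpha\in\mathcal{I}\setminus\mathcal{V}$ it invokes the Grassmann-Pl\"{u}cker expansion of $\Delta_{\mathbf{A}}(\mathcal{I})\cdot\Delta_{\mathbf{A}}(\mathcal{V})\neq0$ to produce $\nu(\alpha)$ with \emph{both} $\mathcal{I}_{\nu(\alpha)}^{\alpha}\in\mathfrak{G}$ and $\mathcal{V}_{\alpha}^{\nu(\alpha)}\in\mathfrak{G}$, so that every factor in the telescoping product (\ref{eq: factorization}) is attached to an edge of $\mathbf{X}(\mathcal{V})$ and the factorization $\chi(\alpha_{u},\nu(\alpha_{u}))=\chi(\alpha_{u})\cdot\chi(\nu(\alpha_{u}))$ follows from closure on $\mathbf{X}(\mathcal{V})$ alone, as in (\ref{eq: factorization 2 to 1}). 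Your argument is repaired by inserting this one step (or, equivalently, by first proving that every exchange edge has its endpoints connected in $\mathbf{X}(\mathcal{V})$ with matching holonomy, which is essentially what the theorem ends up establishing). Your closing remark about several connected components is sound and is the content of (\ref{eq: constant intersection with classes}): $\#(\mathcal{I}\cap\mathcal{C}_{q})$ is constant over $\mathcal{I}\in\mathfrak{G}$, so the base-point ambiguity is absorbed into the global constant $R$.
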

\begin{proof}
One implication is trivial, since a choice of signs for rows and columns
of $\mathbf{A}$ induces a (singular) soliton solution by construction.
So consider any choice of signs $\Sigma$ that returns a solution
of the KP II equation and choose an arbitrary element in $\mathfrak{G}$,
e.g., $\mathcal{V}$. For each $\mathcal{I}\in\mathfrak{G}$ and $\alpha\in\mathcal{I}\setminus\mathcal{V}$
, given that $\Delta_{\mathbf{A}}(\mathcal{I})\cdot\Delta_{\mathbf{A}}(\mathcal{V})\neq0$,
the Pl\"{u}cker relations imply that there exists at least one non-vanishing
term of the type $\Delta_{\mathbf{A}}(\mathcal{I}_{\nu_{i}}^{\alpha})\cdot\Delta_{\mathbf{A}}(\mathcal{V}_{\alpha}^{\nu_{i}})$.
Thus one can always find $\nu(\alpha)\in\mathcal{V}\setminus\mathcal{I}$
such that both $\mathcal{I}_{\nu(\alpha)}^{\alpha}$ and $\mathcal{V}_{\alpha}^{\nu(\alpha)}$
are elements of $\mathfrak{G}$. So let $\mathcal{I}\setminus\mathcal{V}=\{\alpha_{1},\dots,\alpha_{r}\}$:
as in Lemma \ref{lem: chain of non-vanishing}, we start with $\mathcal{L}_{0}:=\mathcal{I}$
and, from $\mathcal{L}_{u-1}\in\mathfrak{G}$ and $\alpha_{u}\in\mathcal{L}_{u-1}\setminus\mathcal{V}$,
$u\in[r]$, we find 
\begin{equation}
\nu(\alpha_{u})\in\mathcal{V}\setminus\mathcal{L}_{u-1},\quad\mathcal{L}_{u}:=\mathcal{L}_{u-1}\backslash\{\alpha_{u}\}\cup\{\nu(\alpha_{u})\}\label{eq: representatives in pivot set}
\end{equation}
such that $\mathcal{L}_{u},\mathcal{V}_{\alpha_{u}}^{\nu(\alpha_{u})}\in\mathfrak{G}$.
Also in this case, from $\nu(\alpha_{s})\in\mathcal{L}_{u-1}$ for
all $s<u$ and $\nu(\alpha_{u})\in\mathcal{V}\setminus\mathcal{L}_{u-1}$,
it follows that all $\nu(\alpha_{u})$ are pairwise distinct. So we
get 
\begin{eqnarray}
\Sigma(\mathcal{I})\cdot\Sigma(\mathcal{V}) & = & \prod_{u=1}^{r}\Sigma(\mathcal{L}_{u-1})\cdot\Sigma(\mathcal{L}_{u})=\prod_{u=1}^{r}\Sigma(\mathcal{L}_{u-1})\cdot\Sigma\left((\mathcal{L}_{u-1})_{\nu(\alpha_{u})}^{\alpha_{u}}\right)\nonumber \\
 & = & \prod_{u=1}^{r}\Sigma(\mathcal{V}_{\alpha_{u}}^{\nu(\alpha_{u})})\cdot\Sigma\left((\mathcal{V}_{\alpha_{u}}^{\nu(\alpha_{u})})_{\nu(\alpha_{u})}^{\alpha_{u}}\right)\quad\mathrm{(from\,\,Proposition\,\,\ref{prop: compatibility internal equivalences})}\nonumber \\
 & = & \prod_{u=1}^{r}\Sigma(\mathcal{V}_{\alpha_{u}}^{\nu(\alpha_{u})})\cdot\Sigma(\mathcal{V})=\prod_{u=1}^{r}\chi(\alpha_{i},\nu(\alpha_{i})).\label{eq: factorization}
\end{eqnarray}

Now consider the equivalence $\rightarrow_{\mathcal{V}}$. Each class
$\mathcal{C}_{p}$ contains at least one element $\nu_{i_{p}}\in\mathcal{V}$,
since we have assumed that there are no null columns: fix a sign $\chi(\nu_{i_{p}})\in\{\pm1\}$
for each of them. For any $\mathcal{\alpha\in\mathcal{C}}_{p}$, take
a path $\Phi(i_{p}\rightarrow\alpha)$ on $\mathbf{X}(\mathcal{V})$
connecting $\nu_{i_{p}}$ and $\alpha$, and set 
\begin{equation}
\chi(\alpha):=\chi(i_{p})\cdot\left(\prod_{(m,\delta)\in\Phi(i_{p}\rightarrow\alpha)}\chi(\delta,\nu_{m})\right).\label{eq: construction of signs}
\end{equation}
This definition is well-posed since it does not depend on the choice
of the path by Proposition \ref{prop: closure property signs along closed path}.
If $\mathcal{V}_{\alpha}^{\nu_{i}}\in\mathfrak{G}$, then $\nu_{i}\in\mathcal{V}$
belongs to the same class of $\alpha$, because the path with only
one element $(\nu_{i},\alpha)$ connects them. The concatenation of
$\{(\nu_{i},\alpha)\}$, the reverse of $\Phi(i_{p}\rightarrow\alpha)$
and $\Phi(i_{p}\rightarrow\nu_{i})$ makes a closed path, whose product
of signs is equal to $+1$ by Proposition \ref{prop: closure property signs along closed path}.
So 
\begin{eqnarray}
\hspace*{-1cm}\chi(\alpha,\nu_{i}) & = & \prod_{(m,\delta)\in\Phi(i_{p}\rightarrow\alpha)}\chi(\delta,\nu_{m})\cdot\prod_{(l,\gamma)\in\Phi(i_{p}\rightarrow\nu_{i})}\chi(\gamma,\nu_{l})\nonumber \\
\hspace*{-1cm} & = & \left(\chi(i_{p})\cdot\prod_{(m,\delta)\in\Phi(i_{p}\rightarrow\alpha)}\chi(\delta,\nu_{m})\right)\cdot\left(\chi(i_{p})\cdot\prod_{(l,\gamma)\in\Phi(i_{p}\rightarrow\nu_{i})}\chi(\gamma,\nu_{l})\right)\nonumber \\
\hspace*{-1cm} & = & \chi(\alpha)\cdot\chi(\nu_{i}).\label{eq: factorization 2 to 1}
\end{eqnarray}
Finally, since $\mathcal{V}\setminus\mathcal{I}=\{\nu(\alpha_{1}),\dots,\nu(\alpha_{r})\}$,
we can express $\Sigma(\mathcal{I})$ as 
\begin{eqnarray}
\hspace*{-1cm}\Sigma(\mathcal{I}) & = & \Sigma(\mathcal{V})\cdot\prod_{u=1}^{r}\chi(\alpha_{u},\nu(\alpha_{u}))\quad\mathrm{(from\,\,(\ref{eq: factorization}))}\nonumber \\
\hspace*{-1cm} & = & \Sigma(\mathcal{V})\cdot\prod_{u=1}^{r}\chi(\alpha_{u})\cdot\chi(\nu(\alpha_{u}))\quad\mathrm{(from\,\,(\ref{eq: factorization 2 to 1}))}\nonumber \\
\hspace*{-1cm} & = & \Sigma(\mathcal{V})\cdot\left(\prod_{i=1}^{k}\chi(\nu_{i})\right)\cdot\left(\prod_{i=1}^{k}\chi(\nu_{i})\right)\cdot\left(\prod_{u=1}^{r}\chi(\nu(\alpha_{u}))\right)\cdot\left(\prod_{u=1}^{r}\chi(\alpha_{u})\right)\nonumber \\
\hspace*{-1cm} & = & R\cdot\left(\prod_{\nu_{i}\in\mathcal{I}\cap\mathcal{V}}\chi(\nu_{i})\right)\cdot\left(\prod_{u=1}^{r}\chi(\alpha_{u})\right)=R\cdot\prod_{\alpha\in\mathcal{I}}\chi(\alpha).\label{eq: reduction to row and column operations}
\end{eqnarray}
where 
\begin{equation}
R:=\Sigma(\mathcal{V})\cdot\left(\prod_{i=1}^{k}\chi(\nu_{i})\right).\label{eq: row normalization}
\end{equation}
Hence $\Sigma$ is induced by a choice of sign $\chi(\alpha)$ for
columns $\alpha\in[n]$ and $R$ for an arbitrary row of $\mathbf{A}$.
\end{proof}

The previous results can be summarised in the following theorem,
which relates the requirements coming from the determinantal form
(\ref{eq: determinantal partition function}), the KP II equation
and the whole KP hierarchy. 
\begin{thm}
\label{thm: connections KP eq, KP hierarchy, and determinantal form} For
a generic choice of $\boldsymbol{\ensuremath{\kappa}}$, a signature (\ref{eq: choice of signs})
returns a solution $\tau(\boldsymbol{x})-2\cdot\tau_{\Sigma}(\boldsymbol{x})$
of the KP II equation (\ref{eq: bilinear KP}) if and only if it returns
a solution $\tau(x_{1},x_{2},x_{3},\dots)-2\cdot\tau_{\Sigma}(x_{1},x_{2},x_{3},\dots)$
for the whole KP hierarchy, if and only if $\Sigma$ preserves the determinant form (\ref{eq: Cauchy-Binet formula, solitons, a}). 
\end{thm}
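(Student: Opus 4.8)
The plan is to close a cycle of implications among the three properties, using Theorem~\ref{thm: compatible choices of signs, solution KP, including vanishing} as the only non-formal ingredient. That theorem already establishes the equivalence between ``$\Sigma$ returns a KP~II solution'' and ``$\Sigma$ is induced by sign flips of rows and columns of $\mathbf{A}$'', i.e.\ $\Sigma(\mathcal{I})=R\cdot\prod_{\alpha\in\mathcal{I}}\chi(\alpha)$ for column signs $\chi(\alpha)\in\{\pm1\}$ and a global row sign $R$, as in (\ref{eq: reduction to row and column operations}). It therefore remains to link the row/column flip description to the determinantal form and to the whole hierarchy, after which the chain ``KP~II $\Rightarrow$ flips $\Rightarrow$ determinantal $\Rightarrow$ hierarchy $\Rightarrow$ KP~II'' closes, yielding the three-fold equivalence.

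First I would show that a row/column flip preserves the determinantal form. Given $\chi$ and $R$, define $\tilde{\mathbf{A}}$ by negating each column $\alpha$ with $\chi(\alpha)=-1$ and, if $R=-1$, negating one arbitrary row. Negating column $\alpha$ multiplies $\Delta_{\mathbf{A}}(\mathcal{I})$ by $-1$ exactly when $\alpha\in\mathcal{I}$, while negating a row multiplies every maximal minor by $-1$; hence $\Delta_{\tilde{\mathbf{A}}}(\mathcal{I})=R\cdot\prod_{\alpha\in\mathcal{I}}\chi(\alpha)\cdot\Delta_{\mathbf{A}}(\mathcal{I})=\Sigma(\mathcal{I})\cdot\Delta_{\mathbf{A}}(\mathcal{I})$. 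Substituting this into (\ref{eq: signed tau-function}) and reading the Cauchy--Binet expansion (\ref{eq: Cauchy-Binet formula, solitons, a}) backwards gives $\tau-2\tau_{\Sigma}=\det(\tilde{\mathbf{A}}\cdot\boldsymbol{\Theta}(\boldsymbol{x})\cdot\mathbf{K})$, which is precisely the determinantal (Wronskian) form for the coefficient matrix $\tilde{\mathbf{A}}$.

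Next, the determinantal form implies the whole hierarchy. By (\ref{eq: exponentials to Vandermonde}) the determinant $\det(\tilde{\mathbf{A}}\cdot\boldsymbol{\Theta}\cdot\mathbf{K})$ coincides with the Wronskian $\mathrm{Wr}(\tilde f_1,\dots,\tilde f_k)$ of the functions $\tilde f_i=\sum_{\alpha}\tilde A_{i\alpha}E_{\alpha}$, and each $E_{\alpha}$ in (\ref{eq: exponential dispersion relation}) solves the linear system (\ref{eq: linear equations for KP}) for every $r\in[d]$. By the classical Wronskian construction \cite{Nimmo1983}, such a determinant is a $\tau$-function for the entire KP hierarchy, so $\tau-2\tau_{\Sigma}$ satisfies every bilinear equation of the hierarchy, in particular the KP~II equation (\ref{eq: bilinear KP}), which is its first member. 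This last arrow is immediate and completes the cycle.

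The only genuinely new content is Theorem~\ref{thm: compatible choices of signs, solution KP, including vanishing}; the remaining arrows are formal once one observes that the sign pattern $R\cdot\prod_{\alpha\in\mathcal{I}}\chi(\alpha)$ is exactly what row and column scalings of $\mathbf{A}$ produce on Pl\"{u}cker coordinates. The point demanding care is the phrase \emph{up to} $GL_k(\mathbb{R})$: the row normalization $R$ in (\ref{eq: row normalization}) can be absorbed into any single row, and different choices differ by a $GL_k(\mathbb{R})$ transformation, which only rescales $\tau$ by a constant and hence leaves (\ref{eq: solution from tau}) invariant; so ``$\Sigma$ preserves the determinantal form'' is well posed exactly modulo this action. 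Finally, genericity of $\boldsymbol{\kappa}$ enters only through Theorem~\ref{thm: compatible choices of signs, solution KP, including vanishing}, whereas the implications flips $\Rightarrow$ determinantal $\Rightarrow$ hierarchy $\Rightarrow$ KP~II hold for arbitrary pairwise distinct soliton parameters.
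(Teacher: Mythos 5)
Your proposal is correct and follows essentially the same route as the paper: the only substantive ingredient is Theorem~\ref{thm: compatible choices of signs, solution KP, including vanishing}, and the remaining arrows (row/column flips give a new coefficient matrix $\tilde{\mathbf{A}}$ with $\Delta_{\tilde{\mathbf{A}}}(\mathcal{I})=\Sigma(\mathcal{I})\cdot\Delta_{\mathbf{A}}(\mathcal{I})$, any such determinant is a Wronskian $\tau$-function of the whole hierarchy, and the hierarchy contains KP~II as its first member) are the same formal observations the paper uses, merely arranged by you into a single cycle of implications rather than two sub-arguments. Your explicit verification of the flip action on the Pl\"{u}cker coordinates and the remark on the $GL_k(\mathbb{R})$ gauge are welcome details that the paper leaves implicit.
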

\begin{proof}
A choice of signs (\ref{eq: choice of signs}) returns a solution
of the KP II equation (\ref{eq: bilinear KP}) if and only if it is
induced by a choice of signs for rows and columns of $\mathbf{A}$
by Theorem \ref{thm: compatible choices of signs, solution KP, including vanishing}.
In this case, the associated function $\tau(x_{1},x_{2},x_{3},\dots)-2\cdot\tau_{\Sigma}(x_{1},x_{2},x_{3},\dots)$ is a solution of the whole KP hierarchy. 

If we focus on the determinantal structure of the $\tau$-function, all the choices of signs for rows and columns clearly preserve the form (\ref{eq: Cauchy-Binet formula, solitons, a}). On the other hand, if a signature $\Sigma$ preserve this structure, then the KP II equation is satisfied (together with all the other members of the hierarchy). So, from Theorem \ref{thm: compatible choices of signs, solution KP, including vanishing}, this solution can be expressed in terms of the initial function via a choice of signs for rows and columns of $\mathbf{A}$.  
\end{proof} 
A more detailed analysis on the signatures preserving the determinantal constraints, which also includes the preservation of a specific subset of soliton parameters, is given in Appendix \ref{sec: Determinantal constraints}. 

A special situation is when $\Sigma$ is defined over the whole set
$\mathcal{P}_{k}[n]$, i.e., $\mathfrak{G}=\mathcal{P}_{k}[n]$. In
such a case, one can easily verify that Remark \ref{rem: internal relation}
implies that $\approx_{\mathcal{I}}$ is an equivalence. If some vanishing
minors occur, then the transitivity of the relation (\ref{eq: internal relation, implicit})
is not guaranteed. So one could look for a transitive extension of
all the relations $\approx_{\mathcal{I}}$ at varying $\mathcal{I}$.
An \textit{extension} of $\Sigma$, that
is a map $\tilde{\Sigma}:\,\mathcal{P}_{k}[n]\longrightarrow\{\pm1\}$
that satisfies $\tilde{\Sigma}(\mathcal{I})=\Sigma(\mathcal{I})$
for all $\mathcal{I}\in\mathfrak{G}$, can be obtained using (\ref{eq: construction of signs})
to label minors in $\mathcal{P}_{k}[n]\setminus\mathfrak{G}$ with
a sign compatible with Proposition \ref{prop: all three terms or none, including vanishing minors}.
So the following holds: 
\begin{cor}
\label{cor: integrable complete extension} For a solitonic choice
of sign $\Sigma$, there exists an extension $\tilde{\Sigma}$ of
$\Sigma$ to the whole set $\mathcal{P}_{k}[n]$ such that $\tilde{\Sigma}$
is a solitonic choice of signs too. 
\end{cor}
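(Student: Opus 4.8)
The plan is to read off, directly from the proof of Theorem \ref{thm: compatible choices of signs, solution KP, including vanishing}, that a solitonic $\Sigma$ is already encoded by a collection of \emph{column signs} together with a single \emph{row factor}, and to use these to extend $\Sigma$ by the obvious formula. Indeed, in that proof one starts from the pivot set $\mathcal{V}$ and the path structure on $\mathbf{X}(\mathcal{V})$ and produces, via (\ref{eq: construction of signs}), a sign $\chi(\alpha)\in\{\pm1\}$ for \emph{every} column $\alpha\in[n]$, as well as the global factor $R$ of (\ref{eq: row normalization}), in such a way that $\Sigma(\mathcal{I})=R\cdot\prod_{\alpha\in\mathcal{I}}\chi(\alpha)$ for all $\mathcal{I}\in\mathfrak{G}$ by (\ref{eq: reduction to row and column operations}). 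The key observation is that $\chi(\alpha)$ is defined for \emph{all} $\alpha$, not only for columns occurring in pivot-adjacent subsets: the no-null-column hypothesis forces each $\alpha$ into a class $\mathcal{C}_p$ of $\rightarrow_{\mathcal{V}}$ containing a pivot, and Proposition \ref{prop: closure property signs along closed path} makes the path-defined $\chi(\alpha)$ independent of the chosen path. First I would therefore simply set
\[
\tilde{\Sigma}(\mathcal{I}):=R\cdot\prod_{\alpha\in\mathcal{I}}\chi(\alpha),\qquad\mathcal{I}\in\mathcal{P}_{k}[n].
\]

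Next I would verify the two required properties. That $\tilde{\Sigma}$ is an \emph{extension} of $\Sigma$ is immediate, since the displayed formula agrees with (\ref{eq: reduction to row and column operations}) on $\mathfrak{G}$. That $\tilde{\Sigma}$ is \emph{solitonic} follows from the easy direction of Theorem \ref{thm: compatible choices of signs, solution KP, including vanishing}: the signs $\chi(\alpha)$ are exactly the effect of flipping the corresponding columns of $\mathbf{A}$, while $R$ records the common factor (the product of the row signs) acquired by every maximal minor when rows are flipped. Writing $\tilde{\mathbf{A}}$ for the resulting matrix, one has $\Delta_{\tilde{\mathbf{A}}}(\mathcal{I})=\tilde{\Sigma}(\mathcal{I})\cdot\Delta_{\mathbf{A}}(\mathcal{I})$ for \emph{all} $\mathcal{I}\in\mathcal{P}_{k}[n]$, so that the action of $\tilde{\Sigma}$ reproduces the genuine soliton $\tau$-function $\det(\tilde{\mathbf{A}}\cdot\mathbf{\Theta}\cdot\mathbf{K})$, which solves (\ref{eq: bilinear KP}) by construction; on $\mathfrak{G}$ this coincides with the already-solitonic function produced by $\Sigma$.

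The one point needing genuine care, and which I expect to be the heart of the argument, is compatibility of $\tilde{\Sigma}$ with Proposition \ref{prop: all three terms or none, including vanishing minors} across the vanishing part $\mathcal{P}_{k}[n]\setminus\mathfrak{G}$, where the relation $\approx_{\mathcal{I}}$ need not be transitive and so consistency of a full extension is \emph{a priori} unclear. However, once $\tilde{\Sigma}$ is given by the multiplicative formula above this reduces to an identity in the $\chi(\alpha)$'s: using $\chi(\alpha)^2=1$ one gets $\tilde{\Sigma}(\mathcal{I}_{\delta}^{\alpha})=\chi(\delta)\chi(\alpha)\,\tilde{\Sigma}(\mathcal{I})$, whence $\tilde{\Sigma}(\mathcal{I}_{\delta_T}^{\alpha_1})\cdot\tilde{\Sigma}(\mathcal{I}_{\delta_U}^{\alpha_2})=\chi(\delta_1)\chi(\delta_2)\chi(\alpha_1)\chi(\alpha_2)=\tilde{\Sigma}(\mathcal{I})\cdot\tilde{\Sigma}(\mathcal{I}_{\delta_1\delta_2}^{\alpha_1\alpha_2})$, independently of the splitting $\{T,U\}=\{1,2\}$. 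Thus the sign relations of Proposition \ref{prop: all three terms or none, including vanishing minors} hold identically for $\tilde{\Sigma}$ regardless of whether the involved minors vanish, which is exactly the demanded compatibility. Combined with the extension property and the realizability through $\tilde{\mathbf{A}}$, this establishes that $\tilde{\Sigma}$ is a solitonic signature defined on all of $\mathcal{P}_{k}[n]$.
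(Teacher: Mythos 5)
Your proposal is correct and follows essentially the same route as the paper: the text preceding the corollary justifies it precisely by using (\ref{eq: construction of signs}) to assign signs to the minors in $\mathcal{P}_{k}[n]\setminus\mathfrak{G}$ compatibly with Proposition \ref{prop: all three terms or none, including vanishing minors}, which is exactly your multiplicative formula $\tilde{\Sigma}(\mathcal{I})=R\cdot\prod_{\alpha\in\mathcal{I}}\chi(\alpha)$. You merely spell out the details (agreement on $\mathfrak{G}$ via (\ref{eq: reduction to row and column operations}), solitonicity via the easy direction of Theorem \ref{thm: compatible choices of signs, solution KP, including vanishing}, and the consistency check) that the paper leaves implicit.
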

It could be interesting to extend this approach to more general expressions
for the $\tau$-function and to other hierarchies, in order to check
the complexity reduction coming from the initial data and the specific
requirements.

\section{\label{sec: Number of distinct configurations} Number of distinct
configurations}

Having identified a family of signatures that preserve specific requirements,
it is worth exploring some of its combinatorial aspects in order to
clarify the effects of the initial data, e.g., the coefficient matrix
$\mathbf{A}$, on the set of allowed configurations.

The overall sign given by row signature is obtained with the choice
$R\in\{\pm1\}$ for an arbitrary row of $\mathbf{A}$ (see (\ref{eq: row normalization})).
Sign flips for columns can be expressed as an action of $\{\pm1\}^{[n]}$
on $\mathbb{R}^{k\times n}$ by right multiplication, i.e., $\boldsymbol{\ensuremath{\sigma}}_{\alpha}(\mathbf{A}):=\mathbf{A}\cdot\boldsymbol{\ensuremath{\sigma}}_{\alpha}$
and $\boldsymbol{\ensuremath{\sigma}}_{\mathcal{I}}(\mathbf{A}):=\mathbf{A}\cdot\boldsymbol{\ensuremath{\sigma}}_{\mathcal{I}}$
where 
\begin{eqnarray}
\boldsymbol{\ensuremath{\sigma}}_{\alpha} & := & \left(
\begin{array}{ccc}
\idd_{\alpha-1} & \mathbf{0} & \mathbf{0}\\
\mathbf{0} & -1 & \mathbf{0}\\
\mathbf{0} & \mathbf{0} & \idd_{n-\alpha}
\end{array}\right),\quad\alpha\in[n],\label{eq: sign switch operator, a}\\
\boldsymbol{\ensuremath{\sigma}}_{\mathcal{S}} & := & \prod_{\alpha\in\mathcal{S}}\boldsymbol{\ensuremath{\sigma}}_{\alpha},\quad\mathcal{S}\in\mathcal{P}[n].\label{eq: sign switch operator, A}
\end{eqnarray}
Accordingly, (\ref{eq: Cauchy-Binet as partition function}) becomes
\begin{equation}
\tau_{\mathcal{S}}(\boldsymbol{x}):=\det\left(\boldsymbol{\ensuremath{\sigma}}_{\mathcal{S}}(\mathbf{A})\cdot\boldsymbol{\ensuremath{\Theta}}(\boldsymbol{x})\cdot\mathbf{K}\right).\label{eq: tau function induced by column operations}
\end{equation}

If $k=1$, then the minors of $\boldsymbol{A}\in\mathbb{R}^{1\times n}$
are the entries of a row vector. Then, for every non-zero matrix $\boldsymbol{B}\in\mathbb{R}^{1\times n}$
such that $|A_{\alpha}|=|B_{\alpha}|$ for all $\alpha\in[N]$, one
can recover $\boldsymbol{A}$ from (\ref{eq: sign switch operator, A})
choosing $\mathcal{S}=\left\{ \alpha\in[n]:\,A_{1,\alpha}\cdot B_{1,\alpha}<0\right\} $.
If $\mathfrak{G}=\mathcal{P}_{k}[n]$, such a choice is unique. The
use of (\ref{eq: sign switch operator, A}) in the case $k>1$ gives
a suitable generalization. 
\begin{lem}
\label{lem: cluster classes} Let $\mathbf{A}\in\mathbb{R}^{k\times n}$
and $\mathcal{C}_{1},\dots,\mathcal{C}_{P}$ be the equivalence classes
associated with the relation $\rightarrow_{\mathcal{V}}$ in Definition
\ref{def: path connection relation}. For all $\mathcal{H}_{1},\mathcal{H}_{2}\subseteq[n]$,
the equalities 
\begin{equation}
\mathrm{sign}\left[\Delta\left(\boldsymbol{\ensuremath{\sigma}}_{\mathcal{H}_{1}}(\mathbf{A});\mathcal{I}\right)\right]=\mathrm{sign}\left[\Delta\left(\boldsymbol{\ensuremath{\sigma}}_{\mathcal{H}_{2}}(\mathbf{A});\mathcal{I}\right)\right],\quad\mathcal{I}\in\mathfrak{G}\label{eq: hp equal signs minors, +}
\end{equation}
imply that there exists $\mathfrak{p}\subseteq[P]$ such that 
\begin{equation}
\mathcal{H}_{1}\Delta\mathcal{H}_{2}=\bigcup_{q\in\mathfrak{p}}\mathcal{C}_{q}.\label{eq: clustering equivalence classes same signature, +}
\end{equation}
\end{lem}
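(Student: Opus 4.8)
The plan is to pass to the symmetric difference $\mathcal{H}:=\mathcal{H}_{1}\Delta\mathcal{H}_{2}$, convert the sign hypothesis into a parity condition on $\mathcal{H}$, and then show that this condition forces $\mathcal{H}$ to be saturated for the equivalence relation $\rightarrow_{\mathcal{V}}$. First I would record how a column sign flip acts on a maximal minor: since $\boldsymbol{\sigma}_{\mathcal{H}}(\mathbf{A})=\mathbf{A}\cdot\boldsymbol{\sigma}_{\mathcal{H}}$ multiplies each column indexed by $\mathcal{H}$ by $-1$, one has $\Delta\left(\boldsymbol{\sigma}_{\mathcal{H}}(\mathbf{A});\mathcal{I}\right)=(-1)^{\#(\mathcal{H}\cap\mathcal{I})}\cdot\Delta_{\mathbf{A}}(\mathcal{I})$ for every $\mathcal{I}\in\mathcal{P}_{k}[n]$. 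For $\mathcal{I}\in\mathfrak{G}$ the minor $\Delta_{\mathbf{A}}(\mathcal{I})$ is non-zero, so (\ref{eq: hp equal signs minors, +}) is equivalent to $(-1)^{\#(\mathcal{H}_{1}\cap\mathcal{I})}=(-1)^{\#(\mathcal{H}_{2}\cap\mathcal{I})}$; using the identity $\#(\mathcal{H}_{1}\cap\mathcal{I})+\#(\mathcal{H}_{2}\cap\mathcal{I})\equiv\#(\mathcal{H}\cap\mathcal{I})\pmod 2$, this means precisely that $\#(\mathcal{H}\cap\mathcal{I})$ is even for every $\mathcal{I}\in\mathfrak{G}$.

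Next I would exploit the single exchanges encoded in $\mathfrak{G}$. If $\mathcal{I}\in\mathfrak{G}$ and $\mathcal{I}_{\beta}^{\alpha}\in\mathfrak{G}$ with $\alpha\in\mathcal{I}$, $\beta\notin\mathcal{I}$, then $\#(\mathcal{H}\cap\mathcal{I}_{\beta}^{\alpha})-\#(\mathcal{H}\cap\mathcal{I})$ equals $+1$, $-1$ or $0$ according to whether exactly $\beta$, exactly $\alpha$, or neither or both of $\alpha,\beta$ lie in $\mathcal{H}$. Since both cardinalities are even by the parity condition, this difference is even, hence zero, which forces $\alpha\in\mathcal{H}\Leftrightarrow\beta\in\mathcal{H}$. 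This is exactly the elementary step that a path in $\mathbf{X}(\mathcal{V})$ realises: a move along a column, from $(i_{T},\alpha_{T})$ to $(i_{T+1},\alpha_{T})$, involves the two bases $\mathcal{V}_{\alpha_{T}}^{\nu_{i_{T}}}$ and $\mathcal{V}_{\alpha_{T}}^{\nu_{i_{T+1}}}$, both in $\mathfrak{G}$ because the visited entries are non-vanishing, and these differ by the single exchange $\nu_{i_{T}}\leftrightarrow\nu_{i_{T+1}}$; a move along a row, from $(i_{T+1},\alpha_{T})$ to $(i_{T+1},\alpha_{T+1})$, involves $\mathcal{V}_{\alpha_{T}}^{\nu_{i_{T+1}}}$ and $\mathcal{V}_{\alpha_{T+1}}^{\nu_{i_{T+1}}}$ in $\mathfrak{G}$, differing by $\alpha_{T}\leftrightarrow\alpha_{T+1}$. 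In either case the previous observation forces the two swapped indices to agree on membership in $\mathcal{H}$.

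Then I would propagate this invariance along paths. If $\alpha\rightarrow_{\mathcal{V}}\beta$ with $\alpha\neq\beta$, Definition \ref{def: path connection relation} supplies a path whose consecutive elementary moves each preserve $\mathcal{H}$-membership by the preceding step, so concatenation gives $\alpha\in\mathcal{H}\Leftrightarrow\beta\in\mathcal{H}$. Hence $\mathcal{H}$-membership is constant on each equivalence class $\mathcal{C}_{q}$ of $\rightarrow_{\mathcal{V}}$, so every class is either contained in $\mathcal{H}$ or disjoint from it. The no-null-column hypothesis guarantees that each $\alpha\in[n]$ lies in some class (it is joined to a pivot $\nu_{i}$ with $\mathcal{V}_{\alpha}^{\nu_{i}}\in\mathfrak{G}$), so the classes $\mathcal{C}_{1},\dots,\mathcal{C}_{P}$ partition $[n]$; taking $\mathfrak{p}:=\{q\in[P]:\,\mathcal{C}_{q}\subseteq\mathcal{H}\}$ then yields $\mathcal{H}_{1}\Delta\mathcal{H}_{2}=\mathcal{H}=\bigcup_{q\in\mathfrak{p}}\mathcal{C}_{q}$.

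I expect the main obstacle to be the bookkeeping in the second paragraph, namely verifying that each elementary move of a path genuinely corresponds to a single exchange both of whose bases lie in $\mathfrak{G}$ (so that the parity condition applies at every step), and that the start/end conventions of Definition \ref{def: path connection relation} match the correct pair of $[n]$-indices being connected. Once this identification is secured, the remainder is the purely combinatorial observation that a subset of $[n]$ meeting every basis of the matroid $\mathfrak{G}$ in an even number of points must be a union of connectivity classes of $\mathfrak{G}$.
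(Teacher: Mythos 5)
Your overall strategy coincides with the paper's: pass to the symmetric difference $\mathcal{H}:=\mathcal{H}_{1}\Delta\mathcal{H}_{2}$, translate the hypothesis into the parity condition $\#(\mathcal{H}\cap\mathcal{I})\equiv 0\pmod 2$ for all $\mathcal{I}\in\mathfrak{G}$, note that a single exchange between two members of $\mathfrak{G}$ then forces the two exchanged indices to agree on membership in $\mathcal{H}$, and propagate this along paths in $\mathbf{X}(\mathcal{V})$. Your first two steps are correct, and the explicit parity reformulation is in fact slightly cleaner than the paper's reduction of the pair $(\mathcal{H}_{1},\mathcal{H}_{2})$ to $(\mathcal{H}_{1}\Delta\mathcal{H}_{2},\emptyset)$ via the action of $\boldsymbol{\ensuremath{\sigma}}_{\mathcal{H}_{2}}$.

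The one place where the written argument does not close is exactly the obstacle you flagged: the propagation step. The elementary links you extract from a path are a column move, giving $\nu_{i_{T}}\in\mathcal{H}\Leftrightarrow\nu_{i_{T+1}}\in\mathcal{H}$, and a row move, giving $\alpha_{T}\in\mathcal{H}\Leftrightarrow\alpha_{T+1}\in\mathcal{H}$. These two chains never meet: they link pivots to pivots and non-pivots to non-pivots, so their concatenation does not yield $\alpha\in\mathcal{H}\Leftrightarrow\beta\in\mathcal{H}$ when $\alpha\in\mathcal{V}$ and $\beta\notin\mathcal{V}$ — already the one-element path $(\nu_{i},\alpha)$ has no moves at all, yet it is precisely what puts a pivot and a non-pivot in the same class $\mathcal{C}_{q}$. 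The missing bridge is supplied by the entries rather than the moves: a non-vanishing entry $(i,\alpha)$ of $\mathbf{X}(\mathcal{V})$ means $\mathcal{V}_{\alpha}^{\nu_{i}}\in\mathfrak{G}$, and comparing it with $\mathcal{V}\in\mathfrak{G}$ is itself a single exchange, so your own elementary step gives $\nu_{i}\in\mathcal{H}\Leftrightarrow\alpha\in\mathcal{H}$. Chaining this relation over the consecutive entries of a path, which share one index each, links the start index to the end index in every case; this entry-level link is exactly the one the paper's proof uses ($\nu_{i_{T}}\in\mathcal{H}_{1}\Delta\mathcal{H}_{2}\Leftrightarrow\gamma_{T}\in\mathcal{H}_{1}\Delta\mathcal{H}_{2}$ for each visited pair). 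With that one observation inserted, your argument is complete and coincides with the paper's.
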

\begin{proof}
First note that the condition (\ref{eq: hp equal signs minors, +})
holds for $\mathbf{A}$ if and only if it holds for $\boldsymbol{\ensuremath{\sigma}}_{\mathcal{H}_{2}}(\mathbf{A})$:
indeed, for all $\mathcal{S}\subseteq[n]$, $\mathbf{A}$ and $\boldsymbol{\ensuremath{\sigma}}_{\mathcal{S}}(\mathbf{A})$
have the same set of non-vanishing maximal minors and, in particular,
the action of $\boldsymbol{\ensuremath{\sigma}}_{\mathcal{H}_{2}}$ does
not affect the relation $\rightarrow_{\mathcal{V}}$ relative to the
pivot set $\mathcal{V}$. In the substitution $\mathbf{A}\mapsto\boldsymbol{\ensuremath{\sigma}}_{\mathcal{H}_{2}}(\mathbf{A})$,
the additional signs coming from $\boldsymbol{\ensuremath{\sigma}}_{\alpha}$,
$\alpha\in\mathcal{H}_{2}$, appear on both sides of (\ref{eq: hp equal signs minors, +}),
then 
\begin{eqnarray}
\boldsymbol{\ensuremath{\sigma}}_{\mathcal{H}_{1}}(\mathbf{A}) & \mapsto & \boldsymbol{\ensuremath{\sigma}}_{\mathcal{H}_{1}}\left(\boldsymbol{\ensuremath{\sigma}}_{\mathcal{H}_{2}}(\mathbf{A})\right)=(\boldsymbol{\ensuremath{\sigma}}_{\mathcal{H}_{1}\cap\mathcal{H}_{2}})^{2}\left(\boldsymbol{\ensuremath{\sigma}}_{\mathcal{H}_{1}\Delta\mathcal{H}_{2}}(\mathbf{A})\right)=\boldsymbol{\ensuremath{\sigma}}_{\mathcal{H}_{1}\Delta\mathcal{H}_{2}}(\mathbf{A}),\nonumber \\
\boldsymbol{\ensuremath{\sigma}}_{\mathcal{H}_{2}}(\mathbf{A}) & \mapsto & (\boldsymbol{\ensuremath{\sigma}}_{\mathcal{H}_{2}})^{2}(\mathbf{A})=\mathbf{A}\label{eq: switching to disentangle}
\end{eqnarray}
since $\boldsymbol{\ensuremath{\sigma}}_{\mathcal{H}_{1}\cap\mathcal{H}_{2}}^{2}=\boldsymbol{\ensuremath{\sigma}}_{\mathcal{H}_{2}}^{2}=\idd_{n}$.
So (\ref{eq: hp equal signs minors, +}) holds for the pair $(\mathcal{H}_{1},\mathcal{H}_{2})$
if and only if it holds for $(\mathcal{H}_{1}\Delta\mathcal{H}_{2},\emptyset)$.

Take any class $\mathcal{C}_{q}$, $q\in[P]$, and two elements of
$\alpha,\beta\in\mathcal{C}_{q}$. Then, there exists a path $\Phi_{\mathcal{V}}(\alpha\rightarrow\beta)$
in $\mathbf{X}(\mathcal{V})$, which consists of a chain of pairs
$(\nu_{i_{T}},\gamma_{T})$, $\nu_{i_{T}},\gamma_{T}\in\mathcal{C}_{q}$,
associated with subsets $\mathcal{V}_{\gamma_{T}}^{\nu_{i_{T}}}\in\mathfrak{G}$.
If the condition (\ref{eq: hp equal signs minors, +}) holds for such
subsets, one gets 
\[
\nu_{i_{T}}\in\mathcal{H}_{1}\Delta\mathcal{H}_{2}\Leftrightarrow\gamma_{T}\in\mathcal{H}_{1}\Delta\mathcal{H}_{2}
\]
for all $T$. The concatenation of all these equivalences for all
the pairs in the path gives the implication 
\begin{equation}
(\ref{eq: hp equal signs minors, +})\Rightarrow(\alpha\in\mathcal{H}_{1}\Delta\mathcal{H}_{2}\Leftrightarrow\beta\in\mathcal{H}_{1}\Delta\mathcal{H}_{2})\label{eq: clustering equivalence classes same signature, +, bis}
\end{equation}
for all $q\in[P]$ and $\alpha,\beta\in\mathcal{C}_{q}$, which is
equivalent to (\ref{eq: clustering equivalence classes same signature, +}).
\end{proof}

Thus the redundancy in the representation of signatures by subsets
of $[n]$ is due to the elements of 
\begin{equation}
\left\{ \bigcup_{q\in\mathfrak{p}}\mathcal{C}_{q},\,\mathfrak{p}\subseteq[P]\right\} .\label{eq: family of conjugated signatures}
\end{equation}
Note that $\bigcup_{q\in\mathfrak{p}}\mathcal{C}_{q}$ are pairwise
distinct for different choices of $\mathfrak{p}$ since $(\mathcal{C}_{1},\dots,\mathcal{C}_{P})$
is a partition. This still holds for the elements $\mathcal{H}_{2}$
in 
\begin{equation}
\left\{ \mathcal{H}_{1}\Delta\left(\bigcup_{q\in\mathfrak{p}}\mathcal{C}_{q}\right),\,\mathfrak{p}\subseteq[P]\right\} .\label{eq: family of conjugated signatures translated}
\end{equation}
satisfying (\ref{eq: clustering equivalence classes same signature, +}),
since the symmetric difference is invertible and, hence, the mapping
$\bigcup_{q\in\mathfrak{p}}\mathcal{C}_{q}\mapsto\mathcal{H}_{1}\Delta\left(\bigcup_{q\in\mathfrak{p}}\mathcal{C}_{q}\right)$
is bijective. 
\begin{prop}
\label{prop: number row/column configurations } The number of distinct
signatures obtained from sign choices (\ref{eq: sign switch operator, A})
for columns and (\ref{eq: row normalization}) for a row is $2^{n+1-P}$,
where $P$ is the number of classes of $\rightarrow_{\mathcal{V}}$
associated with $\mathbf{A}$. 
\end{prop}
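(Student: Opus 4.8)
The plan is to count the $2^{n+1}$ available sign choices and quotient by those that leave the signature trivial. A column flip $\boldsymbol{\ensuremath{\sigma}}_{\mathcal{S}}$ of (\ref{eq: sign switch operator, A}) multiplies $\Delta_{\mathbf{A}}(\mathcal{I})$ by $(-1)^{\#(\mathcal{S}\cap\mathcal{I})}$, while the row sign $R$ of (\ref{eq: row normalization}) contributes an overall factor, so the induced signature is $\Sigma(\mathcal{I})=R\cdot(-1)^{\#(\mathcal{S}\cap\mathcal{I})}$. Writing a sign choice additively as a pair $(\mathcal{S},R)\in\mathbb{F}_{2}^{n}\times\mathbb{F}_{2}$, the map $(\mathcal{S},R)\mapsto\Sigma$ is a homomorphism into the group of signatures under pointwise multiplication, because $\#((\mathcal{S}_{1}\Delta\mathcal{S}_{2})\cap\mathcal{I})\equiv\#(\mathcal{S}_{1}\cap\mathcal{I})+\#(\mathcal{S}_{2}\cap\mathcal{I})\pmod{2}$. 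Hence the number of distinct signatures equals $2^{n+1}/|\ker|$. A pair lies in $\ker$ precisely when $(-1)^{\#(\mathcal{S}\cap\mathcal{I})}=R$ for every $\mathcal{I}\in\mathfrak{G}$, i.e.\ when $\#(\mathcal{S}\cap\mathcal{I})$ has a parity independent of $\mathcal{I}$ (which then forces the unique admissible $R$); so $|\ker|=|V|$ with $V:=\{\mathcal{S}\subseteq[n]:\#(\mathcal{S}\cap\mathcal{I})\text{ has constant parity over }\mathcal{I}\in\mathfrak{G}\}$. Thus everything reduces to proving $|V|=2^{P}$.

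I would show that $V$ coincides with the family (\ref{eq: family of conjugated signatures}) of unions $\bigcup_{q\in\mathfrak{p}}\mathcal{C}_{q}$, which has exactly $2^{P}$ elements since $(\mathcal{C}_{1},\dots,\mathcal{C}_{P})$ is a partition of $[n]$. For the inclusion in which $V$ is contained in that family (a parity refinement of Lemma \ref{lem: cluster classes}), suppose $\mathcal{S}\in V$ split some class, containing $\alpha\in\mathcal{C}_{q}$ and omitting $\beta\in\mathcal{C}_{q}$. By Definition \ref{def: path connection relation} there is a path in $\mathbf{X}(\mathcal{V})$ joining $\alpha$ to $\beta$ whose successive edges correspond to entries $(\mathbf{X}(\mathcal{V}))_{i\gamma}\neq0$, that is to pairs $\nu_{i}\in\mathcal{V}$, $\gamma\notin\mathcal{V}$ with $\mathcal{V}_{\gamma}^{\nu_{i}}\in\mathfrak{G}$, and whose visited indices alternate between pivots and non-pivots. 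Since membership in $\mathcal{S}$ differs at the two endpoints, it must flip across one such edge $\{\nu_{i},\gamma\}$; but then $\mathcal{V}$ and $\mathcal{V}_{\gamma}^{\nu_{i}}$ both lie in $\mathfrak{G}$ and satisfy $\#(\mathcal{S}\cap\mathcal{V}_{\gamma}^{\nu_{i}})=\#(\mathcal{S}\cap\mathcal{V})\pm1$, contradicting constant parity. Hence $\mathcal{S}$ is a union of classes. For the reverse inclusion it suffices that each $\mathcal{C}_{q}$ itself lies in $V$: the classes of $\rightarrow_{\mathcal{V}}$ are exactly the connected components (separators) of the matroid $\mathfrak{G}$, so every basis $\mathcal{I}\in\mathfrak{G}$ meets $\mathcal{C}_{q}$ in precisely $\mathrm{rank}(\mathfrak{G}|_{\mathcal{C}_{q}})$ elements; therefore $\#(\mathcal{I}\cap\mathcal{C}_{q})$ is constant, and any union of classes inherits constant, hence constant-parity, intersection.

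Combining the two inclusions gives $|V|=2^{P}$ and therefore $2^{n+1}/|V|=2^{n+1-P}$ distinct signatures, as claimed. The main obstacle is the reverse inclusion, and more precisely the basis-independence of the class partition: one needs that a single exchange with $\mathcal{I},\mathcal{I}_{\beta}^{\alpha}\in\mathfrak{G}$ forces $\alpha$ and $\beta$ into one class of $\rightarrow_{\mathcal{V}}$. I would secure this by identifying the classes of Definition \ref{def: path connection relation} with the matroid components of $\mathfrak{G}$ (the connected components of the basis-exchange graph agree with the matroid's separators, independently of the chosen basis). As an alternative that stays closer to the machinery already developed, one can instead connect any two bases by single exchanges using Lemma \ref{lem: chain of non-vanishing} and propagate the parity of $\#(\mathcal{C}_{q}\cap\cdot)$ one step at a time, the delicate point being, again, that each admissible exchange joins two elements lying in the same class.
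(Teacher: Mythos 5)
Your proposal is correct, and its combinatorial core is the same as the paper's: everything hinges on (i) showing that a column set inducing a constant signature must be a union of classes of $\rightarrow_{\mathcal{V}}$, and (ii) showing that $\#(\mathcal{I}\cap\mathcal{C}_{q})$ is constant over $\mathcal{I}\in\mathfrak{G}$, so that every union of classes does induce a constant signature. What differs is the counting mechanism. The paper absorbs the row sign $R$ into an ancillary column $0$ via the augmented matrix $\mathbf{\ensuremath{\mathring{A}}}$, invokes Lemma \ref{lem: cluster classes} to confine the redundancy to the family (\ref{eq: family of conjugated signatures translated}) of $2^{P+1}$ unions of the extended classes, and then argues via the parity factor $\prod_{q}(-1)^{k_{q}}$ and the odd class $\{0\}$ that exactly half of these give the same signature, yielding fibres of size $2^{P}$. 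You instead observe that $(\mathcal{S},R)\mapsto\Sigma$ is a homomorphism $\mathbb{F}_{2}^{n}\times\mathbb{F}_{2}\to\{\pm1\}^{\mathfrak{G}}$ and compute its kernel directly; this collapses the paper's ``half give $+\Sigma$, half give $-\Sigma$'' bookkeeping into the single identity $|\ker|=|V|=2^{P}$, which is arguably cleaner. Your proof of the inclusion $V\subseteq\{\bigcup_{q\in\mathfrak{p}}\mathcal{C}_{q}\}$ by flipping parity across a single edge of $\mathbf{X}(\mathcal{V})$ is a direct (and slightly sharper, since it covers the ``opposite signature'' case in one stroke) reworking of Lemma \ref{lem: cluster classes}. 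For the reverse inclusion you correctly isolate the delicate point --- that $\#(\mathcal{I}\cap\mathcal{C}_{q})=k_{q}$ for all $\mathcal{I}\in\mathfrak{G}$ --- and offer the standard matroid fact that the components of the basis-exchange graph at $\mathcal{V}$ are the separators of $\mathfrak{G}$; if you prefer not to import that, your fallback is exactly what the paper does: the chain of (\ref{eq: representatives in pivot set}) produces, via the three-term Pl\"{u}cker relations, a class-preserving bijection $\mathcal{I}\setminus\mathcal{V}\to\mathcal{V}\setminus\mathcal{I}$ with $\mathcal{V}_{\alpha}^{\nu(\alpha)}\in\mathfrak{G}$, which gives the constant intersection count without needing the general claim that every admissible exchange pair lies in one class. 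Either route closes the argument, so the proof stands.
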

\begin{proof}
For each subset $\mathcal{I}\in\mathfrak{G}$ consider a map constructed
as in (\ref{eq: representatives in pivot set}) that associates $\nu(\alpha)\in\mathcal{I}\setminus\mathcal{V}$
to a unique $\alpha\in\mathcal{I}\setminus\mathcal{V}$ so that $\mathcal{V}_{\alpha}^{\nu(\alpha)}\in\mathfrak{G}$.
This also implies $\alpha\rightarrow_{\mathcal{V}}\nu(\alpha)$. Thus,
the non-vanishing minors of $\mathbf{A}$ correspond to subsets which
intersect all the classes $\mathcal{C}_{1},\dots,\mathcal{C}_{P}$
and, in particular, 
\begin{equation}
\#(\mathcal{I}\cap\mathcal{C}_{q})=\#(\mathcal{V}\cap\mathcal{C}_{q})=:k_{q},\quad q\in[P]\label{eq: constant intersection with classes}
\end{equation}
does not depend on $\mathcal{I}\in\mathfrak{G}$. Then, the involutions
$\Delta_{q}$ defined by 
\begin{equation}
\mathcal{H}\mapsto\Delta_{q}(\mathcal{H}):=\mathcal{H}\Delta\mathcal{C}_{q},\quad\mathcal{H}\subseteq[n],\,q\in[P]\label{eq: involutions for classes}
\end{equation}
act as follows 
\begin{eqnarray}
\mathrm{sign}\left[\Delta\left(\boldsymbol{\ensuremath{\sigma}}_{\Delta_{q}\mathcal{H}}(\mathbf{A});\mathcal{I}\right)\right] & = & \mathrm{sign}\left[\Delta\left(\boldsymbol{\ensuremath{\sigma}}_{\mathcal{C}_{q}}(\boldsymbol{\ensuremath{\sigma}}_{\boldsymbol{\ensuremath{\sigma}}_{\mathcal{H}}}(\mathbf{A}));\mathcal{I}\right)\right]\nonumber \\
 & = & (-1)^{k_{q}}\cdot\mathrm{sign}\left[\Delta\left(\boldsymbol{\ensuremath{\sigma}}_{\boldsymbol{\ensuremath{\sigma}}_{\mathcal{H}}}(\mathbf{A});\mathcal{I}\right)\right].\label{eq: parity of classes}
\end{eqnarray}
Using the partition $(\mathcal{C}_{1},\dots,\mathcal{C}_{P})$, one
can uniquely express $\mathcal{H}\subseteq[n]$ in terms of its components
$\mathcal{H}_{q}:=\mathcal{H}\cap\mathcal{C}_{q}$ and get, for each
$\mathfrak{p}:=\{q_{1},\dots,q_{T}\}\subseteq[P]$, the following
equalities 
\begin{eqnarray}
\mathcal{H}\Delta\left(\bigcup_{q\in\mathfrak{p}}\mathcal{C}_{q}\right) & = & \left(\bigcup_{s\in[P]\setminus\mathfrak{p}}\mathcal{H}_{q}\right)\cup\left(\bigcup_{q\in\mathfrak{p}}\mathcal{C}_{q}\setminus\mathcal{H}_{q}\right)\nonumber \\
 & = & \Delta_{q_{1}}\circ\Delta_{q_{2}}\circ\cdots\circ\Delta_{q_{T}}(\mathcal{H}).\label{eq: composition of involutions}
\end{eqnarray}
Hence, if $\mathcal{H}_{2}=\mathcal{H}_{1}\Delta\left(\bigcup_{q\in\mathfrak{p}}\mathcal{C}_{q}\right)$
(as in (\ref{eq: clustering equivalence classes same signature, +})),
then from (\ref{eq: parity of classes}) and (\ref{eq: composition of involutions})
one gets 
\begin{equation}
\mathrm{sign}\left[\Delta\left(\boldsymbol{\ensuremath{\sigma}}_{\mathcal{H}_{2}}(\mathbf{A});\mathcal{I}\right)\right]=\left(\prod_{q\in\mathfrak{p}}(-1)^{k_{q}}\right)\cdot\mathrm{sign}\left[\Delta\left(\boldsymbol{\ensuremath{\sigma}}_{\mathcal{H}_{1}}(\mathbf{A});\mathcal{I}\right)\right].\label{eq: parity classes and composition involutions}
\end{equation}
Now choose any ancillary soliton parameter $\kappa_{0}\notin\{\kappa_{1},\dots,\kappa_{n}\}$,
introduce the matrix 
\begin{equation}
\mathbf{\ensuremath{\mathring{A}}}:=\left(1\right)\oplus\mathbf{A}=\left(\begin{array}{cc}
1 & \vec{0}_{n}^{T}\\
\vec{0}_{k} & \mathbf{A}
\end{array}\right).\label{eq: extended coefficient matrix, row to column}
\end{equation}
and denote the new column by the index $0$. There exists a bijection
between the non-vanishing minors of $\mathbf{A}$ and those of $\mathbf{\ensuremath{\mathring{A}}}$,
that is $\mathcal{I}\mapsto\{0\}\cup\mathcal{I}$. Any choice of signs
$\boldsymbol{\ensuremath{\sigma}}_{\mathcal{H}}$ for the columns of $\mathbf{A}$,
$\mathcal{H}\subseteq[n]$, can be extended to a choice for $\mathbf{\ensuremath{\mathring{A}}}$
as $\boldsymbol{\ensuremath{\mathring{\sigma}}}_{\mathcal{H}}:=(1)\oplus\boldsymbol{\ensuremath{\sigma}}_{\mathcal{H}}$.
Furthermore, the choice $R=-1$ in (\ref{eq: row normalization})
is restated as a new signature for columns only, i.e., $(-1)\oplus\idd_{n}=:\boldsymbol{\ensuremath{\mathring{\sigma}}}_{\{0\}}$.
Hence this equivalent model describes the original one using only
column operations, up to a common multiplicative factor for the terms
$\Lambda_{\mathcal{I}}(\boldsymbol{x})$. The pivot set for $\mathbf{\ensuremath{\mathring{A}}}$
is $\mathring{\mathcal{V}}:=\{0\}\cup\mathcal{V}$, and this extends
the relation $\rightarrow_{\mathcal{V}}$ to the equivalence $\rightarrow_{\mathring{\mathcal{V}}}$
whose classes are $\mathcal{C}_{1},\dots,\mathcal{C}_{P}$ and $\{0\}$,
since $\mathring{\mathcal{V}}_{\alpha}^{0}\in\mathcal{P}_{k}[n]\setminus\mathfrak{G}$
for all $\alpha\in[n]$.

So there are $2^{n+1}$ choices of signs for the columns of $\mathbf{\ensuremath{\mathring{A}}}$
and $P+1$ distinct equivalence classes for $\rightarrow_{\mathring{\mathcal{V}}}$.
The cardinality of one of them, i.e., $\#\{0\}=1$, is odd: hence,
from (\ref{eq: parity of classes}), the substitution $\mathfrak{p}\mapsto\{0\}\cup\mathfrak{p}$
induces the mapping 
\begin{eqnarray}
\mathrm{sign}\left[\Delta\left(\boldsymbol{\ensuremath{\sigma}}_{\mathcal{H}_{2}}(\mathbf{\ensuremath{\mathring{A}}});\mathcal{I}\right)\right] & \mapsto & \mathrm{sign}\left[\Delta\left(\boldsymbol{\ensuremath{\sigma}}_{\mathcal{H}_{2}\Delta\{0\}}(\mathbf{\ensuremath{\mathring{A}}});\mathcal{I}\right)\right]\nonumber \\
 & = & -\mathrm{sign}\left[\Delta\left(\boldsymbol{\ensuremath{\sigma}}_{\mathcal{H}_{2}}(\mathbf{\ensuremath{\mathring{A}}});\mathcal{I}\right)\right]\label{eq: row sign as column involution}
\end{eqnarray}
for any subset $\mathfrak{p}\subseteq[P]$. In conclusion, Lemma \ref{lem: cluster classes}
states that, for each $\mathcal{H}_{1}\subseteq\{0\}\cup[n]$, the
possible sets $\mathcal{H}_{2}$ satisfying (\ref{eq: hp equal signs minors, +})
lie in (\ref{eq: family of conjugated signatures translated}); each
element of this family satisfies (\ref{eq: parity classes and composition involutions})
independently on $\mathcal{I}\in\mathfrak{G}$; finally, from (\ref{eq: row sign as column involution})
it follows that each signature $\mathfrak{p}\subseteq[P]$ corresponds
to the opposite one $\{0\}\cup\mathfrak{p}$, so they occur in equal
numbers. This means that exactly half of the terms in (\ref{eq: family of conjugated signatures translated})
have the same signature of $\mathcal{H}_{1}$, while the other half
have opposite signature. Hence there are $\frac{1}{2}2^{P+1}=2^{P}$
subsets associated with a single signature, and the number of allowed
signatures is $\frac{2^{n+1}}{2^{P}}=2^{n+1-P}$. 
\end{proof}
\begin{rem}
\label{rem: classes generating projective sign vector} The previous
discussion also implies that the sets (\ref{eq: family of conjugated signatures translated})
are equipollent and pairwise disjoint, since each of them contains
all the possible combinations of subsets of $[n]$ that induce a given
signature $\Sigma$ or the opposite $-\Sigma$. 
\end{rem}
The freedom in the choice of signs generalizes free statistical amoebas
(including an additional row sign flip), which fall within the case
$P=1$. Indeed, let $\mu_{s}$ denote the number of distinct signatures
induced by the sign flip of exactly $s$ columns of $\mathbf{A}$,
$s\in[n]$, without limitations on $R\in\{\pm1\}$. Each combination
of signs for row and columns can be labelled by an element in $\mathcal{P}_{s}[n]\times\{\pm1\}$,
hence $\mu_{s}\leq2\cdot{{n} \choose {s}}$. At $s=\frac{n}{2}$,
$\mathcal{H}\in\mathcal{P}_{n/2}[n]$ produces the same signature
of its complement $[n]\setminus\mathcal{H}\in\mathcal{P}_{n/2}[n]$
for an appropriate choice of $R$, so $\mu_{n/2}\leq{{n} \choose {n/2}}$.
These bounds, along with the action of (\ref{eq: parity classes and composition involutions})
and the result in Proposition \ref{prop: number row/column configurations },
give 
\[
2^{n}=\sum_{s=0}^{n/2}\mu_{s}=\sum_{s=0}^{n/2}\frac{\mu_{s}+\mu_{n-s}}{2}\leq\sum_{s=0}^{n}{{n} \choose {s}}=2^{n}.
\]
Thus all the bounds are in fact equalities, i.e., $\mu_{s}=2\cdot{{n} \choose {s}}$
at $s\neq\frac{n}{2}$ and $\mu_{n/2}={{n} \choose {n/2}}$ .

\section{\label{sec: Levels of constrained amoebas} Levels of constrained
amoebas }

The previous discussion leads to an extension of the concept of statistical
amoeba to higher dimensional cases. Following the construction for
free statistical amoebas in \cite{AK2016b}, one can focus on the
family of functions (or the associated locus of zeros) obtained from
$\tau$ through all the combinations of $s$ sign flips for columns
at $s\in[n]$ fixed.

We can associate with each $\mathcal{S}\subseteq[n]$ the vector $v(\mathcal{S})\in\mathbb{F}_{2}^{n}$
defined as $v(\alpha)=1$ if $\alpha\in\mathcal{S}$ and $v(\alpha)=0$
otherwise. The intersection of $\mathcal{S}_{1},\mathcal{S}_{2}\in\mathcal{P}[n]$
is given by the componentwise product $v(\mathcal{S}_{1}\cap S_{2})_{\alpha}=v(\mathcal{S}_{1})_{\alpha}\cdot v(\mathcal{S}_{2})_{\alpha}$,
hence the parity of $\#(\mathcal{S}_{1}\cap\mathcal{S}_{2})$ is equal
to the dot product 
\begin{equation}
\#(\mathcal{S}_{1}\cap\mathcal{S}_{2})\cong\sum_{\alpha=1}^{n}v(\mathcal{S}_{1})_{\alpha}\cdot v(\mathcal{S}_{2})_{\alpha}=:v(\mathcal{S}_{1})\cdot v(\mathcal{S}_{2})\quad\mathrm{mod}\,2.\label{eq: parity intersection dot product}
\end{equation}
Motivated by this, we adopt the following notation 
\begin{eqnarray}
\mathcal{S}\parallel\mathcal{T} & \Leftrightarrow & \#(\mathcal{S\cap T})\cong1\quad\mathrm{mod}\,2,\\
\mathcal{S}\perp\mathcal{T} & \Leftrightarrow & \#(\mathcal{S\cap T})\cong0\quad\mathrm{mod}\,2\label{eq: parity intersection, parallel and perpendicular}
\end{eqnarray}
and $\mathcal{H}\perp\mathfrak{L}:=\left\{ \mathcal{L}\in\mathfrak{L}:\,\mathcal{H}\perp\mathcal{L}\right\} $
with $\mathcal{H}\in\mathcal{P}[n]$ and $\mathfrak{L}\subseteq\mathcal{P}[n]$.

Unlike the free statistical amoeba at $k=1$, only some strata are
visible under choices of signs allowed by determinantal/integrability
requirements. For instance, a single change of sign preserves neither
the determinantal structure (see Example \ref{exa: no 1-stratum free})
nor the solution of the KP II equation in generic situations. Hence
the $1$-stratum for the free statistical amoeba is not part of the
constrained amoeba.

Let us focus on the case $\mathfrak{G}=\mathcal{P}_{k}[n]$. For all
the choices of $\boldsymbol{\ensuremath{\sigma}}_{\mathcal{S}}$, $s\in[n]$
and $\mathcal{S}\in\mathcal{P}_{s}[n]$, the number $\Omega(n,k;s)$
of $-$ signs generated by $\mathcal{S}$ is equal to the cardinality
of $\mathcal{S}\parallel\mathcal{P}_{k}[n]$. This holds for all possible choices of $\mathcal{S}\in\mathcal{P}_{s}[n]$
by permutation symmetry. The singular locus corresponding to $\mathcal{S}$
is a subset of all the singular loci in the free $\Omega(n,k;s)$-statistical
amoeba.
\begin{prop}
If $\mathfrak{G}=\mathcal{P}_{k}[n]$, then a choice (\ref{eq: sign switch operator, A})
with $\#\mathcal{S}=s$ induces a signature with 
\begin{equation}
\hspace*{-1cm}
\Omega(n,k;s):=
\left\{\begin{array}{ccc}
\omega(n,k;s),& & m\geq k+s,\\
\frac{1-(-1)^{n-k-s}}{2}{{n}\choose{k}}&+(-1)^{n-k-s}\omega(n,n-k;n-s), & m<k+s
\end{array}\right.\label{eq: accessible strata}
\end{equation}
where 
\begin{equation}
\omega(n,k;s):=\frac{1}{2}{{n} \choose {k}}-\frac{1}{2}{{n-s} \choose {k}}\cdot_{2}F_{1}(-s,-k;n-s-k+1;-1).\label{eq: first branch omega}
\end{equation}
\end{prop}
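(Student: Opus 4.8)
The plan is to identify $\Omega(n,k;s)$ with an explicit binomial sum and then recognise the relevant alternating sum as a Gauss hypergeometric value, treating separately the regime in which that hypergeometric degenerates. By the observation preceding the statement, $\Omega(n,k;s)$ equals the cardinality of $\mathcal{S}\parallel\mathcal{P}_{k}[n]$, i.e.\ the number of $\mathcal{I}\in\mathcal{P}_{k}[n]$ meeting a fixed $s$-subset $\mathcal{S}$ in an odd number of indices. Splitting according to $j:=\#(\mathcal{I}\cap\mathcal{S})$ gives
\[
\Omega(n,k;s)=\sum_{\substack{0\le j\le s\\ j\ \mathrm{odd}}}{{s}\choose{j}}{{n-s}\choose{k-j}}.
\]
First I would combine the Vandermonde convolution $\sum_{j}{{s}\choose{j}}{{n-s}\choose{k-j}}={{n}\choose{k}}$ with the alternating sum $A:=\sum_{j}(-1)^{j}{{s}\choose{j}}{{n-s}\choose{k-j}}$ to write $\Omega(n,k;s)=\tfrac12\bigl({{n}\choose{k}}-A\bigr)$, so the whole task reduces to evaluating $A$.

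Next I would expand the Pochhammer symbols in ${}_{2}F_{1}(-s,-k;n-s-k+1;-1)=\sum_{j}\frac{(-s)_{j}(-k)_{j}}{(n-s-k+1)_{j}\,j!}(-1)^{j}$ via $(-s)_{j}=(-1)^{j}s!/(s-j)!$ and $(-k)_{j}=(-1)^{j}k!/(k-j)!$, and verify the termwise identity ${{n-s}\choose{k}}\cdot\frac{(-s)_{j}(-k)_{j}}{(n-s-k+1)_{j}\,j!}(-1)^{j}=(-1)^{j}{{s}\choose{j}}{{n-s}\choose{k-j}}$. Summing over $j$ yields $A={{n-s}\choose{k}}\cdot{}_{2}F_{1}(-s,-k;n-s-k+1;-1)$, hence $\Omega(n,k;s)=\omega(n,k;s)$. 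This identification is legitimate precisely when the lower parameter $n-s-k+1$ is a positive integer, i.e.\ when $m=n\ge k+s$, which is the first branch of (\ref{eq: accessible strata}); there every factor $(n-s-k+j)!$ in the denominator is a genuine factorial and the series terminates harmlessly at $j=\min\{s,k\}$.

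For the second branch $n<k+s$ the lower parameter satisfies $n-s-k+1\le0$, so $(n-s-k+1)_{j}$ vanishes for $j\ge s+k-n$ while the series only truncates at $j=\min\{s,k\}\ge s+k-n$; the hypergeometric form of $A$ is therefore unavailable and a complementary count is needed. Here I would pass to the complements $\mathcal{J}:=[n]\setminus\mathcal{I}\in\mathcal{P}_{n-k}[n]$ and $\mathcal{T}:=[n]\setminus\mathcal{S}\in\mathcal{P}_{n-s}[n]$, and use inclusion--exclusion $\#(\mathcal{I}\cap\mathcal{S})=k+s-n+\#(\mathcal{J}\cap\mathcal{T})$ to obtain the parity relation $\#(\mathcal{I}\cap\mathcal{S})\equiv\#(\mathcal{J}\cap\mathcal{T})+(k+s-n)\pmod 2$. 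When $n-k-s$ is even this turns an odd $\mathcal{S}$-intersection into an odd $\mathcal{T}$-intersection, whence $\Omega(n,k;s)=\omega(n,n-k;n-s)$; when $n-k-s$ is odd it corresponds instead to an even $\mathcal{T}$-intersection, whence $\Omega(n,k;s)={{n}\choose{k}}-\omega(n,n-k;n-s)$. Both cases are packaged exactly by $\tfrac{1-(-1)^{n-k-s}}{2}{{n}\choose{k}}+(-1)^{n-k-s}\omega(n,n-k;n-s)$, and $\omega(n,n-k;n-s)$ itself is evaluated by the first-branch argument since its lower parameter $s+k-n+1\ge2$ is positive.

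The main obstacle is the degeneration of ${}_{2}F_{1}$ when its lower parameter becomes a nonpositive integer: this is precisely what forces the case split, and the delicate point is the parity bookkeeping in the complementation step, where the sign $(-1)^{n-k-s}$ and the compensating ${{n}\choose{k}}$ term must be tracked exactly so that both parities of $n-k-s$ are reproduced by the single closed formula.
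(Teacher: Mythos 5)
Your proposal is correct and follows essentially the same route as the paper's proof: both reduce $\Omega(n,k;s)$ to half of the Vandermonde convolution minus half of the alternating sum, identify the latter with $\binom{n-s}{k}\,{}_2F_1(-s,-k;n-s-k+1;-1)$ when $n\geq k+s$, and handle $n<k+s$ by complementing both subsets and tracking the parity shift $\#(\mathcal{I}\cap\mathcal{S})\equiv\#(\mathcal{J}\cap\mathcal{T})+k+s-n\pmod 2$. Your explicit remark on why the hypergeometric form degenerates for a nonpositive lower parameter is a useful clarification of the case split, but the argument is the same.
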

\begin{proof}
Let $\mathcal{I}\in\mathcal{P}_{s}[n]$. The number of subsets $\mathcal{A}\in\mathcal{P}_{k}[n]$
satisfying $\mathcal{A}\parallel\mathcal{I}$ is 
\begin{eqnarray}
 &  & \sum_{a=0}^{\frac{1}{2}\min\{k,s\}}{{s} \choose {2a+1}}\cdot{{n-s} \choose {k-2a+1}}\nonumber \\
 & = & \sum_{a=0}^{s}\frac{(-1)^{a+1}+1}{2}\cdot{{s} \choose {a}}\cdot{{n-s} \choose {k-a}}\nonumber \\
 & = & \frac{1}{2}\cdot\sum_{a=0}^{k}(-1)^{a+1}\cdot{{s} \choose {a}}\cdot{{n-s} \choose {k-a}}+\frac{1}{2}\cdot\sum_{a=0}^{s}{{s} \choose {a}}\cdot{{n-s} \choose {k-a}}\label{eq: number 2-parallel k-subspaces}
\end{eqnarray}
where ${{t}\choose{w}}=0$ at $t<w$ or $w<0$. We observe that $\mathcal{A}\parallel\mathcal{S}$
is equivalent to 
\begin{eqnarray}
\#\left(([n]\setminus\mathcal{S})\cap([n]\setminus\mathcal{A})\right) & = & \#[n]\setminus(\mathcal{S}\cup\mathcal{A})\nonumber \\
& \cong & n-s-k+1\quad\mathrm{mod}\,2\label{eq: symmetry complements for computation}
\end{eqnarray}
where the principle of inclusion-exclusion has been used in the second
line. So, at even $n-s-k$ (respectively, odd $n-s-k$), the enumeration
of sets $\mathcal{A}\in\mathcal{P}_{k}[n]$ with $\mathcal{A}\parallel\mathcal{S}$
is equivalent to the enumeration of subsets $[n]\setminus\mathcal{A}\in\mathcal{P}_{n-k}[n]$
with $[n]\setminus\mathcal{A}\parallel[n]\setminus\mathcal{S}$ (respectively,
$[n]\setminus\mathcal{A}\perp[n]\setminus\mathcal{S}$). Furthermore,
exactly one holds between $n-s-k\geq0$ or $n-(n-s)-(n-k)=s+k-n>0$. 

So we can look at the situations where $n-s-k\geq0$ and derive the
required quantity in other cases by the previous observation. First
note that 
\begin{equation}
\sum_{a=0}^{s}{{s} \choose {a}}\cdot{{n-s} \choose {k-a}}={{n} \choose {k}}.\label{eq: k-subsets decomposition counting through intersection}
\end{equation}
counts the number of elements of $\mathcal{P}_{k}[n]$. For the first
summation, one has 
\begin{eqnarray}
 &  & \sum_{a=0}^{k}(-1)^{a+1}\cdot{{s} \choose {a}}\cdot{{n-s} \choose {k-a}}\nonumber \\
 & = & -{{n-s} \choose {k}}\cdot\sum_{a=0}^{k}{{k} \choose {a}}\cdot\frac{(-1)^{a}\cdot s!(n-s-k)!}{(s-a)!(n-s-k+a)!}\nonumber \\
 & = & -{{n-s} \choose {k}}\cdot_{2}F_{1}(-s,-k;n-s-k+1;-1)\label{eq: sign-weighted k-subsets}
\end{eqnarray}
where $_{2}F_{1}$ is the Gaussian hypergeometric function, and all
the equivalences are well-posed due to the condition $n-s-k\geq0$.
Adding the two contributions, we get $\Omega(n,k;s)=\omega(n,k;s)$.
From this we also find that, at $n-s-k<0$, the number of $(n-k)$-subsets
$\mathcal{B}\subseteq[n]$ with $\mathcal{B}\parallel([n]\setminus\mathcal{S})$
is $\omega(n,n-k;n-s)$, while those with $\mathcal{B}\perp([n]\setminus\mathcal{S})$
are ${{n}\choose{k}}-\omega(n,n-k;n-s)$. By (\ref{eq: symmetry complements for computation})
and subsequent observations, these two quantities respectively enumerate
the number of $s$-subsets $\mathcal{A}\subseteq[n]$ with $\mathcal{A}\parallel\mathcal{S}$
at even $n-k-s$ and at odd $n-k-s$. These results can be expressed
as in (\ref{eq: accessible strata}), which concludes the proof. 
\end{proof}

It is worth remarking that there is a duality between the dimension $k$ and the level $s$: the number of pairs $(\mathcal{H}_{1};\mathcal{H}_{2})\in\mathcal{P}_{k}[n]\times\mathcal{P}_{s}[n]$ such that $\mathcal{H}_{1}\parallel\mathcal{H}_{2}$ can be enumerated in two ways, i.e., fixing one of the two components $\mathcal{H}_{i}$, $i\in\{1,2\}$ and considering all the subsets $\mathcal{I}$ with $\mathcal{H}_{i}\parallel\mathcal{I}$. This double counting implies the following identity   
\begin{equation}
{{n}\choose{s}}\cdot\Omega(n,k;s)={{n}\choose{k}}\cdot\Omega(n,s;k).\label{eq: duality s<->k}
\end{equation}
From this, one also finds  
\begin{equation}
\Omega(n,k;s)<\frac{1}{2}{{n}\choose{k}}\Leftrightarrow\frac{{{n}\choose{s}}}{{{n}\choose{k}}}\Omega(n,k;s)<\frac{1}{2}{{n}\choose{s}}\Leftrightarrow \Omega(n,s;k)<\frac{1}{2}{{n}\choose{s}}.\label{eq: +/- amoeba duality}
\end{equation}
In this sense, the distinction between (free) amoebas and antiamoebas, as defined in Section \ref{subsec: Statistical amoebas }, is compatible with such a duality. Furthermore, when $g_{\mathcal{I}}>0$ for all $\mathcal{I}\in\mathcal{P}_{k}[n]$, the quantity $\Omega(n,s;k)$ dual to $\Omega(n,k;s)$ is related to the behaviour of the constrained amoeba at large values of $\boldsymbol{x}$, namely, to its \emph{tropical limit} \cite{AK2016b}: outside the locus where $\displaystyle{\max_{\mathcal{H}\in\mathcal{P}_{k}[n]}}\Lambda_{\mathcal{H}}(\boldsymbol{x})=:\Lambda_{\mathcal{D}}(\boldsymbol{x})$ is attained more than once, the sign of $\tau_{\mathcal{S}}(\boldsymbol{x})$ at $||\boldsymbol{x}||\rightarrow \infty$ coincides with the induced sign for this dominant term, that is $(-1)^{\#(\mathcal{S}\cap\mathcal{D})}$. So $\Omega(n,s;k)$ represents the number of subsets $\mathcal{S}\in\mathcal{D}\parallel\mathcal{P}_{s}[n]$ where $\tau_{\mathcal{S}}(\boldsymbol{x})<0$.

\section{\label{sec: Applications} Applications }

\subsection{\label{subsec: Application to complexity} Application to information
transfer via message coding}

The previous results suggest using an initial function (\ref{eq: Cauchy-Binet formula, solitons})
to encode information. Specifically, we can think at each signature
as a message encoded in a $G$-bits string, where $G:=\#\mathfrak{G}$.
Here we assume that the order of the bits in the string corresponds
to a given (e.g., lexicographical) order for the elements of $\mathcal{P}_{k}[n]$.
Analogously, we can represent (\ref{eq: Cauchy-Binet formula, solitons})
as a string with ${{n}\choose{k}}$ entries in $\{0,1,\bot\}$, where
the entries label the minors of $\mathbf{A}$ and the symbol $\bot$
is associated with vanishing minors. 

If one knows that the original function $\tau(\boldsymbol{x})$ solves the KP II equation
and receive a new function $\tau(\boldsymbol{x})-2\cdot\tau_{\Sigma}(\boldsymbol{x})$,
then the fulfilment of the KP II equation implies that a particular
choice of signs has been sent. We stress that all the signs have to
be checked to confirm that $\Sigma$ is induced by row/column operations:
indeed, in the generic case $G={{n}\choose{k}}$, a single switch of
sign converts a constrained signature to a non-constrained one, as
shown in Example \ref{exa: no 1-stratum free}. However, in the present
approach, one can check if $\Sigma$ is induced by a choice (\ref{eq: sign switch operator, A})
indirectly, i.e., without having to \textit{find} such a configuration
$\sigma_{\mathcal{S}}$ and, hence, avoiding the effort to get this
additional knowledge. 

In order to quantify the amount of information that can be acquired
through this single check, we consider the well-known Kullback-Leibler
divergence \cite{Cover2006}. After the reception of the message,
but before any additional check on the signs, one can recover
the data related to dependence relations among the columns of $\mathbf{A}$,
namely $\mathfrak{G}$, $G$ and, for any fixed $\mathcal{V}\in\mathfrak{G}$, the relation $\rightarrow_{\mathcal{V}}$ and the dimensions $k_{1},\dots,k_{P}$. At this point, the \textit{prior}
information is based on strings of bits indexed by $\mathfrak{G}\subseteq\mathcal{P}_{k}[n]$.
With no additional constraint, we can assume a prior distribution $u$
where all the $2^{G}$ $G$-bits strings have the same statistical weight
$2^{-G}$. In the generic case $G={{n}\choose{k}}$ there are $\exp\left({{n}\choose{k}}\cdot\ln2\right)$
such strings, but only $2^{n}$ of them satisfy the KP II equation
by Theorem \ref{thm: compatible choices of signs, solution KP, including vanishing}
and Proposition \ref{prop: number row/column configurations }. This
restricted family of strings associated with solitonic signatures
is the support for the \textit{posterior} distribution $u_{\mathrm{KP}}$. So the Kullback-Leibler
divergence is given by 
\begin{eqnarray}
\mathrm{D^{\star}_{KL}}(u_{\mathrm{KP}}||u) & := & \sum_{\mathcal{S}\subseteq[n]}2^{-n}\cdot\ln\left(\frac{\exp(-n\cdot\ln2)}{\exp\left(-{{n}\choose{k}}\cdot\ln2\right)}\right)\nonumber \\
& = & \ln2\cdot\left[{{n}\choose{k}}-n\right].\label{eq: KL divergence, general and generic}
\end{eqnarray}
When $\#\mathfrak{G}<{{n}\choose{k}}$, the number of distinct general
strings is $2^{G},$ while the number of those satisfying the KP II
equation is $2^{n+1-P}$ by Proposition \ref{prop: number row/column configurations }.
So one gets 
\begin{equation}
\mathrm{D^{\star}_{KL}}(u_{\mathrm{KP}}||u)=\ln2\cdot(G-n+P-1)\label{eq:eq: KL divergence, general including non-generic}
\end{equation}
which is a real non-negative quantity, hence we also find 
\begin{equation}
G+P\geq n+2.\label{eq: additional Euler-like inequality}
\end{equation}

In the previous cases, the assumptions of equiprobability for strings
in $\{\pm1\}^{\mathfrak{G}}$ and for choices of signs induced by
(\ref{eq: sign switch operator, A}) are consistent as a result of
Proposition \ref{prop: number row/column configurations }, since the
equivalence classes associated to the different signatures are equipollent (also see Remark \ref{rem: classes generating projective sign vector}).
One can get the quantities (\ref{eq: KL divergence, general and generic})-(\ref{eq:eq: KL divergence, general including non-generic})
through the choices $(R,\boldsymbol{\ensuremath{\sigma}}_{\mathcal{S}})$,
$R\in\{\pm1\}$ and $\mathcal{S}\subseteq[n]$, which reduces to the
same multiple-counting of both general and solitonic signatures. This
type of equivalence does not necessarily hold when further restrictions
affect the prior. In particular, we consider a situation where the
number $s=\#\mathcal{S}$ for allowed $\sigma_{\mathcal{S}}$ is fixed
and known, as in the discussion on statistical amoebas in \cite{AK2016b}. 

In order to explore some features of this setting, we first estimate
the information content associated with the check of the KP II equation
in the generic situation $G={{n}\choose{k}}$. Let us look at the effects
of the knowledge of a fixed value for $s$ at $s\neq\frac{n}{2}$
to avoid the occurrence of solitonic signatures related by the involutions
(\ref{eq: involutions for classes}) and, hence, to single out the
contribution of the evidence on $s$. This enables us to consider
a preliminary check before looking at the KP II equation: if neither
the received string $\Sigma$ nor its opposite $-\Sigma$ satisfy
$\#\boldsymbol{\ensuremath{\mathrm{NS}}}=\Omega(n,k;s)$, then $\Sigma$ is not obtained through
(\ref{eq: sign switch operator, A}) with $R\in\{\pm1\}$ and, hence, does not satisfy the KP II equation. Furthermore, at $\Omega(n,k;s)\neq\frac{1}{2}{{n}\choose{k}}$,
also the row sign $R$ is fixed by this preliminary check, and this
restricts the support of the uniform distributions for the prior $u$ and
the posterior $u_{\mathrm{KP}}$ to $\mathcal{P}_{\Omega(n,k;s)}\left[\mathcal{P}_{k}[n]\right]$
and $\mathcal{P}_{s}[n]$, respectively. The uncertainty on $R$ still
remains at $\Omega(n,k;s)=\frac{1}{2}{{n}\choose{k}}$ . So the Kullback-Leibler
divergence is equal to
\begin{equation}
\mathrm{D^{s}_{KL}}(u_{\mathrm{KP}}||u)=\ln\left[{{n}\choose{s}}^{-1}\cdot{{{{n}\choose{k}}}\choose{\Omega(n,k;s)}}\right]-\ln2\cdot\delta_{2\cdot\Omega(n,k;s),{{n}\choose{k}}}.\label{eq: KL divergence, constrained number of - signs and generic}
\end{equation}
The last term, which involves the cases $\Omega(n,k;s)=\frac{1}{2}{{n}\choose{k}}$,
does not appear if one also knows that only column operations can
be performed. 

We present a graphical representation of the Kullback-Leibler divergence
at $G={{n}\choose{k}}$: the behaviour of (\ref{eq: KL divergence, constrained number of - signs and generic}) at different values of $s$ is shown in Figure \ref{fig: KL divergence, unconstrained number of - signs}, and can be compared with (\ref{eq: KL divergence, general and generic}) in Figure \ref{fig: unconstrained}. The analytic continuations of both these formulas have been used.
\begin{figure}
\centering
	\begin{minipage}{0.32\textwidth}
	\includegraphics[width=0.95\textwidth]{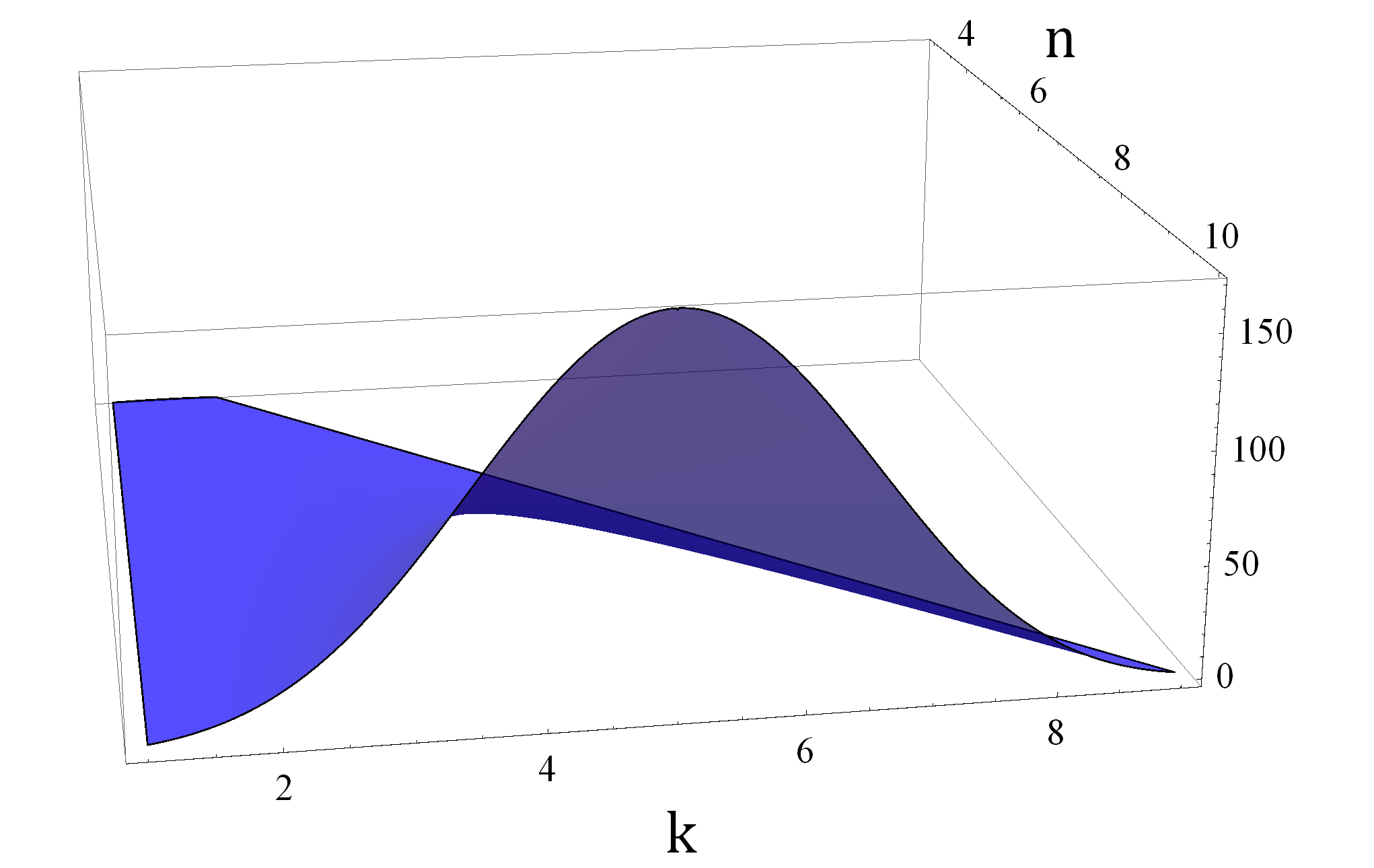}
	\subcaption{$s=1$}
	\label{fig: s=1}
	\end{minipage}
	\begin{minipage}{0.32\textwidth}
	\includegraphics[width=0.95\textwidth]{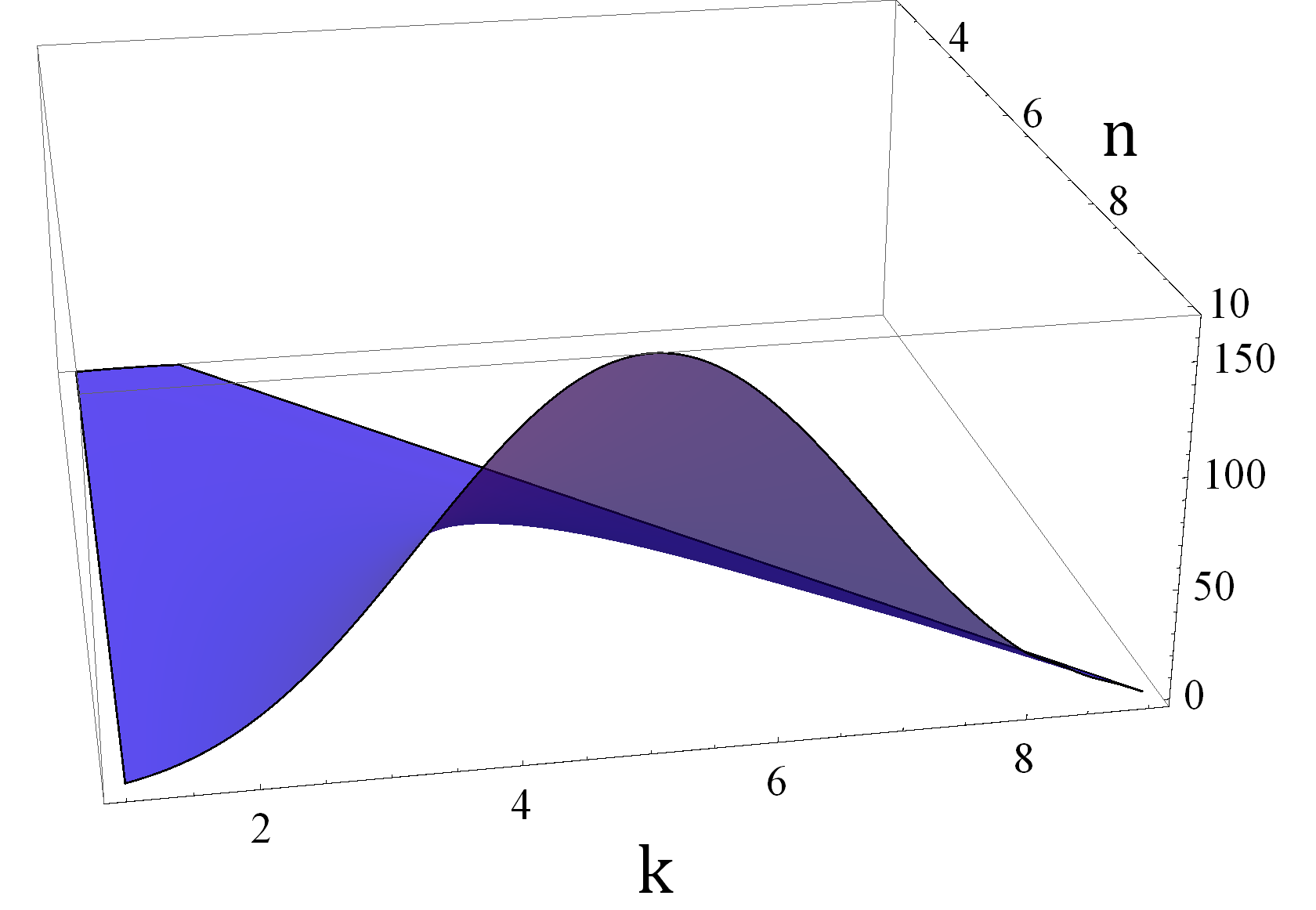}
	\subcaption{$s=2$}
	\label{fig: s=2}
	\end{minipage}
	\begin{minipage}{0.32\textwidth}
	\includegraphics[width=0.95\textwidth]{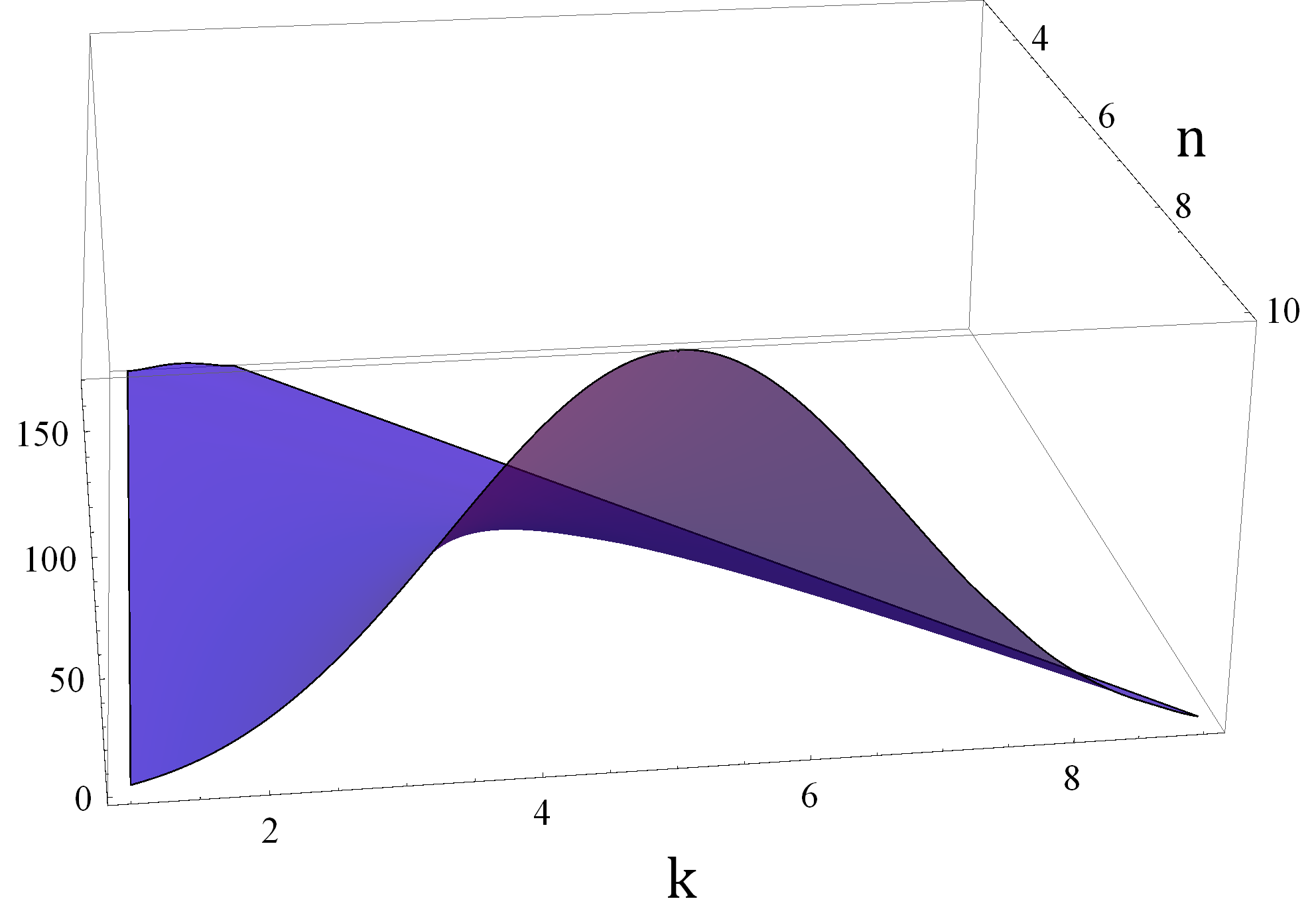}
	\subcaption{$s=3$}
	\label{fig: s=3}
	\end{minipage}
	\begin{minipage}{0.32\textwidth}
	\includegraphics[width=0.95\textwidth]{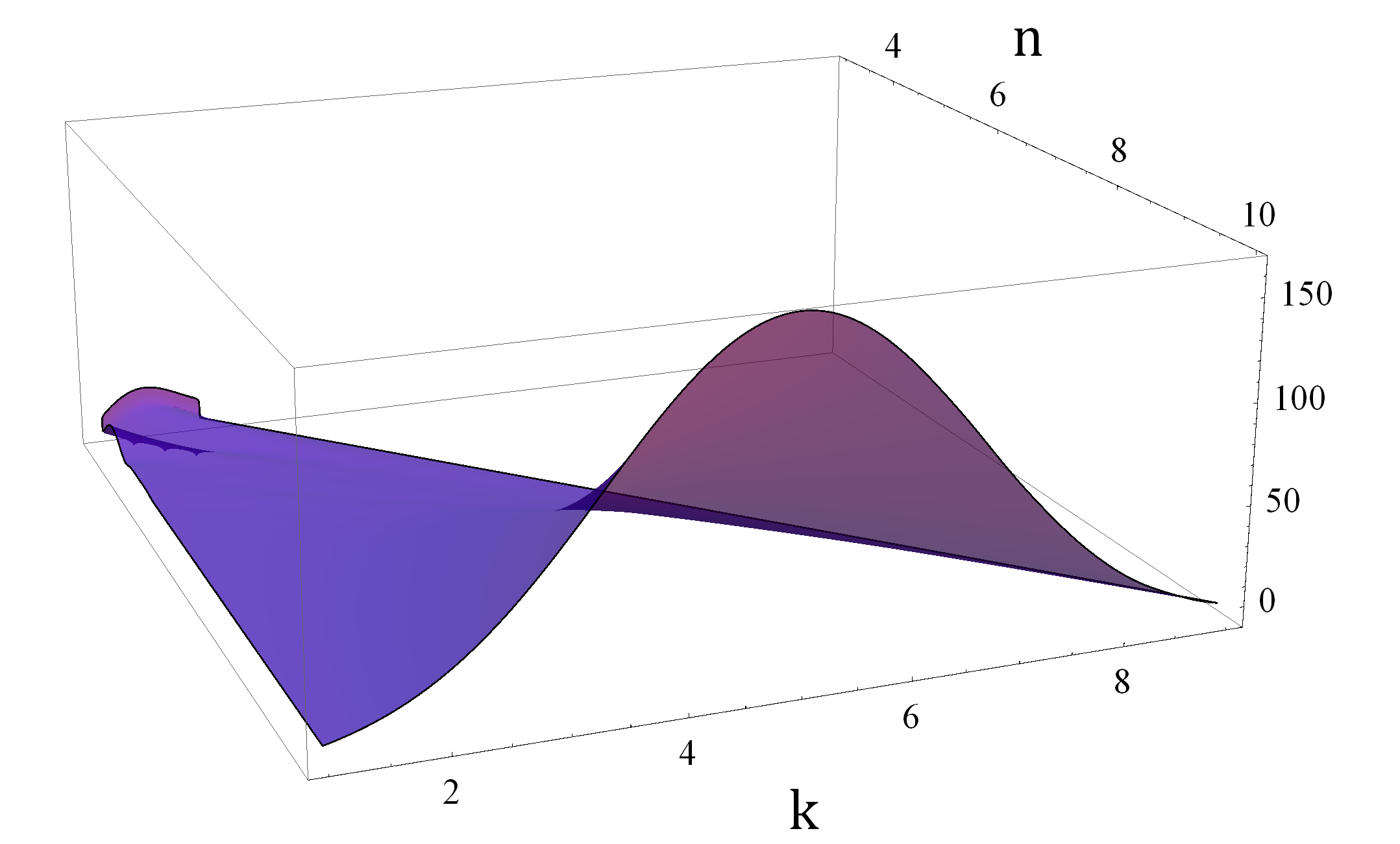}
	\subcaption{$s=4$}
	\label{fig: s=4}
	\end{minipage}
		\begin{minipage}{0.29\textwidth}
	\includegraphics[width=0.95\textwidth]{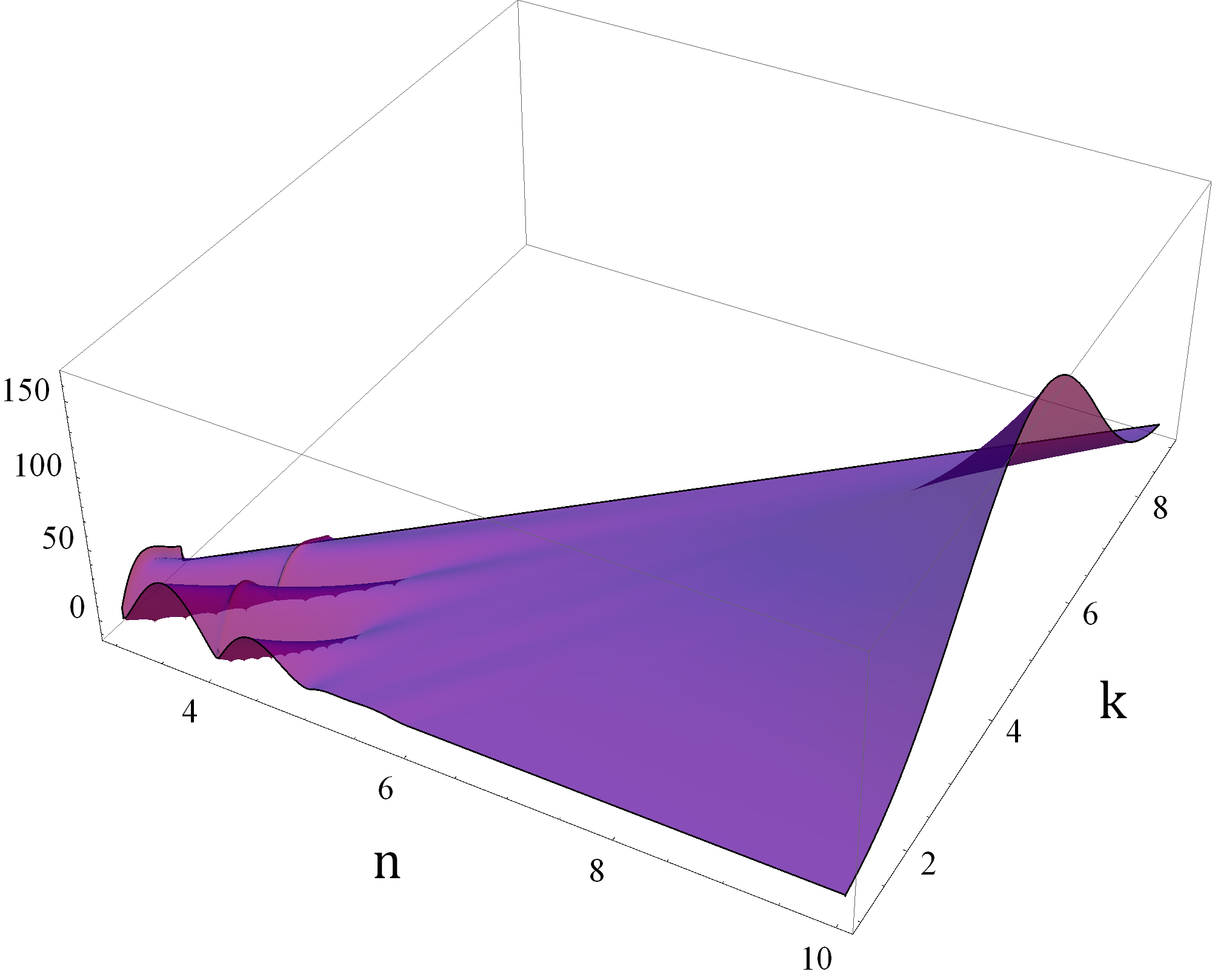}
	\subcaption{$s=5$}
	\label{fig: s=5}
	\end{minipage}
		\begin{minipage}{0.32\textwidth}
	\includegraphics[width=0.95\textwidth]{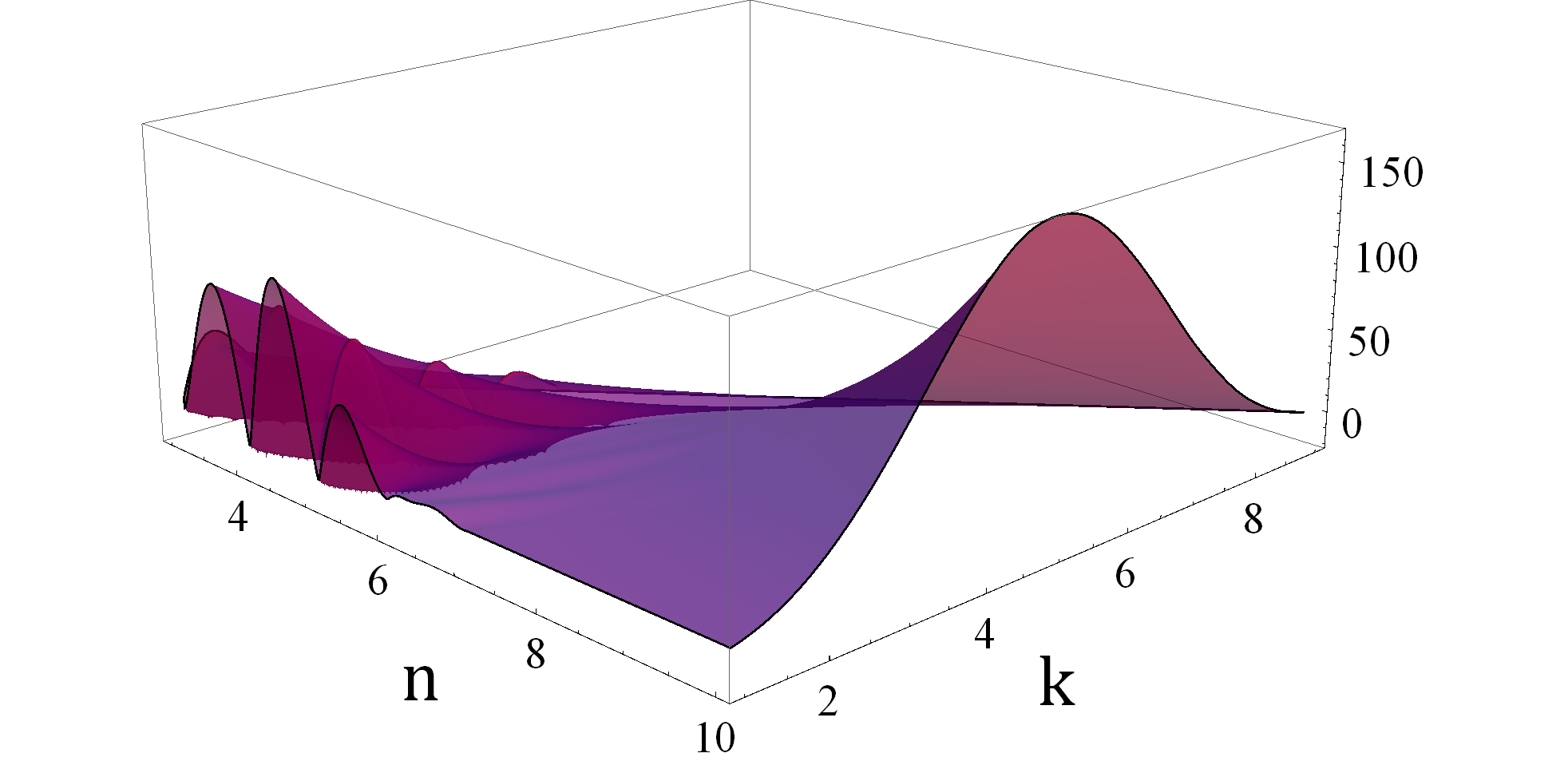}
	\subcaption{$s=6$}
	\label{fig: s=6}
	\end{minipage}
		\begin{minipage}{0.32\textwidth}
	\includegraphics[width=0.95\textwidth]{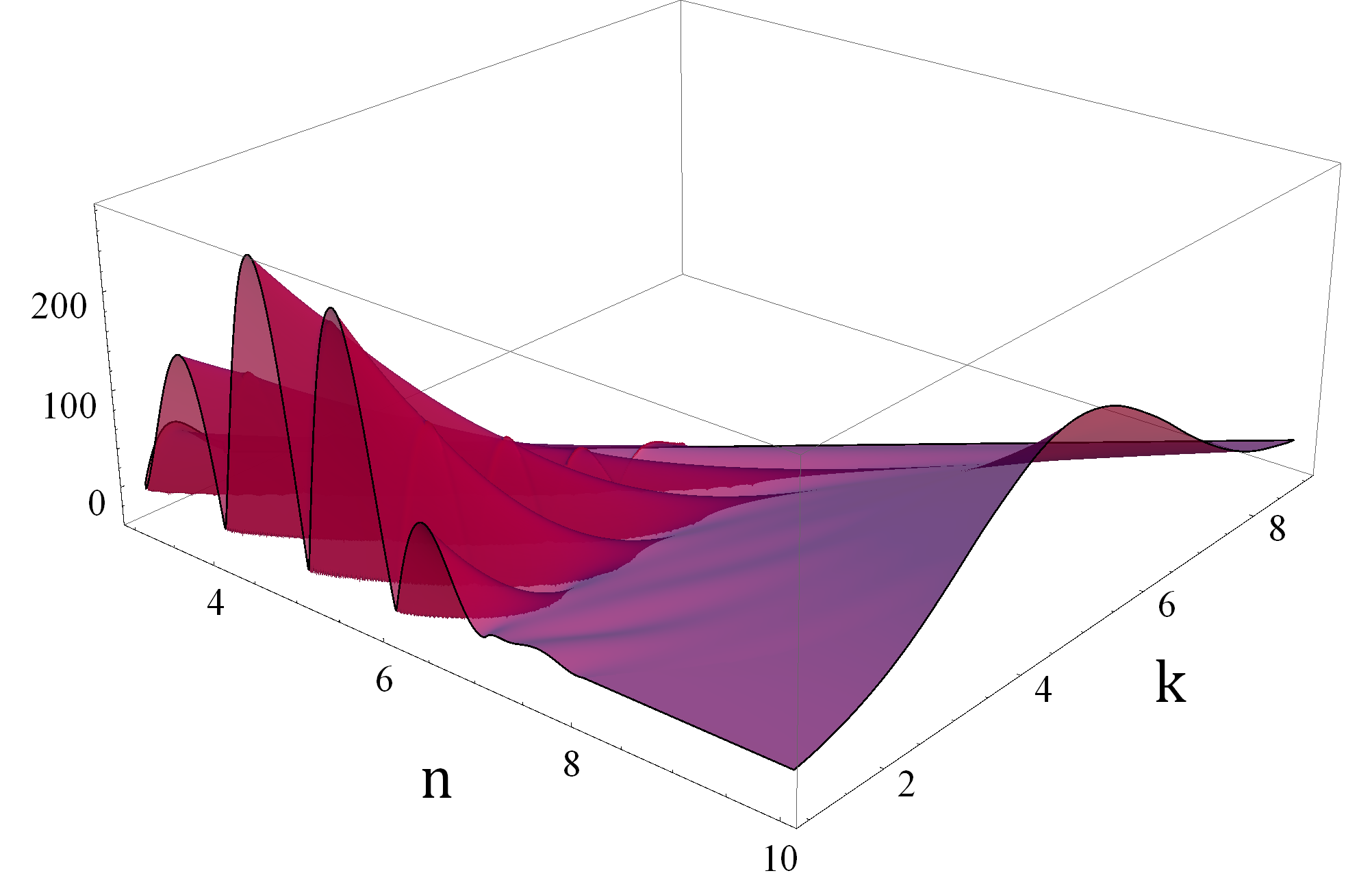}
	\subcaption{$s=7$}
	\label{fig: s=7}
	\end{minipage}
		\begin{minipage}{0.32\textwidth}
	\includegraphics[width=0.95\textwidth]{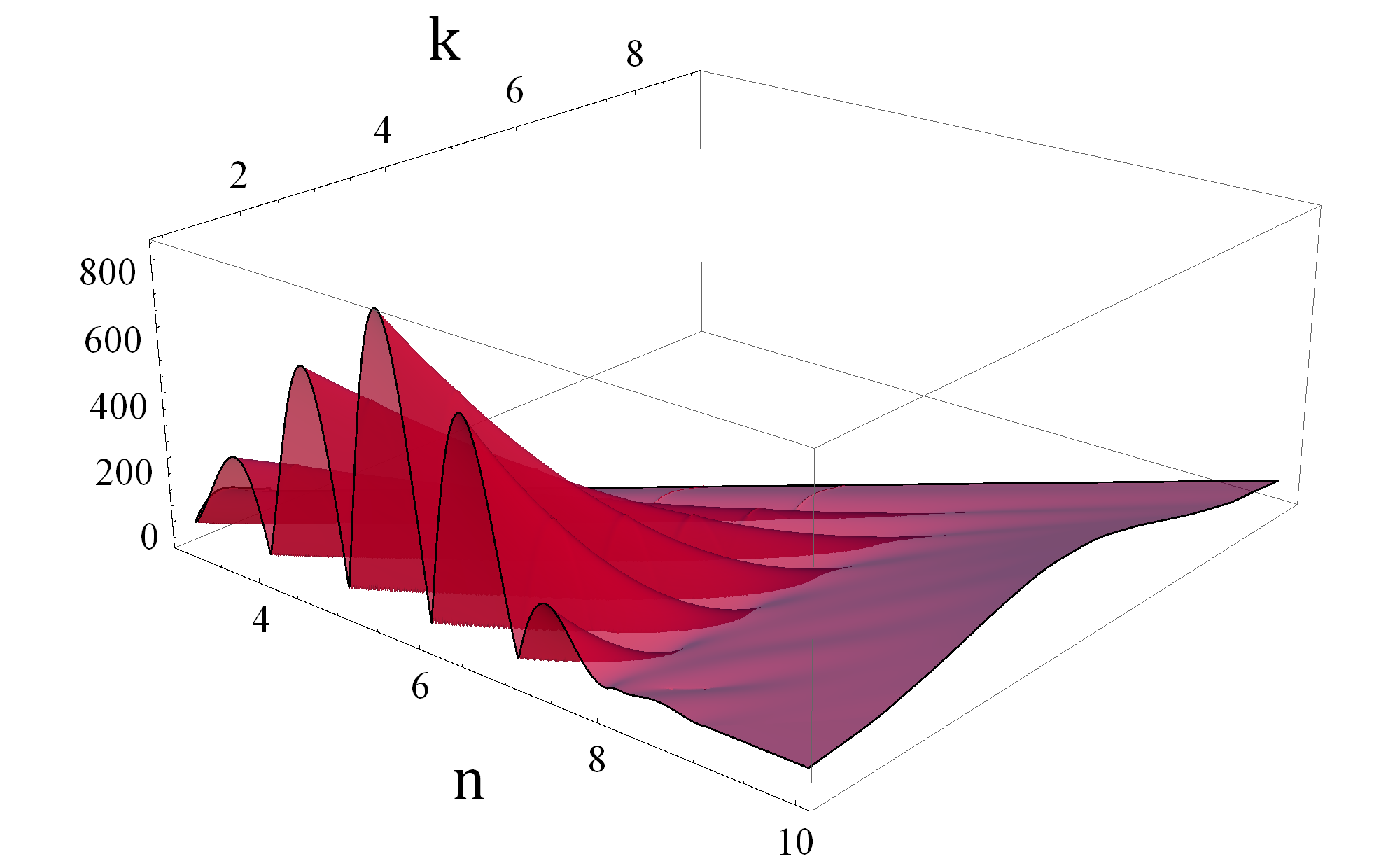}
	\subcaption{$s=8$}
	\label{fig: s=8}
	\end{minipage}
		\begin{minipage}{0.32\textwidth}
	\includegraphics[width=0.95\textwidth]{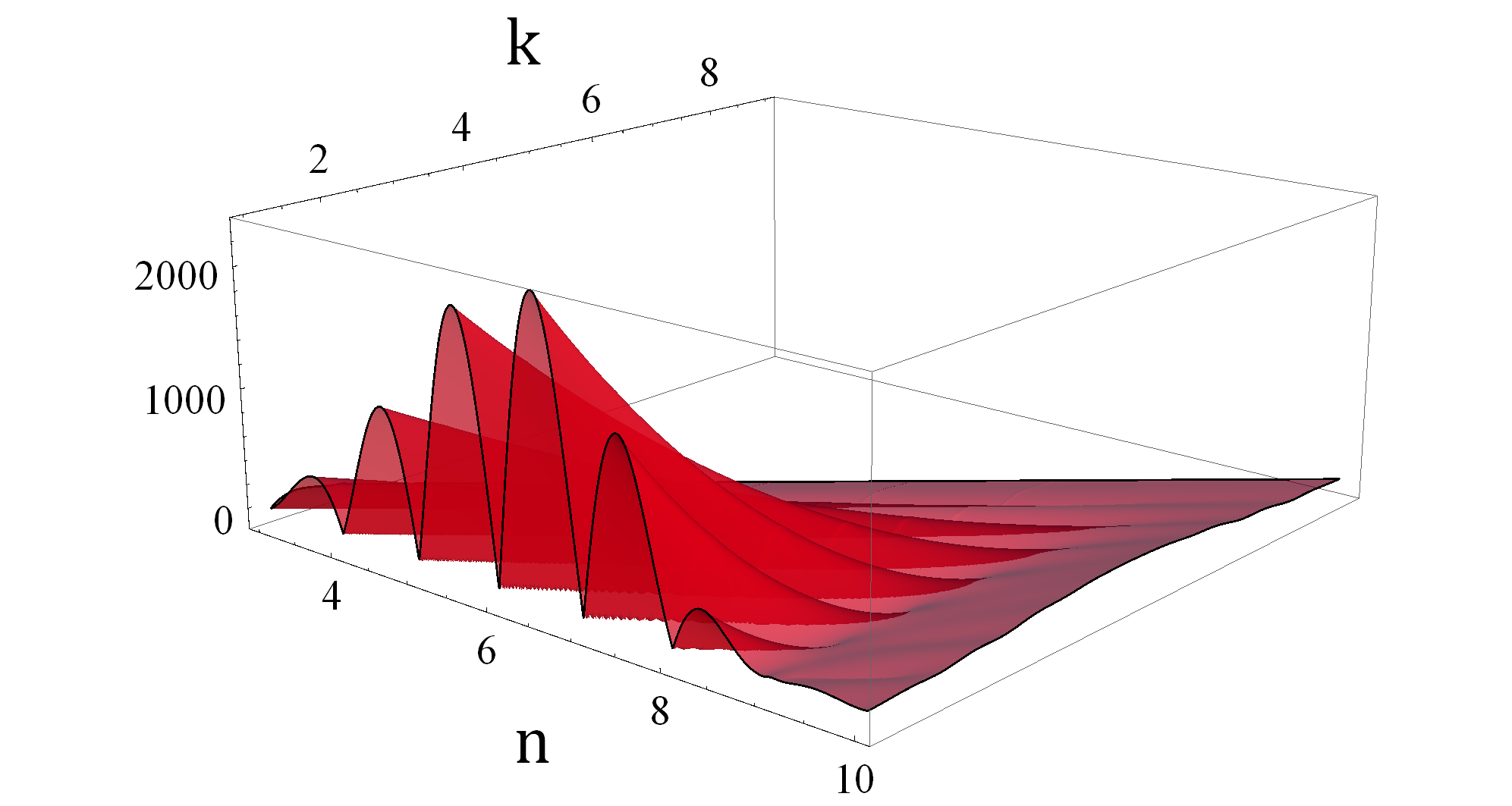}
	\subcaption{$s=9$}
	\label{fig: s=9}
	\end{minipage}
\caption{Values of $\mathrm{D^{s}_{KL}}(u_{\mathrm{KP}}||u)$ at $n\in[3,10]$ and $k\in[1,n]$ when $s$ is known.}
\label{fig: KL divergence, unconstrained number of - signs}
\end{figure}
\begin{figure}
\centering
	\begin{minipage}{0.38\textwidth}
	\includegraphics[width=0.95\textwidth]{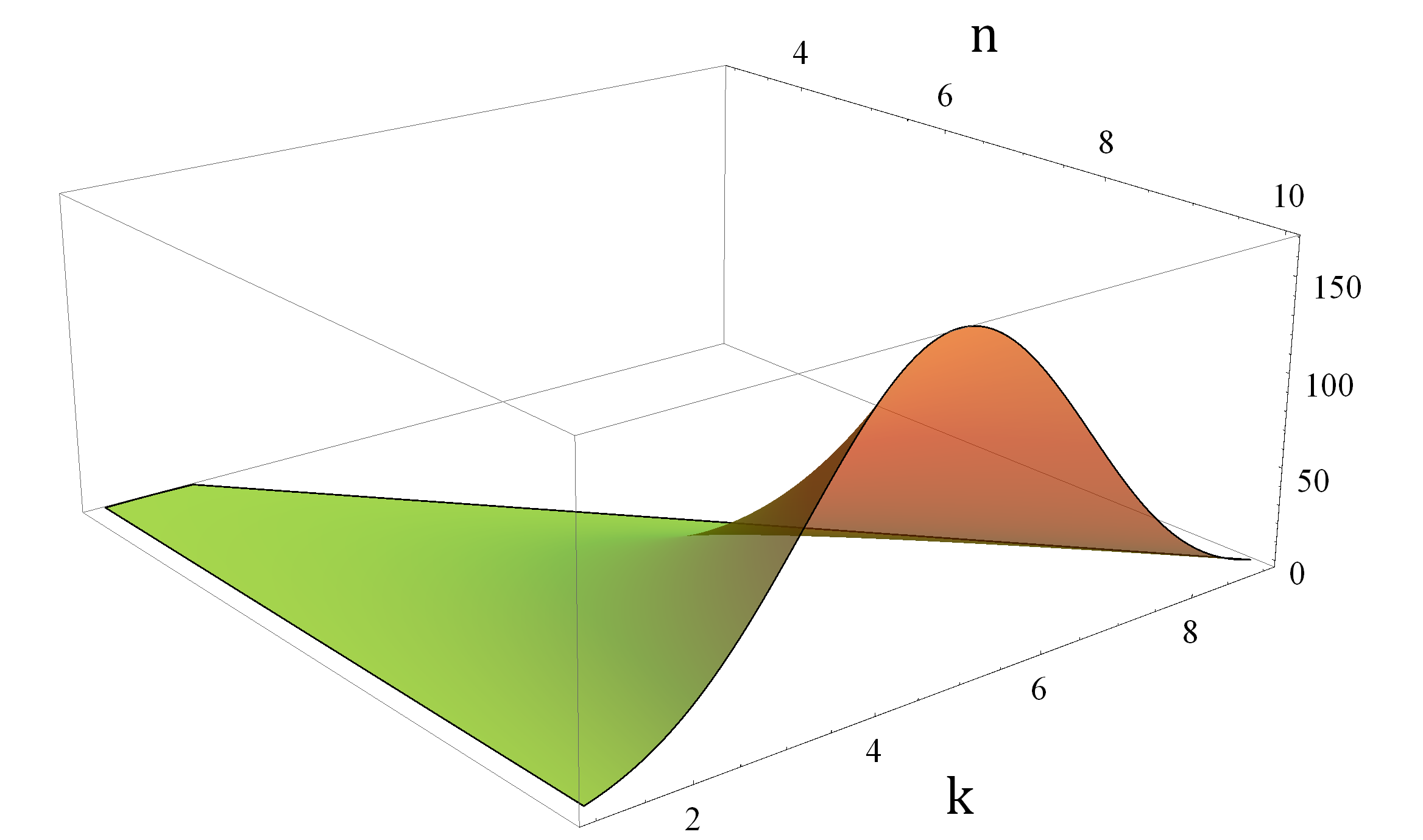}
	\subcaption{$\mathrm{D^{\star}_{KL}}(u_{\mathrm{KP}}||u)$.}
	\label{fig: unconstrained}
	\end{minipage}		
		\begin{minipage}{0.60\textwidth}
	\includegraphics[width=0.80\textwidth]{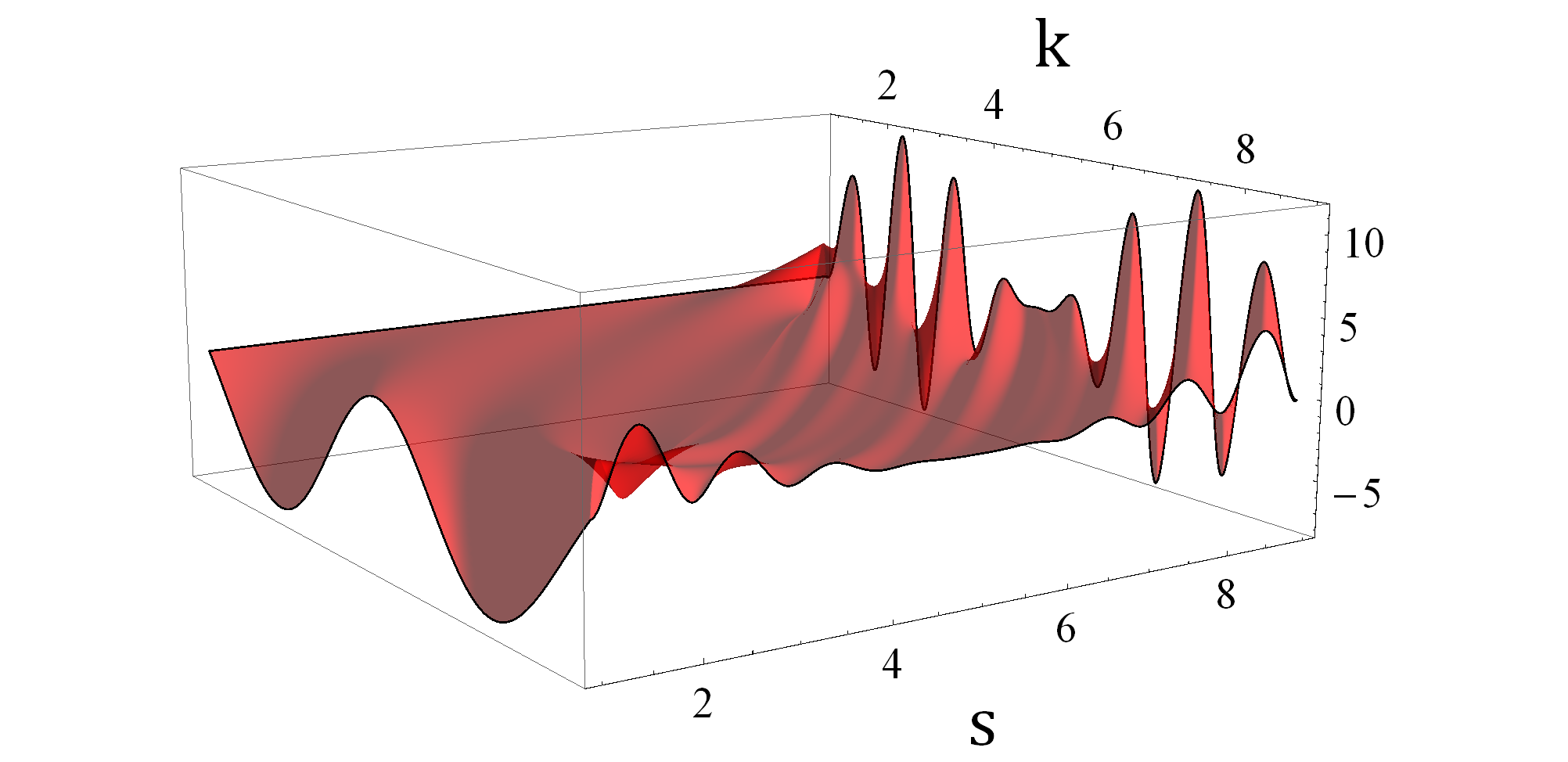}
	\subcaption{$\mathrm{D^{s}_{KL}}(u_{\mathrm{KP}}||u)-\mathrm{D^{\star}_{KL}}(u_{\mathrm{KP}}||u)$ at $n=10$.}
	\label{fig: difference}
	\end{minipage}
\caption{Behaviour of $\mathrm{D^{\star}_{KL}}(u_{\mathrm{KP}}||u)$, and its comparison with $\mathrm{D^{s}_{KL}}(u_{\mathrm{KP}}||u)$ at $n=10$.}
\label{fig: KL divergence, constrained number of - signs}
\end{figure}
 
The gain or loss of information can be quantified by the difference
$\mathrm{D^{s}_{KL}}(u_{\mathrm{KP}}||u)-\mathrm{D^{\star}_{KL}}(u_{\mathrm{KP}}||u)$ (see Figure \ref{fig: difference}). It depends on the values $n,k,s$, as shown in the following example. 
\begin{example}
\label{exa: relative contributions entropy, generic case} Let us
consider the generic case $\mathfrak{G}=\mathcal{P}_{k}[n]$. At $(n,k,s):=(7,3,1)$
we find that $G=35$, $\Omega(7,3,1)=15$, $\mathrm{D_{KL}^{\star}}=(35-7)\cdot\log2\approx19.4081$
and $\mathrm{D_{KL}}(1)=\ln\left(\frac{1}{7}\cdot{{35}\choose{15}}\right)\approx19.9554$,
so $\mathrm{D_{KL}^{\star}-D_{KL}}(1)<0$. 

On the other hand, moving to the case $(n,k,s)=(9,4,4)$, the quantities
$G=126$, $\Omega(9,4,4)=60$, $\mathrm{D_{KL}^{\star}}=(126-9)\cdot\log2\approx81.0982$
and $\mathrm{D_{KL}}(4)=\ln\left({{9}\choose{4}}^{-1}\cdot{{126}\choose{60}}\right)\approx79.7126$,
so $\mathrm{D_{KL}^{\star}-D_{KL}}(4)>0$. 
\end{example}
In contrast to the generic case, situations where $\mathfrak{G}\neq\mathcal{P}_{k}[n]$
may exhibit different combinatorial features, and the equivalence
between solitonic signatures (\ref{eq: choice of signs}) and the
induced representations (\ref{eq: sign switch operator, A}) is weakened.
In particular, the constraint on the number of negative signs $\#\boldsymbol{\ensuremath{\mathrm{NS}}}$
in $\Sigma$ does not correspond to a fixed number of signs $s$ for
$\mathcal{S}$ in (\ref{eq: sign switch operator, A}), and vice versa.
Furthermore, if one assumes a uniform prior distributions on $\mathcal{P}_{s}[n]$
in the construction of solitonic signatures by the sender, then it
is natural to assume that the solitonic signatures are not equiprobable.
In fact, moving from signatures in $\{\pm1\}^{\mathfrak{G}}$ to elements
of $\mathcal{P}_{s}[n]$, the assumption that choices for $\mathcal{S}\in\mathcal{P}_{s}[n]$
occur with the same probability ${{n}\choose{s}}^{-1}$ endows the associated
$G$-bits solitonic strings with multiplicity weights. Each weight
expresses the redundancy of the representation (\ref{eq: sign switch operator, A})
counting the number of subsets $\mathcal{S}\in\mathcal{P}_{s}[n]$
that induce the same signature. At fixed $s$, these weights do not
coincide for all the signatures in general. For instance, if there
is no subset $\mathfrak{p}\subseteq[P]$ such that 
\begin{equation}
\sum_{q\in\mathfrak{p}}k_{q}=2\sum_{q\in\mathfrak{p}}s_{q},\label{eq: balancing of incidences for sub-family}
\end{equation}
then each involution (\ref{eq: involutions for classes}) makes the
index $s$ of the signature change. On the other hand, if (\ref{eq: balancing of incidences for sub-family})
is satisfied, then both $\mathcal{S}$ and $\mathcal{S}\Delta\bigcup_{q\in\mathfrak{p}}\mathcal{H}_{q}$
have $s$ negative signs and induce the same signature. We clarify
these issues through the following examples.
\begin{example}
\label{exa: Non-generic case, a} Consider the matrix 
\begin{equation}
\mathbf{P_{1}}:=
\left(\begin{array}{ccccccc}
1 & 1 & 1 & 0 & 0 & 0 & 0\\
0 & 0 & 0 & 1 & 1 & 0 & 0\\
0 & 0 & 0 & 0 & 0 & 1 & 1
\end{array}\right).\label{eq: matrix for non-generic case example a}
\end{equation}
\end{example}
The non-vanishing minors for $\mathbf{P_{1}}$ can be indexed by elements
in $\{1,2,3\}\times\{4,5\}\times\{6,7\}$, so $G=12$, while it is
easily seen that $P=3$. Not all the choices in $\mathcal{P}_{2}[7]\times\{\pm1\}$
correspond to distinct signatures, due to coincidences from (\ref{eq: involutions for classes})
and $R\in\{\pm1\}$. In fact, there are $22$ distinct solitonic signatures
associated with as many classes in a partition of $\mathcal{P}_{2}[n]$,
where each class includes all the subsets $\mathcal{S}$ that returns
the same signature. Assuming a uniform distribution on $\mathcal{P}_{2}[7]\times\{\pm1\}$,
we get the following induced distribution for solitonic signatures
\begin{eqnarray}
& & \left\{ \frac{1}{42},\frac{1}{42},\frac{1}{42},\frac{1}{42},\frac{1}{21},\frac{1}{21},\frac{1}{21},\frac{1}{21},\frac{1}{42},\frac{1}{42},\right.\nonumber\\
& & \left.\frac{1}{21},\frac{1}{21},\frac{1}{21},\frac{1}{21},\frac{1}{21},\frac{1}{21},\frac{1}{21},\frac{1}{21},\frac{1}{21},\frac{1}{21},\frac{2}{21},\frac{2}{21}\right\} .\label{eq: induced distribution for solitonic signatures, example a}
\end{eqnarray}
The Kullback-Leibler divergence from the uniform prior over $2^{\mathfrak{G}}$
to (\ref{eq: induced distribution for solitonic signatures, example a})
is $\mathrm{D_{KL}(\mathbf{P_{1}};}s=2)\approx5.30625$. Note that
the relative entropy, starting from $\mathbf{P_{1}}$ but without
the information on $s=2$, is $7\cdot\ln2\approx4.85203<\mathrm{D_{KL}(\mathbf{P_{1}};}s=2)$
by (\ref{eq:eq: KL divergence, general including non-generic}). On
the other hand, using (\ref{eq: KL divergence, general and generic})
and (\ref{eq: KL divergence, constrained number of - signs and generic}),
we find that $\mathrm{D_{KL,unconstrained}\approx19.4081}$ and $\mathrm{D_{KL,s=2}\approx18.8568}$
in the generic case. Thus, while the unconstrained case is favourable
in the generic case, the unconstrained one may be preferred when $\mathfrak{G}\neq\mathcal{P}_{k}[n]$. 
\begin{example}
\label{exa: Non-generic case, b} Consider the matrix 
\begin{equation}
\mathbf{P_{2}}:=
\left(\begin{array}{cccccccccc}
1 & 1 & 1 & 1 & 0 & 0 & 0 & 0 & 0 & 0\\
1 & 2 & 3 & 4 & 0 & 0 & 0 & 0 & 0 & 0\\
0 & 0 & 0 & 0 & 1 & 1 & 1 & 0 & 0 & 0\\
0 & 0 & 0 & 0 & 1 & 2 & 3 & 0 & 0 & 0\\
0 & 0 & 0 & 0 & 0 & 0 & 0 & 1 & 1 & 1\\
0 & 0 & 0 & 0 & 0 & 0 & 0 & 1 & 2 & 3
\end{array}\right).\label{eq: matrix for non-generic case example b}
\end{equation}
It is easily seen that $\mathbf{P_{1}}$ is a rectangular block matrix
whose three blocks are full-rank Vandermonde matrices, call them $\mathbf{k_{1}},\mathbf{k_{2}},\mathbf{k_{3}}$.
Thus one can enumerate the non-vanishing minors of $\mathbf{P_{1}}$
as in the previous example and find $G={{4}\choose{2}}\cdot{{3}\choose{2}}^{2}=54$. 

First we note that the constraints on $\#\mathcal{S}$ in (\ref{eq: sign switch operator, A})
and on the negative signs in $\Sigma$ do not match: at $s=2$, the
choice $\mathcal{S}=\{1,2\}$ returns a signature $\Sigma$ with $36$
negative signs, while $\mathcal{S}=\{1,5\}$ gives $27$ negative
signs. The operation (\ref{eq: row normalization}) preserves the
parity $\#\boldsymbol{\ensuremath{\mathrm{NS}}}$, since $G$ is even, so the choices $\{1,2\}$
and $\{1,5\}$ still have different parity when the choice for $R$
is included. Vice versa, a fixed number of negative signs for solitonic
signatures does not correspond to the same constraint for $\#\mathcal{S}$:
both $\mathcal{S}_{1}=\{1,5\}$ and $\mathcal{S}_{2}=\{1,5,9\}$ generate
signatures with $27$ negative signs, despite having different cardinalities. 

Let us focus on the case $s=2$. Before undertaking any check on the
signature, we choose a uniform prior on the $2^{G}=2^{54}$ possible
ones. In contrast to Example \ref{exa: Non-generic case, a}, distinct
choices of $R$ produce distinct signatures: indeed, two subsets $\mathcal{S}_{1},\mathcal{S}_{2}\in\mathcal{P}_{s}[n]$
inducing opposite signatures $\Sigma_{1}=-\Sigma_{2}$ has to be related
by the action of some involutions (\ref{eq: involutions for classes})
respecting (\ref{eq: balancing of incidences for sub-family}). But
any operation of this type returns the same signature, since the ranks
of the matrices $\mathbf{k_{1}},\mathbf{k_{2}},\mathbf{k_{3}}$ are
all even. However, also in this case the uniform distribution on $\mathcal{P}_{2}[n]$
induces a non-uniform distribution on the solitonic signatures: for
instance, the signature associated with $\{8,9\}$ does not coincides
with other ones, while $\{1,2\}$ and $\{3,4\}$ lie in the same equivalence
class, and this results in different weights. Computing the relative
entropy as in the previous example, one gets $\mathrm{D_{KL}}(\mathbf{P_{2}};s=2)\approx33.0226$,
while the unconstrained case gives $46\cdot\ln(2)$. 
\end{example}
The features of non-generic cases and the dependence on the parameters
$(n,k,s)$ can be used to adapt the amount of information content
in the two cases of unconstrained and fixed cardinality for $\mathcal{S}$
in (\ref{eq: sign switch operator, A}). A specific analysis in this
regard will be carried on in a separate work. 

\subsection{\label{subsec: Intersection property and its geometric interpretation} Intersection property and its geometric interpretation}

The introduction of the families  
\begin{equation}
\mathfrak{N}(\boldsymbol{x}):=\left\{ \mathcal{S}\in\mathcal{P}[n]:\,\tau_{\mathcal{S}}(\boldsymbol{x})<0\right\} ,\quad\mathfrak{N}_{s}:=\mathfrak{N}\cap\mathcal{P}_{s}[n]\label{eq: set of negatives}
\end{equation}
plays a significant role in the identification of the stratified structure of statistical amoebas at $k=1$. This comes from a simple combinatorial property (see Proposition 5 in \cite{AK2016b}), which can be extended to the case $k>1$ as follows. 
 
\begin{prop}
\label{prop: bound pairwise disjoint negatives} If $\Delta_{\mathbf{A}}(\mathcal{I})\cdot\Delta_{\mathbf{K}}(\mathcal{I})\geq0$
for all $\mathcal{I}\in\mathcal{P}_{k}[n]$, then there are no $2k$
pairwise disjoint sets in $\mathfrak{N}(\boldsymbol{x})$. 
\end{prop}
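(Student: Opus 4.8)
The plan is to argue by contradiction, summing the candidate negative functions and showing their sum is forced to be nonnegative. First I would record the sign expansion: by the Cauchy--Binet formula applied to (\ref{eq: tau function induced by column operations}), flipping the columns indexed by $\mathcal{S}$ multiplies each $\Delta_{\mathbf{A}}(\mathcal{I})$ by $(-1)^{\#(\mathcal{S}\cap\mathcal{I})}$, so that
\begin{equation}
\tau_{\mathcal{S}}(\boldsymbol{x})=\sum_{\mathcal{I}\in\mathcal{P}_{k}[n]}(-1)^{\#(\mathcal{S}\cap\mathcal{I})}\cdot\Lambda_{\mathcal{I}}(\boldsymbol{x}),
\end{equation}
where each term $\Lambda_{\mathcal{I}}(\boldsymbol{x})$ in (\ref{eq: exponential terms}) satisfies $\Lambda_{\mathcal{I}}(\boldsymbol{x})\geq0$ under the hypothesis $\Delta_{\mathbf{A}}(\mathcal{I})\cdot\Delta_{\mathbf{K}}(\mathcal{I})\geq0$, since the exponential factor is strictly positive. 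I would then suppose, for contradiction, that there exist pairwise disjoint $\mathcal{S}_{1},\dots,\mathcal{S}_{2k}\in\mathfrak{N}(\boldsymbol{x})$, i.e. $\tau_{\mathcal{S}_{j}}(\boldsymbol{x})<0$ for every $j\in[2k]$.

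The core of the argument is a combinatorial estimate on the coefficient obtained when the $2k$ functions are added. For a fixed $\mathcal{I}\in\mathcal{P}_{k}[n]$ set $m_{j}:=\#(\mathcal{S}_{j}\cap\mathcal{I})$. Since the $\mathcal{S}_{j}$ are pairwise disjoint, the sets $\mathcal{S}_{j}\cap\mathcal{I}$ are disjoint subsets of $\mathcal{I}$, whence $\sum_{j=1}^{2k}m_{j}\leq\#\mathcal{I}=k$. Every index $j$ with $m_{j}$ odd contributes $m_{j}\geq1$ to this sum, so the number of such indices is at most $k$. Consequently at most $k$ of the signs $(-1)^{m_{j}}$ equal $-1$ while at least $k$ equal $+1$, giving
\begin{equation}
\sum_{j=1}^{2k}(-1)^{\#(\mathcal{S}_{j}\cap\mathcal{I})}=\#\{j:m_{j}\text{ even}\}-\#\{j:m_{j}\text{ odd}\}\geq0
\end{equation}
for every $\mathcal{I}\in\mathcal{P}_{k}[n]$.

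Summing the expansion over $j$ and exchanging the order of summation, I would conclude
\begin{equation}
\sum_{j=1}^{2k}\tau_{\mathcal{S}_{j}}(\boldsymbol{x})=\sum_{\mathcal{I}\in\mathcal{P}_{k}[n]}\left(\sum_{j=1}^{2k}(-1)^{\#(\mathcal{S}_{j}\cap\mathcal{I})}\right)\cdot\Lambda_{\mathcal{I}}(\boldsymbol{x})\geq0,
\end{equation}
because each bracketed coefficient and each $\Lambda_{\mathcal{I}}(\boldsymbol{x})$ is nonnegative. This contradicts $\tau_{\mathcal{S}_{j}}(\boldsymbol{x})<0$ for all $j$, which would force the left-hand side to be strictly negative. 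The known statement at $k=1$ (Proposition~5 in \cite{AK2016b}) is recovered since then $2k=2$ and the coefficient vanishes identically on every $\mathcal{I}$ meeting one of the two disjoint sets.

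The argument is short once the sign expansion is in place; the only point requiring care is the counting step, where disjointness of the $\mathcal{S}_{j}$ together with $\#\mathcal{I}=k$ simultaneously caps the total intersection size and, through it, the number of odd intersections at $k$. I expect this bookkeeping — rather than any analytic input — to be the crux, since it is precisely the interplay between $\#\mathcal{I}=k$ and the disjointness that produces the threshold $2k$.
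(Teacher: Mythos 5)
Your proof is correct. The sign expansion $\tau_{\mathcal{S}}(\boldsymbol{x})=\sum_{\mathcal{I}}(-1)^{\#(\mathcal{S}\cap\mathcal{I})}\Lambda_{\mathcal{I}}(\boldsymbol{x})$ is exactly the starting point of the paper's proof (stated there in the equivalent form $\tau_{\mathcal{S}}<0\Leftrightarrow\sum_{\mathcal{H}\parallel\mathcal{S}}\Lambda_{\mathcal{H}}>\sum_{\mathcal{K}\perp\mathcal{S}}\Lambda_{\mathcal{K}}$), and the combinatorial engine is the same in both arguments: since $\#\mathcal{I}=k$ and the $\mathcal{S}_{j}$ are pairwise disjoint, a given $\mathcal{I}$ can have odd (hence nonempty) intersection with at most $k$ of the $2k$ sets. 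Where you genuinely differ is in how this fact is deployed. The paper splits the $2k$ sets into two blocks $\mathfrak{F}_{1},\mathfrak{F}_{2}$ of size $k$, sums the $k$ strict inequalities within each block, and then compares the two summed inequalities term by term via the coefficients $k-2w$, showing that the right-hand side of the first dominates the left-hand side of the second and vice versa; the resulting cyclic chain of strict inequalities is self-contradictory. You instead sum all $2k$ inequalities at once and observe that the coefficient of each $\Lambda_{\mathcal{I}}$, namely $\sum_{j}(-1)^{\#(\mathcal{S}_{j}\cap\mathcal{I})}=2k-2\,\#\{j:\#(\mathcal{S}_{j}\cap\mathcal{I})\ \mathrm{odd}\}$, is nonnegative, so $\sum_{j}\tau_{\mathcal{S}_{j}}(\boldsymbol{x})\geq0$, contradicting the strict negativity of every summand. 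Your version is shorter and sidesteps the more delicate bookkeeping in the paper's coefficient comparison (the step $w_{1}\leq w\Rightarrow k-2w_{1}\geq k-2w$), while the paper's two-block organization keeps the individual coefficient strata visible in a form it can reuse when restricting to the levels $\mathfrak{N}_{s}$. Both arguments are pointwise in $\boldsymbol{x}$ and purely combinatorial, and your closing check that the $k=1$ case reduces to the known two-set statement is also accurate.
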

\begin{proof}
Let us suppose that such $2k$ sets exist, i.e., $\{\mathcal{I}_{1},\dots\mathcal{I}_{2k}\}\subseteq\mathfrak{N}(\boldsymbol{x})$
such that $\mathcal{I}_{a}\cap\mathcal{I}_{b}=\emptyset$ for all
$a\neq b$. By definition, $\tau_{\mathcal{I}_{a}}(\boldsymbol{x})<0$
is equivalent to the inequalities 
\begin{equation}
\sum_{\mathcal{H}\parallel\mathcal{I}_{a}}^{\mathcal{H}\in\mathcal{P}_{k}[n]}\Lambda_{\mathcal{H}}(\boldsymbol{x})>\sum_{\mathcal{K}\perp\mathcal{I}_{a}}^{\mathcal{K}\in\mathcal{P}_{k}[n]}\Lambda_{\mathcal{K}}(\boldsymbol{x})\quad a\in[2k].\label{eq: inequality odd-even}
\end{equation}
Let us divide these $2k$ subsets in two classes $\mathfrak{F}_{1}:=\{\mathcal{I}_{a}:\,1\leq a\leq k\}$
and $\mathfrak{F}_{2}:=\{\mathcal{I}_{a}:\,k+1\leq a\leq2k\}$. Adding
the inequalities (\ref{eq: inequality odd-even}) for $\mathfrak{F}_{1}$
term by term, one gets 
\begin{equation}
\sum_{w=0}^{\left\lfloor \frac{k}{2}\right\rfloor }(k-2w)\cdot\sum_{\#(\mathcal{H}\parallel\mathfrak{F}_{1})=k-w}^{\mathcal{H}\in\mathcal{P}_{k}[n]}\Lambda_{\mathcal{H}}(\boldsymbol{x})>\sum_{w=0}^{\left\lfloor \frac{k}{2}\right\rfloor }(k-2w)\cdot\sum_{\#(\mathcal{K}\perp\mathfrak{F}_{1})=k-w}^{\mathcal{K}\in\mathcal{P}_{k}[n]}\Lambda_{\mathcal{K}}(\boldsymbol{x}).\label{eq: sum odd-even, F1}
\end{equation}
Similarly, for $\mathfrak{F}_{2}$ one has 
\begin{equation}
\sum_{w=0}^{\left\lfloor \frac{k}{2}\right\rfloor }(k-2w)\cdot\sum_{\#(\mathcal{H}\parallel\mathfrak{F}_{2})=k-w}^{\mathcal{H}\in\mathcal{P}_{k}[n]}\Lambda_{\mathcal{H}}(\boldsymbol{x})>\sum_{w=0}^{\left\lfloor \frac{k}{2}\right\rfloor }(k-2w)\cdot\sum_{\#(\mathcal{K}\perp\mathfrak{F}_{2})=k-w}^{\mathcal{K}\in\mathcal{P}_{k}[n]}\Lambda_{\mathcal{K}}(\boldsymbol{x}).\label{eq: sum odd-even, F2}
\end{equation}
If $\Lambda_{\mathcal{H}}(\boldsymbol{x})\neq0$ appears in the left hand
side of (\ref{eq: sum odd-even, F1}) (respectively (\ref{eq: sum odd-even, F2})),
then it has non-empty intersection with more than $\left\lfloor \frac{k}{2}\right\rfloor $
elements in $\mathfrak{F}_{1}$ (respectively, $\mathfrak{F}_{2})$.
In the same way, if $\mathcal{H}\parallel\mathfrak{F}_{2}=\{\mathcal{I}_{b},\,b\in[k-w]\}$
with $w<\frac{k}{2}$ and distinct $\mathcal{I}_{b}$, then $w_{1}:=\#(\mathcal{H}\parallel\mathfrak{F}_{1})\leq w<\frac{k}{2}$.
So $\#\left(\mathcal{H}\perp\mathfrak{F}_{1}\right)=k-w_{1}$ and
the coefficient of $\Lambda_{\mathcal{H}}(\boldsymbol{x})$ in the right
hand side of (\ref{eq: sum odd-even, F1}) is $k-2w_{1}\geq k-2w$.
Since $\Lambda_{\mathcal{H}}(\boldsymbol{x})\geq0$ by hypothesis, this
means that the right hand side of (\ref{eq: sum odd-even, F1}) is
an upper bound for the left hand side of (\ref{eq: sum odd-even, F2}),
which implies 
\begin{equation}
\sum_{w=0}^{\left\lfloor \frac{k}{2}\right\rfloor }(k-2w)\cdot\sum_{\#(\mathcal{H}\parallel\mathfrak{F}_{1})=k-w}^{\mathcal{H}\in\mathcal{P}_{k}[n]}\Lambda_{\mathcal{H}}(\boldsymbol{x})>\sum_{w=0}^{\left\lfloor \frac{k}{2}\right\rfloor }(k-2w)\cdot\sum_{\#(\mathcal{H}\parallel\mathfrak{F}_{2})=k-w}^{\mathcal{H}\in\mathcal{P}_{k}[n]}\Lambda_{\mathcal{H}}(\boldsymbol{x})\label{eq: lowering RHS, a}
\end{equation}
Arguing in the same way for (\ref{eq: sum odd-even, F2}), one gets
\begin{equation}
\sum_{w=0}^{\left\lfloor \frac{k}{2}\right\rfloor }(k-2w)\cdot\sum_{\#(\mathcal{K}\parallel\mathfrak{F}_{2})=k-w}^{\mathcal{H}\in\mathcal{P}_{k}[n]}\Lambda_{\mathcal{H}}(\boldsymbol{x})>\sum_{w=0}^{\left\lfloor \frac{k}{2}\right\rfloor }(k-2w)\cdot\sum_{\#(\mathcal{H}\parallel\mathfrak{F}_{1})=k-w}^{\mathcal{H}\in\mathcal{P}_{k}[n]}\Lambda_{\mathcal{H}}(\boldsymbol{x})\label{eq: lowering RHS, b}
\end{equation}
which is incompatible with (\ref{eq: lowering RHS, a}), i.e., a contradiction. 
\end{proof}

The previous bound also holds in the restriction from $\mathfrak{N}(\boldsymbol{x})$ to each individual family $\mathfrak{N}_{s}(\boldsymbol{x})$, and this provides an extension of a geometric property that can be stressed in the case $s=1$ to higher levels $s>1$. At this purpose, we introduce the matrices 
\begin{equation}
\boldsymbol{\ensuremath{\zeta}}_{\alpha}:=\vec{e}_{\alpha}\cdot\vec{e}_{\alpha}^{T}=(\delta_{\beta\gamma}\cdot\delta_{\alpha\beta})_{\beta,\gamma\in[n]},\quad\alpha\in[n]\label{eq: rank-1}
\end{equation}
where $\left\{ \vec{e}_{\alpha}:\,\alpha\in[n]\right\} $ is the standard
basis for $\mathbb{R}^{n}$, and 
\begin{equation}
\mathbf{L}:=(\mathbf{A}\cdot\boldsymbol{\ensuremath{\Theta}}(\boldsymbol{x})\cdot\mathbf{K})^{-1}\cdot\mathbf{A}\cdot\boldsymbol{\ensuremath{\Theta}}(\boldsymbol{x}).\label{eq: left-inverse}
\end{equation}
Note that $(\mathbf{A}\cdot\boldsymbol{\ensuremath{\Theta}}(\boldsymbol{x})\cdot\mathbf{K})^{-1}$
exists at $\det(\mathbf{A}\cdot\boldsymbol{\ensuremath{\Theta}}(\boldsymbol{x})\cdot\mathbf{K})\neq0$
(i.e., outside the singular locus), and $\mathbf{L}$ is a left-inverse
of $\mathbf{K}$. In particular, $(\mathbf{K}\cdot\mathbf{L})^{2}=\mathbf{K}\cdot(\mathbf{L}\cdot\mathbf{K})\cdot\mathbf{L}=\mathbf{K}\cdot\idd_{k}\cdot\mathbf{L}=\mathbf{K}\cdot\mathbf{L}$,
so $\mathbf{K}\cdot\mathbf{L}$ is idempotent. Then, the role played
by $\alpha\in[n]$ in the behaviour of (\ref{eq: Cauchy-Binet as partition function})
can be assessed by the sign of 
\begin{eqnarray}
\frac{\tau_{\{\alpha\}}(\boldsymbol{x})}{\tau(\boldsymbol{x})} & = & \det((\mathbf{A}\cdot\boldsymbol{\ensuremath{\Theta}}(\boldsymbol{x})\cdot\mathbf{K})^{-1})\cdot\det(\mathbf{A}\cdot\boldsymbol{\ensuremath{\sigma}}_{\alpha}\cdot\boldsymbol{\ensuremath{\Theta}}(\boldsymbol{x})\cdot\mathbf{K})\nonumber \\
 & = & \det\left((\mathbf{A}\cdot\boldsymbol{\ensuremath{\Theta}}(\boldsymbol{x})\cdot\mathbf{K})^{-1}\cdot\mathbf{A}\cdot\boldsymbol{\ensuremath{\Theta}}(\boldsymbol{x})\cdot(\idd_{n}-2\cdot\boldsymbol{\ensuremath{\zeta}}_{\alpha})\cdot\mathbf{K}\right)\nonumber \\
 & = & \det(\idd_{k}-2\mathbf{L}\cdot\boldsymbol{\ensuremath{\zeta}}_{\alpha}\cdot\mathbf{K})\label{eq: sign tau as rank-1 correction, a}
\end{eqnarray}
Taking into account that $\boldsymbol{\ensuremath{\zeta}}_{\alpha}^{2}=\boldsymbol{\ensuremath{\zeta}}_{\alpha}$,
one can apply Sylvester's determinant identity \cite{Gantmacher1977}
twice and get 
\begin{eqnarray}
\det(\idd_{k}-2\mathbf{L}\cdot\boldsymbol{\ensuremath{\zeta}}_{\alpha}\cdot\mathbf{K}) & = & \det(\idd_{n}-2\cdot\boldsymbol{\ensuremath{\zeta}}_{\alpha}\cdot\mathbf{K}\cdot\mathbf{L})\nonumber \\
 & = & \det(\idd_{n}-2\cdot\boldsymbol{\ensuremath{\zeta}}_{\alpha}^{2}\cdot\mathbf{K}\cdot\mathbf{L})\nonumber \\
 & = & \det(\idd_{n}-2\cdot\boldsymbol{\ensuremath{\zeta}}_{\alpha}\cdot\mathbf{K}\cdot\mathbf{L}\cdot\boldsymbol{\ensuremath{\zeta}}_{\alpha}).\label{eq: sign tau as rank-1 correction, b}
\end{eqnarray}
Note that $\boldsymbol{\ensuremath{\zeta}}_{\alpha}\cdot\mathbf{K}\cdot\mathbf{L}\cdot\boldsymbol{\ensuremath{\zeta}}_{\alpha}=\langle\mathbf{K}_{\alpha}|\mathbf{L}^{\alpha}\rangle\cdot\vec{e}_{\alpha}\cdot\vec{e}_{\alpha}^{T}$,
where $\mathbf{K}_{\alpha}$ (respectively $\mathbf{L}^{\alpha}$)
is the vector corresponding to the $\alpha$th row of $\mathbf{K}$
(respectively column of $\mathbf{L}$) and $\langle\;|\;\rangle$
is the usual Euclidean scalar product. From the matrix determinant
lemma \cite{Gantmacher1977}, one finds 
\begin{equation}
\det(\idd_{n}-2\cdot\boldsymbol{\ensuremath{\zeta}}_{\alpha}\cdot\mathbf{K}\cdot\mathbf{L}\cdot\boldsymbol{\ensuremath{\zeta}}_{\alpha})=1-2\cdot\langle\mathbf{K}_{\alpha}|\mathbf{L}^{\alpha}\rangle\cdot(\vec{e}_{\alpha}\cdot\vec{e}_{\alpha}^{T})=1-2\langle\mathbf{K}_{\alpha}|\mathbf{L}^{\alpha}\rangle.\label{eq: sign tau as rank-1 correction}
\end{equation}
So $\frac{\tau_{\{\alpha\}}(\boldsymbol{x})}{\tau(\boldsymbol{x})}<0$ if
and only if $\vec{e}_{\alpha}^T\cdot(\mathbf{K}\mathbf{L}\vec{e}_{\alpha})=\langle\mathbf{K}_{\alpha}|\mathbf{L}^{\alpha}\rangle>\frac{1}{2}$.
Thus, the result in Proposition \ref{prop: bound pairwise disjoint negatives}
is equivalent to the property that there exist at most $2k-1$ diagonal
entries of $\mathbf{K}\cdot\mathbf{L}$ such that $\langle\mathbf{K}_{\alpha}|\mathbf{L}^{\alpha}\rangle>\frac{1}{2}$,
independently on $n$. 

\section{\label{sec: Conclusion and future perspectives} Discussion and future
perspectives}

In this work we investigated the effects of particular requirements
connected to a type of complexity reduction, namely determinantal
and integrability constraints, on real-valued partition functions.
Such a reduction can be observed through the statistical amoeba associated
with an initial sum of exponentials (\ref{eq: Cauchy-Binet formula, solitons})
fulfilling the given constraints. In such a framework, the family of allowed choices of signs for pre-exponential terms (\ref{eq: determinantal degeneration}) coincides with the signs induced by row/column sign choices for the coefficient matrix $\mathbf{A}$. In particular, the consistency with
the KP II equation returns $\tau$-functions for the whole KP hierarchy.
This led to the exploration of the number of distinct signatures for a general $\mathbf{A}$, levels of constrained statistical amoebas, and their applications in the information-theoretic and geometric settings.

These results give rise to questions on further links between the
combinatorics of complex structures and integrability, some of which
have already been pointed out. In particular, it is worth exploring
in more detail the redundancy in the description of signatures (\ref{eq: choice of signs})
through subsets (\ref{eq: sign switch operator, A}) and the concept
of instability domains. These issues are also related to the investigation of the tropical limit 
of constrained statistical amoebas, as briefly mentioned in Section \ref{sec: Levels of constrained amoebas}. In fact, these matrix models provide a natural framework for the realization of different tropical concepts, in particular for the nested tropical expansion and the tropical symmetry introduced in \cite{Angelelli2017}. The nested expansion relies on the extension of the ordering of individual phases $\varphi_{\alpha}(\boldsymbol{x})$ at points where $||\boldsymbol{x}||\rightarrow\infty$ to an ordering of  collective phases $\sum_{\alpha\in\mathcal{I}}\varphi_{\alpha}$, $\mathcal{I}\in\mathcal{P}_{k}[n]$, while the tropical copies of elements of $[n]$ can be represented as copies of columns of $\mathbf{A}$. Some additional remarks in this regard are given in Appendix \ref{sec: Tropical limit}. 
A careful analysis of these subjects could be useful in the development of concrete models for the thermodynamic and statistical systems mentioned in \cite{Angelelli2017}, hence it fits within the increasing number of applications of tropical techniques in the description of physical systems (see, e.g., \cite{Maeda2007,Inoue2012,Passare2012,DM-H2014}).

Besides the theoretical interest, the previous points prompt a search for new applications of soliton-like structures to the propagation of information. Furthermore, the concept of dimensionality reduction can be studied in more depth in the context of subspaces classification. Indeed,
the bounds discussed in Section \ref{subsec: Intersection property and its geometric interpretation} 
may be implemented for the purpose of statistical regression \cite{Hastie2009},
in particular when generalized/weighted least square methods are employed
(see, e.g., \cite{Ma2014}). These structures involve bounds for diagonal
elements of a projection matrix (also called leverages for orthogonal
projections) and might be applied in signal processing and machine
learning \cite{Behrens1994,Johansson2006}. More generally, the presentation
of statistical amoebas as families of partitions of the type (\ref{eq: partition from signature})
could be combined with cross-validation techniques. The links between these physical and information-theoretic concepts deserve further investigations for a better understanding, and they will be explored in more detail in a separate paper.

\appendix

\section{\label{sec: Determinantal constraints} Preservation of determinantal constraints}

\begin{rem}
\label{rem: model reduction} There exist sets $\mathfrak{v}\subseteq[n]$,
$R(\mathfrak{v})\subseteq[k]$, and $\boldsymbol{a}\in\mathbb{R}_{\star}^{(k-\#R(\mathfrak{v}))\times(n-\#\mathfrak{v})}$ that preserve the Cauchy-Binet expansion (\ref{eq: Cauchy-Binet as partition function}) up to a factor independent on $\mathcal{I}\in\mathcal{P}_{k}[n]$,
namely 
\begin{equation}
\tau(\boldsymbol{x})=C\cdot\exp\left(\sum_{\nu\in\mathfrak{v}}\varphi_{\nu}(\boldsymbol{x})\right)\cdot\det\left(\boldsymbol{a}\cdot\boldsymbol{\Theta}(\boldsymbol{x})_{\llbracket[n]\setminus\mathfrak{v};[n]\setminus\mathfrak{v}\rrbracket}\cdot\boldsymbol{K}_{\llbracket[n]\setminus\mathfrak{v};[k]\setminus R(\mathfrak{v})\rrbracket}\right)\label{eq: reduction matrix for factorization}
\end{equation}
with $C$ constant. 
\end{rem}
\begin{proof}
Given $\boldsymbol{A}\in\mathbb{R}_{\star}^{k\times n}$, let $\boldsymbol{A_{0}}$
be the reduced row-echelon form of $\boldsymbol{A}$ with $\boldsymbol{D}\cdot\boldsymbol{A_{0}}=\boldsymbol{A}$,
$\boldsymbol{D}\in GL_{k}(\mathbb{R})$. Introduce the set 
\begin{equation}
\mathfrak{v}:=\bigcap_{\mathcal{I}\in\mathfrak{G}}\mathcal{I}=\left\{ \alpha\in[n]:\,(\alpha\notin\mathcal{I}\Rightarrow\mathcal{I}\in\mathcal{P}_{k}[n]\setminus\mathfrak{G})\right\} .\label{eq: core}
\end{equation}
In particular, from $\mathcal{V}\in\mathfrak{G}$
one finds $\mathfrak{v}\subseteq\mathcal{V}$ , so we can consider
$R(\mathfrak{v}):=\{i\in[k]:\,\nu_{i}\in\mathfrak{v}\}$. The dependence
of $\mathfrak{v}$ and $R(\mathfrak{v})$ on $\boldsymbol{A}$ will
be implicit when no ambiguity arises. 

Let $\pi$ be the permutation of $[n]$ such that both the restrictions
$\left.\pi^{-1}\right|_{\mathfrak{v}}$ and $\left.\pi^{-1}\right|_{[n]\setminus\mathfrak{v}}$ are
increasingly monotone, and $\pi^{-1}(\alpha)<\pi^{-1}(\beta)$ for each $\alpha\in\mathfrak{v}$, $\beta\in[n]\setminus\mathfrak{v}$. Similarly, take the permutation $\varpi$
of $[k]$ such that $\left.\varpi^{-1}\right|_{R(\mathfrak{v})}$ and $\left.\varpi^{-1}\right|_{[k]\setminus R(\mathfrak{v})}$
are both order-preserving and, for any $i\in R(\mathfrak{v})$, $j\in[k]\setminus R(\mathfrak{v})$,
$\varpi^{-1}(i)<\varpi^{-1}(j)$. The action of $\pi$ on $\boldsymbol{A}$,
$\boldsymbol{\Theta}(\boldsymbol{x})$ and $\boldsymbol{K}$ via the
matrix representation $\boldsymbol{\pi}:=\left(\delta_{\pi(\alpha),\beta}\right)_{\alpha,\beta\in[n]}$
is 
\begin{equation}
\boldsymbol{A}\mapsto\boldsymbol{A}\cdot\boldsymbol{\pi}^{-1},\quad\boldsymbol{\Theta}(\boldsymbol{x})\mapsto\boldsymbol{\pi}\cdot\boldsymbol{\Theta}(\boldsymbol{x})\cdot\boldsymbol{\pi}^{-1},\quad\boldsymbol{K}\mapsto\boldsymbol{\pi}\cdot\boldsymbol{K},\label{eq: matrix action of permutation}
\end{equation}
which preserves the product $\boldsymbol{A}\cdot\boldsymbol{\Theta}(\boldsymbol{x})\cdot\boldsymbol{K}$
. The matrix $\boldsymbol{\pi}\cdot\boldsymbol{\Theta}(\boldsymbol{x})\cdot\boldsymbol{\pi}^{-1}$
is still diagonal, since $\boldsymbol{\pi}$ lies in the normalizer
of diagonal matrices, and its entries are the same of $\boldsymbol{\Theta}(\boldsymbol{x})$.
The parity of the number of inversions induced by $\pi$ on 
$\mathcal{I}\in\mathcal{P}_{k}[n]$ is the same for $\Delta\left(\boldsymbol{A}\cdot\boldsymbol{\pi}^{-1};\mathcal{I}\right)$
and $\Delta\left(\boldsymbol{\pi}\cdot\boldsymbol{K};\mathcal{I}\right)$.
So the action of $\pi$ preserves the terms in the Cauchy-Binet expansion
(\ref{eq: Cauchy-Binet as partition function}) up to their permutation
denoted by $\Lambda_{\mathcal{I}}(\boldsymbol{x})\mapsto\Lambda_{\pi(\mathcal{I})}(\boldsymbol{x})$.
Hence, we can express $g_{\mathcal{I}}$ using the order given by
$\pi$ for columns, and an additional relabelling of the rows via
the left action of the matrix $\boldsymbol{\varpi}$ representing
$\varpi$: 
\begin{eqnarray}
g_{\mathcal{I}} & = & \det(\boldsymbol{D})\cdot\det(\boldsymbol{\varpi})\cdot\det\left(\boldsymbol{A_{0}}_{\llbracket[k]\setminus R(\mathfrak{v});\mathcal{I}\setminus\mathfrak{v}\rrbracket}\right)\cdot\mathrm{VdM}(\boldsymbol{\kappa};\mathfrak{v}) \nonumber \\ 
& & \cdot\mathrm{VdM}(\boldsymbol{\kappa};\mathcal{I}\setminus\mathfrak{v})\cdot\prod_{\alpha\in\mathfrak{v}}\prod_{\beta\in\mathcal{I}\setminus\mathfrak{v}}(\kappa_{\beta}-\kappa_{\alpha}).\label{eq: factorization constant and hidden-reduced}
\end{eqnarray}
Introduce 
\begin{equation}
P_{\beta}:=\prod_{\alpha\in\mathfrak{v}}(\kappa_{\beta}-\kappa_{\alpha}),\quad\beta\in[n]\setminus\mathfrak{v}\label{eq: polynomial for hidden roots}
\end{equation}
which are non-vanishing under the hypothesis of distinct parameters
$\boldsymbol{\kappa}$. So the matrix 
\begin{equation}
\boldsymbol{a}:=\mathrm{diag}\left(P_{\nu}^{-1}:\,\nu\in\mathcal{V}\setminus\mathfrak{v}\right)\cdot\boldsymbol{A_{0}}_{\llbracket[k]\setminus R(\mathfrak{v});[n]\setminus\mathfrak{v}\rrbracket}\cdot\mathrm{diag}\left(P_{\alpha}:\,\alpha\in[n]\setminus\mathfrak{v}\right)\label{eq: matrix corrected with polynomials}
\end{equation}
is well-defined. One can reformulate (\ref{eq: factorization constant and hidden-reduced})
as 
\begin{equation}
g_{\mathcal{I}}=C\cdot\Delta_{\boldsymbol{a}}(\mathcal{I}\setminus\mathfrak{v})\cdot\mathrm{VdM}(\boldsymbol{\kappa};\mathcal{I}\setminus\mathfrak{v})\label{eq: reduction matrix without core}
\end{equation}
where 
\begin{equation}
C:=\det(\boldsymbol{D})\cdot\det(\boldsymbol{\varpi})\cdot\left(\prod_{\nu\in\mathcal{V}\setminus\mathfrak{v}}P_{\nu}\right)\cdot\mathrm{VdM}(\boldsymbol{\kappa};\mathfrak{v}).\label{eq: constant factor core}
\end{equation}
does not depend on $\mathcal{I}$. Since the soliton parameters
$\boldsymbol{\kappa}$ are pairwise distinct, all the terms in (\ref{eq: constant factor core})
are non-vanishing by definition, thus $C\neq0$. All the non-vanishing
terms in (\ref{eq: Cauchy-Binet as partition function}) have a common
factor $C\cdot\exp\left(\sum_{\nu\in\mathfrak{v}}\varphi_{\nu}(\boldsymbol{x})\right)$,
and the Cauchy-Binet expansion, along with (\ref{eq: reduction matrix without core})
and the correspondence 
\begin{equation}
\mathcal{I}\mapsto\mathcal{I}\setminus\mathfrak{v},\quad\mathcal{I}\in\mathfrak{G},\label{eq: reductions, subsets without core}
\end{equation}
allows to formulate the $\tau(\boldsymbol{x})$ as in (\ref{eq: reduction matrix for factorization}). 

The erasure of rows and columns associated with indices $\nu\in\mathfrak{v}$
in (\ref{eq: reduction matrix without core}) preserves the reduced
row-echelon form: this is inherited by $\boldsymbol{a}$ when the
normalization $\mathrm{diag}\left(P_{\nu}^{-1}:\right.$ $\left.\nu\in\mathcal{V}\setminus\mathfrak{v}\right)$
in (\ref{eq: matrix corrected with polynomials}) is taken into account.
The equality (\ref{eq: reduction matrix for factorization}) also
implies that $\boldsymbol{a}$ has maximal rank and no null columns.
Note that $\#(\mathcal{I}\setminus\mathfrak{v})=k-\#\mathfrak{v}$
for all the terms $\mathcal{I}\in\mathfrak{G}$.
In particular, $\mathcal{V}\setminus\mathfrak{v}$ is still the minimum
element of $\mathcal{P}_{k}\left([n]\setminus\mathfrak{v}\right)$
associated with a non-vanishing minor of $\boldsymbol{a}$ with respect
to the lexicographic order on $[n]\setminus\mathfrak{v}$ induced
by $[n]$. Accordingly, a signature $\Sigma$ for $\tau$ induces
a signature for the reduced model (\ref{eq: reduction matrix for factorization}),
which will still be denoted by $\Sigma$ 
\begin{equation}
\Sigma(\mathcal{I}\setminus\mathfrak{v}):=\mathrm{sign}(C)\cdot\Sigma(\mathcal{I}).\label{eq: induced signature on reduced model}
\end{equation} 
\end{proof}

As remarked in the Introduction, we focus on Cauchy-Binet
expansions that generate statistical amoebas relative to the exponential
functions in (\ref{eq: Cauchy-Binet as partition function}), rather
than determinants. Indeed, each function $f(\boldsymbol{x})$ can
be trivially expressed as the determinant of a diagonal matrix $\mathrm{diag}\left(1,\dots,1,f(\boldsymbol{x})\right)$,
and no constraints arise. Also the number of degrees of freedom, which
is related to the dimensions of the matrices involved in the expansion,
has to be bounded in order to get non-trivial constraints. In fact,
any sum of functions $\sum_{t=1}^{W}f_{t}(\boldsymbol{x})$, $W\in\mathbb{N}$,
can be expressed via (\ref{eq: Cauchy-Binet formula, solitons}) as
\begin{equation}
\det\left[\left(\left(f_{1}(\boldsymbol{x})\, \dots \, f_{W}(\boldsymbol{x})\right) \oplus\idd_{\ell}\right)\cdot\left(\left(1 \, \dots \, 1\right)^{T}\oplus\idd_{\ell}\right)\right].\label{eq: example dimension to get constraints}
\end{equation}
This leads us to define determinantal choices of signs, i.e. signatures
preserving the determinantal expansion (\ref{eq: Cauchy-Binet formula, solitons, a}),
as follows. 
\begin{defn}
\label{def: determinantal choices of signs} We call
a signature $\Sigma$ a \textit{determinantal choice
of signs}, acting on a determinant (\ref{eq: Cauchy-Binet formula, solitons, a}),
if there exist a set $\{\tilde{\kappa}_{\alpha}:\,\alpha\in[n]\setminus\mathfrak{v}\}$
and a matrix $\boldsymbol{\tilde{a}}\in\mathbb{R}^{\#(k-\mathfrak{v})\times\#(n-\mathfrak{v})}$
such that $\Sigma$ is induced by these data through (\ref{eq: Cauchy-Binet formula, solitons}),
up to a common scale factor $\lambda(\boldsymbol{x})$. Specifically,
this means that 
\begin{equation}
\Delta_{\boldsymbol{\tilde{a}}}(\mathcal{I}\setminus\mathfrak{v})\cdot\Delta_{\boldsymbol{\tilde{k}}}(\mathcal{I}\setminus\mathfrak{v})\cdot\exp\left(\sum_{\alpha\in\mathcal{I}\setminus\mathfrak{v}}\tilde{\varphi}_{\alpha}(\boldsymbol{x})\right)=\lambda(\boldsymbol{x})\cdot\Sigma(\mathcal{I})\cdot\Delta_{\boldsymbol{A}}(\mathcal{I})\cdot\Delta_{\boldsymbol{K}}(\mathcal{I})\cdot\exp\left(\sum_{\alpha\in\mathcal{I}}\varphi_{\alpha}(\boldsymbol{x})\right),\quad\mathcal{I}\in\mathfrak{G}\label{eq: determinantal choice of signs}
\end{equation}
where $\tilde{\varphi}_{\alpha}(\boldsymbol{x}):=\sum_{u=1}^{d}\tilde{\kappa}_{\alpha}^{u}x_{u}$
and $\lambda(\boldsymbol{x})\neq0$. 
\end{defn}
This definition only relies on the data provided by
the terms in the expansion (\ref{eq: Cauchy-Binet formula, solitons, a})
indexed by $\mathfrak{G}$, as it is shown in the next lemma. 
\begin{lem}
\label{lem: determinantal choice of signs preserves soliton parameters}
Assuming that the components of $\boldsymbol{\kappa}$ are pairwise
distinct, the data $n-\#\mathfrak{v}$, $k-\#\mathfrak{v}$
and $\{\kappa_{\alpha}:\,\alpha\in[n]\setminus\mathfrak{v}\}$ are
uniquely determined. Furthermore, a determinantal choice of signs
preserves the absolute values of the maximal minors $\Delta_{\boldsymbol{a}}(\mathcal{I})$
of the matrix $\boldsymbol{a}$ defined in (\ref{eq: matrix corrected with polynomials}),
up to a multiplicative factor independent on $\mathcal{I}$. 
\end{lem}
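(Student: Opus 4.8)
The plan is to work in the reduced model of Remark \ref{rem: model reduction}, where the common factor $\exp(\sum_{\nu\in\mathfrak{v}}\varphi_\nu)$ has been stripped off and each surviving term is indexed by $\mathcal{J}=\mathcal{I}\setminus\mathfrak{v}$, a set of the common cardinality $k-\#\mathfrak{v}$ inside $[n]\setminus\mathfrak{v}$, carrying the collective phase $\psi_{\mathcal{J}}:=\sum_{\alpha\in\mathcal{J}}\varphi_\alpha$. On this ground set the reduced matroid has no null columns and, by the very definition (\ref{eq: core}) of $\mathfrak{v}$ as the intersection of all bases, no coloops either. I would establish the two claims in three moves: (i) recover the individual parameters $\kappa_\alpha$, $\alpha\in[n]\setminus\mathfrak{v}$, purely from the phases attached to adjacent terms; (ii) show that any determinantal representation of the same signed terms must reproduce exactly these parameters, i.e.\ $\tilde\kappa_\alpha=\kappa_\alpha$; (iii) cancel the now-matched exponential and Vandermonde factors to read off the minor identity.

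For (i), the key observation is that for a single exchange $\mathcal{J}'=\mathcal{J}_{\beta}^{\alpha}$ with $\mathcal{J},\mathcal{J}'$ both non-vanishing, the difference of collective phases is exactly $\psi_{\mathcal{J}'}-\psi_{\mathcal{J}}=\varphi_\beta-\varphi_\alpha$, whose coefficients of $x_1$ and $x_2$ are $\kappa_\beta-\kappa_\alpha$ and $\kappa_\beta^2-\kappa_\alpha^2$. Since the parameters are pairwise distinct, $\kappa_\beta-\kappa_\alpha\neq0$, so dividing the second coefficient by the first returns the sum $\kappa_\alpha+\kappa_\beta$; together with the difference this determines $\kappa_\alpha$ and $\kappa_\beta$ individually. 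It remains to note that every $\alpha\in[n]\setminus\mathfrak{v}$ occurs in such an exchange pair: because $\alpha\notin\mathfrak{v}$ there is a basis $\mathcal{J}'\not\ni\alpha$, and applying the exchange relation (\ref{eq: exchange relation}) to a basis $\mathcal{J}\ni\alpha$ and to $\mathcal{J}'$ produces $\beta$ with $\mathcal{J}_\beta^\alpha\in\mathfrak{G}$. Hence every $\kappa_\alpha$, and therefore the cardinalities $n-\#\mathfrak{v}$ and $k-\#\mathfrak{v}$, are uniquely determined by the terms alone.

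For (ii), I would feed the defining relation (\ref{eq: determinantal choice of signs}) into the same scheme. Comparing two indices $\mathcal{I}_1,\mathcal{I}_2$ and isolating the $\boldsymbol{x}$-dependence, the constant minor/Vandermonde prefactors and the single scale $\lambda(\boldsymbol{x})\exp(\sum_{\nu\in\mathfrak{v}}\varphi_\nu)$ cancel, leaving $\tilde\psi_{\mathcal{J}_1}-\tilde\psi_{\mathcal{J}_2}=\psi_{\mathcal{J}_1}-\psi_{\mathcal{J}_2}$ as linear functionals (a homogeneous linear functional equal to a constant must vanish, which disposes of the additive ambiguity). Specializing to exchange pairs gives $\tilde\varphi_\beta-\tilde\varphi_\alpha=\varphi_\beta-\varphi_\alpha$, and the quadratic argument of (i) forces $\tilde\kappa_\alpha=\kappa_\alpha$ for every $\alpha\in[n]\setminus\mathfrak{v}$. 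Consequently $\Delta_{\boldsymbol{\tilde k}}(\mathcal{J})=\mathrm{VdM}(\boldsymbol{\kappa};\mathcal{J})$ and $\exp(\sum_{\alpha\in\mathcal{J}}\tilde\varphi_\alpha)=\exp(\psi_{\mathcal{J}})$.

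Finally, for (iii) I substitute $\tilde\kappa=\kappa$ back into (\ref{eq: determinantal choice of signs}) and cancel the common exponential; the left-hand side is then constant in $\boldsymbol{x}$, which forces the only remaining $\boldsymbol{x}$-dependent factor $\lambda(\boldsymbol{x})\exp(\sum_{\nu\in\mathfrak{v}}\varphi_\nu)$ to be an ($\mathcal{I}$-independent) constant $\lambda_0$. Rewriting $\Delta_{\boldsymbol{A}}(\mathcal{I})\Delta_{\boldsymbol{K}}(\mathcal{I})$ through (\ref{eq: reduction matrix without core}) and cancelling the nonvanishing factor $\mathrm{VdM}(\boldsymbol{\kappa};\mathcal{J})$ yields $\Delta_{\boldsymbol{\tilde a}}(\mathcal{J})=\lambda_0\,C\,\Sigma(\mathcal{I})\,\Delta_{\boldsymbol{a}}(\mathcal{J})$, so that $|\Delta_{\boldsymbol{\tilde a}}(\mathcal{J})|=|\lambda_0 C|\,|\Delta_{\boldsymbol{a}}(\mathcal{J})|$ with a factor independent of $\mathcal{I}$, as claimed. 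The main obstacle, and the real content, is the recovery step (i): that two power-sum coordinates of a single exchange pin down each soliton parameter individually (this is where distinctness of the $\kappa_\alpha$ and $d\geq2$ enter), together with the matroid bookkeeping guaranteeing that no element of $[n]\setminus\mathfrak{v}$ is a coloop, so that every parameter is visible in some exchange.
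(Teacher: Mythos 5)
Your proposal is correct and follows essentially the same route as the paper: both recover each $\kappa_\alpha$, $\alpha\in[n]\setminus\mathfrak{v}$, from the preserved ratio of exponentials over an exchange pair $\mathcal{I},\mathcal{I}_\beta^\alpha\in\mathfrak{G}$ (reading off $\kappa_\alpha-\kappa_\beta$ from the $x_1$-coefficient and $\kappa_\alpha+\kappa_\beta$ from the $x_2$-coefficient), and then cancel the matched Vandermonde and exponential factors in (\ref{eq: determinantal choice of signs}) against the reduced form (\ref{eq: reduction matrix without core}) to obtain $|\Delta_{\boldsymbol{\tilde a}}(\mathcal{J})|$ proportional to $|\Delta_{\boldsymbol{a}}(\mathcal{J})|$ with an $\mathcal{I}$-independent constant. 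The only cosmetic difference is that the paper phrases step (ii) via an explicit correspondence $w:\mathfrak{G}\to\mathfrak{G}_1$ between the two matroids, while you absorb this into the identification $\tilde\kappa_\alpha=\kappa_\alpha$; the substance is identical.
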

\begin{proof}
Let us start from $\boldsymbol{A}$ and the associated sets $\mathfrak{G}$
and $\mathfrak{v}$, and take any $\alpha\in[n]\setminus\mathfrak{v}$.
From the lack of null columns, there exist $\mathcal{I},\mathcal{J}\in\mathfrak{G}$
with $\alpha\in\mathcal{I}\setminus\mathcal{J}$. The exchange property
(\ref{eq: exchange relation}) implies that there exists $\beta\in\mathcal{J}\setminus\mathcal{I}$
with $\mathcal{I},\mathcal{I}_{\beta}^{\alpha}\in\mathfrak{G}$. We
are looking at transformations that preserve each exponential ${\displaystyle \exp\left(\sum_{\alpha\in\mathcal{I}}\sum_{u=1}^{d}\kappa_{\alpha}^{u}x_{u}\right)}$
in (\ref{eq: Cauchy-Binet as partition function}) with $\mathcal{I}\in\mathfrak{G}$,
then the ratio of the exponentials corresponding to $\mathcal{I}$
and $\mathcal{I}_{\beta}^{\alpha}$ 
\begin{equation}
\exp\left(\sum_{u=1}^{d}(\kappa_{\alpha}^{u}-\kappa_{\beta}^{u})\cdot x_{u}\right)=\exp\left(\sum_{\gamma\in\mathcal{I}}\sum_{u=1}^{d}\kappa_{\gamma}^{u}x_{u}\right)^{-1}\cdot\exp\left(\sum_{\delta\in\mathcal{I}_{\beta}^{\alpha}}\sum_{u=1}^{d}\kappa_{\delta}^{u}x_{u}\right)\label{eq: fixed ration exponentials}
\end{equation}
is preserved too. In particular, the coefficients $\kappa_{\alpha}-\kappa_{\beta}$
of $x_{1}$ and $\kappa_{\alpha}^{2}-\kappa_{\beta}^{2}$ of $x_{2}$
are left unchanged. The assumption $\kappa_{\alpha}\neq\kappa_{\beta}$
for all $\alpha\neq\beta$ implies that both the quantities $\kappa_{\alpha}-\kappa_{\beta}$
and $\frac{\kappa_{\alpha}^{2}-\kappa_{\beta}^{2}}{\kappa_{\alpha}-\kappa_{\beta}}=\kappa_{\alpha}+\kappa_{\beta}$
are well-defined and fixed. This means that we can recover the values
of data $\kappa_{\alpha}$ (and $\kappa_{\beta}$) for all $\alpha\in[n]\setminus\mathfrak{v}$
from the form of the exponential terms. 

Now look at another full-rank matrix $\boldsymbol{A_{1}}\in\mathbb{R}^{k_{1}\times n_{1}}$
without null columns, which generates the data $\mathfrak{G}_{1}$
as in (\ref{eq: set of vanishing minors}) and $\mathfrak{v}_{1}$
as in (\ref{eq: core}), and a vector $\boldsymbol{\kappa_{1}}\in\mathbb{R}^{n_{1}}$
such that (\ref{eq: determinantal choice of signs}) also holds after
the substitution $\boldsymbol{A}\mapsto\boldsymbol{A_{1}}$ and $\boldsymbol{\kappa}\mapsto\boldsymbol{\kappa_{1}}$.
Same as above, for each $\alpha\in[n_{1}]\setminus\mathfrak{v}_{1}$
we can find $\mathcal{I}\in\mathfrak{G}_{1}$ and $\beta\in[n_{1}]\setminus\mathcal{I}$
such that $\alpha\in\mathcal{I}$ and $\mathcal{I}_{\beta}^{\alpha}\in\mathfrak{G}_{1}$
and, from the previous observations, we recover the same set $\{\tilde{\kappa}_{\alpha}:\,\alpha\in[n]\setminus\mathfrak{v}\}=\{\tilde{\kappa}_{\alpha}:\,\alpha\in[n_{1}]\setminus\mathfrak{v}_{1}\}$
of distinguishable soliton parameters. This establishes a correspondence
$w:\,\mathfrak{G}\longrightarrow\mathfrak{G}_{1}$ defined by $\mathcal{I}\setminus\mathfrak{v}=w(\mathcal{I})\setminus\mathfrak{v}_{1}$
for all $\mathcal{I}\in\mathfrak{G}$. 

For each $\alpha\in[n]\setminus\mathfrak{v}$, we can choose $\mathcal{I}(\alpha)\in\mathfrak{G}$
with $\alpha\in\mathcal{I}(\alpha)$, which exists by the lack of
null columns in $\boldsymbol{A}$, then $\alpha\in w(\mathcal{I}(\alpha))$
too. From (\ref{eq: reduction matrix without core})
and (\ref{eq: determinantal choice of signs}), this means that $\Delta_{\boldsymbol{A_{1}}}(w(\mathcal{I}(\alpha)))\cdot\Delta_{\boldsymbol{K_{1}}}(w(\mathcal{I}(\alpha)))\neq0$, which implies that the parameters $(\boldsymbol{\kappa_{1}})_{\nu}$,
$\nu\in\mathfrak{v}_{1}$, and $(\boldsymbol{\kappa_{1}})_{\alpha}$
are pairwise distinct. Since this holds for all $\alpha\in[n]\setminus\mathfrak{v}$,
all the components of $\boldsymbol{\tilde{\kappa}}$ are pairwise
distinct too. So one has $\Delta_{\boldsymbol{A}}(\mathcal{I})=0$
if and only if $\Delta_{\boldsymbol{A_{1}}}(w(\mathcal{I}))=0$. These
data induce the same form (\ref{eq: reduction matrix for factorization})
with matrices $\boldsymbol{a}$ and $\boldsymbol{a_{1}}$ and multiplicative
constants $C,C_{1}\neq0$. From the previous discussion $\mathrm{VdM}(\boldsymbol{\kappa};\mathcal{I}\setminus\mathfrak{v})=\mathrm{VdM}(\boldsymbol{\kappa_{1}};w(\mathcal{I})\setminus\mathfrak{v}_{1})$
for all $\mathcal{I}\in\mathfrak{G}$, so the equalities (\ref{eq: determinantal choice of signs})
imply $|\Delta_{\boldsymbol{a}}(\mathcal{I}\setminus\mathfrak{v})|=\frac{C_{1}}{C}|\Delta_{\boldsymbol{a_{1}}}(\mathcal{I}\setminus\mathfrak{v})|$.
$\square$ 
\end{proof}
\begin{thm}
\label{thm: compatible choices of signs, determinant} Assuming that
the parameters $\kappa_{1},\dots,\kappa_{n}$ are pairwise distinct,
a choice of signs is determinantal if and only if it is induced by
a choice of sign for the rows and the columns of $\boldsymbol{A}$
(up to the action of $GL_{k}(\mathbb{R})$). 
\end{thm}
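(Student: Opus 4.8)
The plan is to prove the two implications separately, relying on the rigidity already extracted in Lemma~\ref{lem: determinantal choice of signs preserves soliton parameters} and on the equivalences established in Theorems~\ref{thm: compatible choices of signs, solution KP, including vanishing} and~\ref{thm: connections KP eq, KP hierarchy, and determinantal form}. The forward implication is the easy one: a choice of column signs encoded by $\boldsymbol{\sigma}_{\mathcal{S}}$ as in (\ref{eq: sign switch operator, A}), together with an overall row sign $R\in\{\pm1\}$ as in (\ref{eq: row normalization}), sends $\mathbf{A}$ to a matrix whose maximal minors are $\Sigma(\mathcal{I})\cdot\Delta_{\mathbf{A}}(\mathcal{I})$ with $\Sigma(\mathcal{I})=R\cdot\prod_{\alpha\in\mathcal{I}}\chi(\alpha)$, where $\chi(\alpha)=-1$ precisely when $\alpha\in\mathcal{S}$. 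Taking $\tilde{\kappa}:=\kappa$, $\boldsymbol{\tilde{a}}$ equal to the corresponding sign-flip of the reduced matrix $\boldsymbol{a}$ of (\ref{eq: matrix corrected with polynomials}), and $\lambda(\boldsymbol{x})$ constant, the identity (\ref{eq: determinantal choice of signs}) holds by construction, so such a $\Sigma$ is determinantal.

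For the converse, suppose $\Sigma$ is a determinantal choice of signs, realized by some $\{\tilde{\kappa}_{\alpha}\}$, $\boldsymbol{\tilde{a}}$ and $\lambda(\boldsymbol{x})\neq0$. First I would invoke Lemma~\ref{lem: determinantal choice of signs preserves soliton parameters}, which pins the non-core soliton parameters, $\{\tilde{\kappa}_{\alpha}:\alpha\in[n]\setminus\mathfrak{v}\}=\{\kappa_{\alpha}:\alpha\in[n]\setminus\mathfrak{v}\}$, and forces $|\Delta_{\boldsymbol{\tilde{a}}}(\mathcal{I}\setminus\mathfrak{v})|$ to agree with $|\Delta_{\boldsymbol{a}}(\mathcal{I}\setminus\mathfrak{v})|$ up to a factor independent of $\mathcal{I}$. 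Passing to the reduced model of Remark~\ref{rem: model reduction}, both $\boldsymbol{a}$ and $\boldsymbol{\tilde{a}}$ are genuine $(k-\#\mathfrak{v})\times(n-\#\mathfrak{v})$ matrices, so their maximal minors satisfy the Grassmann--Pl\"{u}cker relations (\ref{eq: three-terms Plucker relations}). Since the defining identity (\ref{eq: determinantal choice of signs}) together with Lemma~\ref{lem: determinantal choice of signs preserves soliton parameters} expresses $\Delta_{\boldsymbol{\tilde{a}}}(\mathcal{I}\setminus\mathfrak{v})$ as a fixed factor times $\Sigma(\mathcal{I})\cdot\Delta_{\boldsymbol{a}}(\mathcal{I}\setminus\mathfrak{v})$, the signed family $\{\Sigma(\mathcal{I})\cdot\Delta_{\boldsymbol{a}}(\mathcal{I}\setminus\mathfrak{v})\}$ is itself realizable as a family of maximal minors; in other words, $\Sigma$ preserves the determinantal form (\ref{eq: Cauchy-Binet formula, solitons, a}).

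At this point the main-text results close the argument. By Theorem~\ref{thm: connections KP eq, KP hierarchy, and determinantal form}, preservation of the determinantal form is equivalent to returning a solution of the KP II equation, and by Theorem~\ref{thm: compatible choices of signs, solution KP, including vanishing} such a solitonic signature is induced by a choice of signs for the rows and columns of the reduced matrix $\boldsymbol{a}$. Finally I would lift this back to $\mathbf{A}$: since every index $\nu\in\mathfrak{v}$ lies in all non-vanishing minors, a sign flip of a core column multiplies each $\Delta_{\mathbf{A}}(\mathcal{I})$ by the same factor and is therefore absorbed into the row sign $R$, so a row/column signature of $\boldsymbol{a}$ is exactly the restriction of a row/column signature of $\mathbf{A}$, modulo the $GL_k(\mathbb{R})$-action.

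The delicate point is the bookkeeping around the core $\mathfrak{v}$ and the scale factor $\lambda(\boldsymbol{x})$: one must check that the reduction of Remark~\ref{rem: model reduction} carries $\Sigma$ to the well-defined signature (\ref{eq: induced signature on reduced model}) on $\boldsymbol{a}$, that the common factor relating $|\Delta_{\boldsymbol{\tilde{a}}}|$ and $|\Delta_{\boldsymbol{a}}|$ is compatible with the exponential prefactor absorbed into $\lambda(\boldsymbol{x})$ (which is harmless by the gauge invariance (\ref{eq: gauge invariance Hirota derivatives})), and that no genuine freedom beyond row and column signs survives once the soliton parameters have been fixed by Lemma~\ref{lem: determinantal choice of signs preserves soliton parameters}. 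Everything beyond this bookkeeping is a direct consequence of the already-established equivalences.
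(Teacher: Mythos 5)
Your argument is logically sound as far as it goes, but it routes the hard implication through Theorem \ref{thm: compatible choices of signs, solution KP, including vanishing} (via Theorem \ref{thm: connections KP eq, KP hierarchy, and determinantal form}), and that theorem is only available under a \emph{generic} choice of $\boldsymbol{\kappa}$: the whole of Section \ref{sec: Constrained statistical amoebas_ solitons} rests on the assumption that the exponentials in (\ref{eq: transversal +/-, b}) indexed by pairs with different union or intersection are linearly independent, i.e.\ that no algebraic relation of the form (\ref{eq: condition coincidence exponentials}) holds accidentally. The appendix theorem you are asked to prove assumes only that $\kappa_{1},\dots,\kappa_{n}$ are \emph{pairwise distinct}, which is strictly weaker. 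So your proof establishes the statement for generic $\boldsymbol{\kappa}$ but not as stated; this is the genuine gap. The paper avoids it by giving a completely self-contained argument that never touches the KP operator: after the reduction of Remark \ref{rem: model reduction} and Lemma \ref{lem: determinantal choice of signs preserves soliton parameters} (which you use correctly), it observes via (\ref{eq: fixd absolute values entries}) that a determinantal signature fixes the absolute values of the \emph{entries} of the reduced row-echelon matrix $\boldsymbol{a}$, hence amounts to a sign assignment $\sigma(i,\alpha)$ on the non-vanishing entries; it then propagates consistent row signs $\varrho$ and column signs $\chi$ with $\sigma(i,\alpha)=\varrho(i)\cdot\chi(\alpha)$ along paths in a graph on the rows, using only the preservation of absolute values of $2\times2$ and larger minors. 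That construction needs nothing beyond pairwise distinctness and, as a bonus, is an explicit algorithm (cf.\ Remark \ref{rem: signs for arrays from deformation}).

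Apart from the hypothesis mismatch, your reduction steps are handled correctly: the forward implication, the identification of the soliton parameters and of $|\Delta_{\boldsymbol{\tilde{a}}}|$ up to a constant via Lemma \ref{lem: determinantal choice of signs preserves soliton parameters}, and the absorption of core-column flips into the overall row sign are all fine, and the route is not circular since Theorem \ref{thm: connections KP eq, KP hierarchy, and determinantal form} does not cite the appendix. If you want to keep your shorter argument, you must either add the genericity hypothesis to the statement, or supply a direct proof that a determinantal signature is a row/column signature without passing through the KP characterization --- which is exactly the content of the paper's constructive proof.
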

\begin{proof}
We fix the gauge given by the $GL_{k}(\mathbb{R})$-action setting 
$\frac{C_{1}\cdot\lambda_{1}(\boldsymbol{x})}{C\cdot\lambda(\boldsymbol{x})}=1$.
By Remark \ref{rem: model reduction} and Lemma \ref{lem: determinantal choice of signs preserves soliton parameters},
a signature is determinantal only if the absolute values of maximal
minors of $\boldsymbol{a}$ are preserved. If $\mathfrak{v}\neq\emptyset$,
then the overall $\mathrm{sign}(C)$ in (\ref{eq: induced signature on reduced model})
can be expressed as a choice of sign for a row in $R(\mathfrak{v})$,
and the study is reduced to $[k]\setminus R(\mathfrak{v})$
and $[n]\setminus\mathfrak{v}$ through the map (\ref{eq: reductions, subsets without core}).
Therefore, to simplify the notation, we can focus on the case $\boldsymbol{a}\in\mathbb{R}_{\star}^{k\times n}$
with pivot set $\mathcal{V}$ without loss of generality.

The equality 
\begin{equation}
|a_{i\beta}|=\left|\frac{\Delta_{\boldsymbol{a}}\left(\mathcal{V}_{\beta}^{\nu_{i}}\right)}{\Delta_{\boldsymbol{a}}(\mathcal{V})}\right|=\left|\frac{\Delta_{\tilde{\boldsymbol{a}}}\left(\mathcal{V}_{\beta}^{\nu_{i}}\right)}{\Delta_{\tilde{\boldsymbol{a}}}(\mathcal{V})}\right|=|\tilde{a}_{i\beta}|,\quad i\in[k],\beta\in[n]\backslash\mathcal{V}\label{eq: fixd absolute values entries}
\end{equation}
means that the absolute values of the entries of $\boldsymbol{a}$
are fixed too. So, the transformation (\ref{eq: choice of signs})
is induced by a choice of signs 
\begin{equation}
\sigma:\,\left\{ (i,\alpha)\in[k]\times[n]:\,a_{i\alpha}\neq0\right\} \longrightarrow\{\pm1\}\label{eq: associated choice of signs entries RREF}
\end{equation}
for the non-vanishing entries of $\boldsymbol{a}$, which will be
denoted by $\sigma(\boldsymbol{a})$ as well.

We now construct a sequence of operations to label the rows and columns
of $\sigma(\boldsymbol{a})$ with signs $\varrho:\,[k]\longrightarrow\{\pm1\}$
and $\chi:\,[n]\longrightarrow\{\pm1\}$ respectively.
Let $\mathcal{G}:=([k],E)$
be a graph whose vertices label the rows of $\boldsymbol{a}$. The
pair $(i,j)$ is an edge if and only if there
exists $\gamma\in[n]$ such that $a_{i\gamma}\neq0\neq a_{j\gamma}$, i.e. 
\begin{equation} 
(i,j)\in E\Leftrightarrow\exists\gamma\in[n]:\,\Delta_{\boldsymbol{a}}\left(\mathcal{V}^{\nu_{i}}_{\gamma}\right)\neq0\neq\Delta_{\boldsymbol{a}}\left(\mathcal{V}^{\nu_{j}}_{\gamma}\right). 
\label{eq: edges rows}
\end{equation}
So fix an arbitrary row $h_{1}\in[k]$ of $\boldsymbol{a}$, e.g.
$h_{1}=1$. Without loss of generality, we can set $\varrho(h_{1}):=+1$.
Define $\mathfrak{c}_{1}:=\left\{ \alpha\in[n]:\,a_{h_{1}\alpha}\neq0\right\} $
and $\chi(\alpha):=\sigma(h_{1},\alpha)$ for all $\alpha\in\mathfrak{c}_{1}$.
Note that, for each $i\in[k]$, all the products $\sigma(h_{1},\alpha)\cdot\sigma(i,\alpha)$,
$\alpha\in\mathfrak{c}_{1}$ and $a_{i\alpha}\neq0$, coincide: indeed,
this is trivially true at $i=h_{1}$. If $i\neq h_{1}$ and $(i,\alpha)$,
$(i,\beta)$ are such that $\alpha,\beta\in\mathfrak{c}_{1}$, then
$0\notin\{a_{h_{1}\alpha},a_{h_{1}\beta}\}$ by definition. Hence,
at $a_{i\alpha}\neq0\neq a_{i\beta}$, the constraint $|\Delta_{\sigma(\boldsymbol{a})}(\mathcal{V}\setminus\{\nu_{h_{1}},\nu_{i}\}\cup\{\alpha,\beta\}\})|=|\Delta_{\boldsymbol{a}}(\mathcal{V}\setminus\{\nu_{h_{1}},\nu_{i}\}\cup\{\alpha,\beta\}\})|$
is equivalent to 
\begin{eqnarray}
& & \left|\sigma(h_{1},\alpha)\cdot\sigma(i,\beta)\cdot a_{h_{1}\alpha}\cdot a_{i\beta}-\sigma(i,\alpha)\cdot\sigma(h_{1},\beta)\cdot a_{i\alpha}\cdot a_{h_{1}\beta}\right| \nonumber \\ 
& = & \left|a_{h_{1}\alpha}\cdot a_{i\beta}-a_{i\alpha}\cdot a_{h_{1}\beta}\right|.\label{eq: equality absolute values, step 1}
\end{eqnarray}
From $a_{h_{1}\alpha}\cdot a_{i\beta}\cdot a_{i\alpha}\cdot a_{h_{1}\beta}\neq0$,
one gets $|a_{h_{1}\alpha}\cdot a_{i\beta}-a_{i\alpha}\cdot a_{h_{1}\beta}|\neq|a_{h_{1}\alpha}\cdot a_{i\beta}+a_{i\alpha}\cdot a_{h_{1}\beta}|$,
thus 
\begin{eqnarray}
\hspace*{-1.5cm}
\sigma(h_{1},\alpha)\cdot\sigma(i,\beta)=\sigma(i,\alpha)\cdot\sigma(h_{1},\beta) & \Leftrightarrow & \sigma(h_{1},\alpha)\cdot\sigma(h_{1},\beta)\cdot\sigma(h_{1},\alpha)\cdot\sigma(i,\beta)\nonumber \\
& = & \sigma(h_{1},\alpha)\cdot\sigma(h_{1},\beta)\cdot\sigma(i,\alpha)\cdot\sigma(h_{1},\beta)\nonumber \\
& \Leftrightarrow & \sigma(h_{1},\beta)\cdot\sigma(i,\beta)\nonumber \\ 
& = & \sigma(h_{1},\alpha)\cdot\sigma(i,\alpha).\label{eq: consistent row sign, step 1}
\end{eqnarray}
So, for any $(h_{1},i)\in E$ , the $i$th row can be labelled with
a sign $\varrho(i):=\sigma(i,\alpha)\cdot\chi(\alpha)$ for some $\alpha\in\mathfrak{c}_{1}$
assuming that $\mathfrak{c}_{1}\neq\emptyset$. Then let $\mathfrak{r}_{1}:=\{i\in[k]\setminus\{h_{1}\}:\,(i,h_{1})\in E\}$
and $h_{2}:=\min\mathfrak{r}_{1}$. By the previous observation, one
has a definite sign $\varrho(h_{2})$. Let $\mathfrak{c}_{2}:=\left\{ \alpha\in[n]\setminus\mathfrak{c}_{1}:\,a_{h_{2}\alpha}\neq0\right\} $
and assign $\chi(\alpha):=\sigma(h_{2},\alpha)\cdot\varrho(h_{2})$
for any $\alpha\in\mathfrak{c}_{2}$. As before, for each fixed $i\in[k],$
all the signs $\sigma(h_{2},\alpha)\cdot\sigma(i,\alpha)$, $\alpha\in\mathfrak{c}_{2}$
and $a_{i\alpha}\neq0$, coincide. So the sign $\varrho_{2}(i):=\sigma(i,\alpha)\cdot\chi(\alpha)$,
for any $\alpha\in\mathfrak{c}_{2}$, is well-defined. It may exist
$g\in[k]\setminus\{h_{1},h_{2}\}$ such that $a_{g\alpha}\neq0$ and
$a_{g\beta}\neq0$ for some $\alpha\in\mathfrak{c}_{1}$, $\beta\in\mathfrak{c}_{2}$.
We can check that $\varrho_{2}(g)=\varrho(g)$: in fact, the signs
restricted to rows $h_{1},h_{2},g$ and columns $\alpha,\beta$ can
be depicted as 
\begin{equation}
\begin{array}{ccc}
 & \mbox{column }\alpha & \mbox{column }\beta\\
\mbox{row }h_{1} & \varrho(h_{1})\cdot\chi(\alpha) & 0\\
\mbox{row }h_{2} & W & \varrho(h_{2})\cdot\chi(\beta)\\
\mbox{row }g & \varrho(g)\cdot\chi(\alpha) & \varrho_{2}(g)\cdot\chi(\beta)
\end{array}\label{eq: configuration consistence row signs, iterations}
\end{equation}
If $W\neq0$, then it equals $\varrho(h_{2})\cdot\chi(\alpha)$ by
definition. Thus, the constraint $|\Delta_{\boldsymbol{a}}(\mathcal{I})|=|\Delta_{\sigma(\boldsymbol{a})}(\mathcal{I})|$
at $\mathcal{I}=\mathcal{V}\setminus\{\nu_{h_{2}},\nu_{g}\}\cup\{\alpha,\beta\}$
implies that 
\begin{equation}
\varrho(h_{2})\cdot\chi(\alpha)\cdot\varrho_{2}(g)\cdot\chi(\beta)=\varrho(h_{2})\cdot\chi(\beta)\cdot\varrho(g)\cdot\chi(\alpha)\label{eq: equality absolute values, a}
\end{equation}
that is $\varrho_{2}(g)=\varrho(g)$. Now assume that $W=0$. Since
$(h_{2},h_{1})\in E$, there exists $\gamma\in[n]$ such that $a_{h_{1}\gamma}\neq0\neq a_{h_{2}\gamma}$,
and $\gamma$ is clearly different from $\alpha$ and $\beta$ since
$a_{h_{1}\beta}=0=a_{h_{2}\alpha}$, then consider the extended scheme
\begin{equation}
\begin{array}{cccc}
 & \mbox{column }\alpha & \mbox{column }\beta & \mbox{column }\gamma\\
\mbox{row }h_{1} & \varrho(h_{1})\cdot\chi(\alpha) & 0 & \varrho(h_{1})\cdot\chi(\gamma)\\
\mbox{row }h_{2} & 0 & \varrho(h_{2})\cdot\chi(\beta) & \varrho(h_{2})\cdot\chi(\gamma)\\
\mbox{row }g & \varrho(g)\cdot\chi(\alpha) & \varrho_{2}(g)\cdot\chi(\beta) & X
\end{array}\label{eq: extended configuration consistence row signs, iterations}
\end{equation}
If $X\neq0$ (hence $|a_{g\gamma}|\neq0$), then $X=\varrho(g)\cdot\chi(\gamma)$ and, applying $|\Delta_{\boldsymbol{a}}(\mathcal{I})|=|\Delta_{\sigma(\boldsymbol{a})}(\mathcal{I})|$ at $\mathcal{I}=\mathcal{V}\setminus\{\nu_{h_{2}},\nu_{g}\}\cup\{\beta,\gamma\}$, one gets $\varrho_{2}(g)=\varrho(g)$. Also note that this can be found without evaluating $X$ from the requirement $|\Delta_{\boldsymbol{a}}(\mathcal{I})|=|\Delta_{\sigma(\boldsymbol{a})}(\mathcal{I})|$ at both $\mathcal{I}=\mathcal{V}\setminus\{\nu_{h_{1}},\nu_{g}\}\cup\{\alpha,\gamma\}$ 
and $\mathcal{I}=\mathcal{V}\setminus\{\nu_{h_{2}},\nu_{g}\}\cup\{\beta,\gamma\}$ simultaneously, i.e.  
\begin{eqnarray}
+1 & = & \varrho(h_{1})\cdot\chi(\alpha)\cdot X\cdot\varrho(h_{1})\cdot\chi(\gamma)\cdot\varrho(g)\cdot\chi(\alpha)\nonumber \\
& = & \varrho(h_{2})\cdot\chi(\beta)\cdot X\cdot\varrho(h_{2})\cdot\chi(\gamma)\cdot\varrho_{2}(g)\cdot\chi(\beta)\label{eq: equality absolute values, iteration a}
\end{eqnarray}
that gives 
\begin{equation}
\varrho(h_{1})\cdot\chi(\alpha)\cdot\varrho(h_{1})\cdot\chi(\gamma)\cdot\varrho(g)\cdot\chi(\alpha)=\varrho(h_{2})\cdot\chi(\beta)\cdot\varrho(h_{2})\cdot\chi(\gamma)\cdot\varrho_{2}(g)\cdot\chi(\beta).\label{eq: compatibility three, a}
\end{equation}
If instead $X=0$, considering the constraint $|\Delta_{\boldsymbol{a}}(\mathcal{I})|=|\Delta_{\sigma(\boldsymbol{a})}(\mathcal{I})|$
in the case $\mathcal{I}=\mathcal{V}\setminus\{\nu_{h_{1}},\nu_{h_{2}},\nu_{g}\}\cup\{\alpha,\beta,\gamma\}$,
we get 
\begin{eqnarray}
& & |-a_{h_{1}\gamma}\cdot a_{h_{2}\beta}\cdot a_{g\alpha}-a_{h_{2}\gamma}\cdot a_{g\beta}\cdot a_{h_{1}\alpha}|\nonumber \\
& = & |\varrho(h_{1})\cdot\chi(\gamma)\cdot\varrho(h_{2})\cdot\chi(\beta)\cdot\varrho(g)\cdot\chi(\alpha)\cdot a_{h_{1}\gamma}\cdot a_{h_{2}\beta}\cdot a_{g\alpha}\nonumber \\
& + & \varrho(h_{2})\cdot\chi(\gamma)\cdot\varrho_{2}(g)\cdot\chi(\beta)\cdot\varrho(h_{1})\cdot\chi(\alpha)\cdot a_{h_{2}\gamma}\cdot a_{g\beta}\cdot a_{h_{1}\alpha}|\label{eq: equality absolute values, iteration b}
\end{eqnarray}
and, from $0\neq a_{h_{1}\gamma}\cdot a_{h_{2}\beta}\cdot a_{g\alpha}\cdot a_{h_{2}\gamma}\cdot a_{g\beta}\cdot a_{h_{1}\alpha}$,
we obtain 
\begin{equation}
\varrho(h_{1})\cdot\chi(\gamma)\cdot\varrho(h_{2})\cdot\chi(\beta)\cdot\varrho(g)\cdot\chi(\alpha)=\varrho(h_{2})\cdot\chi(\gamma)\cdot\varrho_{2}(g)\cdot\chi(\beta)\cdot\varrho(h_{1})\cdot\chi(\alpha).\label{eq: compatibility three, b}
\end{equation}
Since all the factors in (\ref{eq: compatibility three, a}) and (\ref{eq: compatibility three, b})
belong to $\{\pm1\}$ and, in particular, they are non-vanishing and
idempotent, these expressions simplify as 
\begin{equation}
\varrho_{2}(g)=\varrho(g).\label{eq: compatibility row signs}
\end{equation}
and the sign $\varrho_{2}(g)\cdot\chi(\gamma)=\varrho(g)\cdot\chi(\gamma)$
can be consistently fixed. This construction can be extended through the iterations 
\begin{eqnarray}
\mathfrak{r}_{r-1} & := & \left\{ i\in[k]\setminus\{h_{1},\dots,h_{r-1}\}:\,(i,h_{r-1})\in E\right\} ,\nonumber \\
h_{r} & := & \min\mathfrak{r}_{r-1},\nonumber \\ 
\mathfrak{c}_{r} & := & \left\{ \alpha\in[n]\setminus\bigcup_{t=1}^{r-1}\mathfrak{c}_{t}:\,a_{h_{r}\alpha}\neq0\right\} \label{eq: iteration steps}
\end{eqnarray}
and the signs $\varrho_{r-1}(h_{r})$ and $\chi(\alpha):=\sigma(h_{r},\alpha)\cdot\varrho_{r-1}(h_{r})$
for any $\alpha\in\mathfrak{c}_{r}$. If the signs $\varrho_{s}$,
as long as $\chi$, are uniquely defined at steps $s<r$, then one
can check the compatibility of $\varrho_{r}$ with $\varrho$ too,
formally $\varrho_{r}(g)=\varrho(g)$ for all rows $g$ connected
to both $h_{r}$ and $h_{s}$, $s<r$. At this purpose, it is worth
introducing \textit{paths} on $\boldsymbol{a}$, that are defined
as finite sequences of non-vanishing elements of $\boldsymbol{a}$
connected by alternate moves along rows and columns, i.e. chains of
the type (\ref{eq: path}). The process that links the signs $\varrho_{r}(g)$,
derived from $h_{r}$, and $\varrho(g)$, obtained in a previous step
$h_{s}$, can be represented by a path $\Phi$ starting with $(g,\alpha_{1})\rightarrow(h_{i_{2}},\alpha_{1})\rightarrow\dots$,
passing through some signed rows $\Phi_{\varrho}:=\{g,h_{i_{2}},\dots,h_{i_{L}}\}$
and columns $\Phi_{\chi}:=\{\alpha_{1},\dots,\alpha_{L}\}$,
and ending with $\dots\rightarrow(h_{i_{L}},\alpha_{L})\rightarrow(g,\alpha_{L})$,
where $\alpha_{1}=\alpha_{s}\in\mathfrak{c}_{s}$, $\alpha_{L}=\alpha_{r}\in\mathfrak{c}_{r}$
and $a_{g\alpha_{s}}\neq0\neq a_{g\alpha_{r}}$. 

Note that the column of $\boldsymbol{a}$ associated with any $\alpha_{T}\in\Phi_{\chi}$ has at least two non-vanishing entries $(h_{i_{T}},\alpha_{T})$ and $(h_{i_{T+1}},\alpha_{T})$, while there is only one non-vanishing entry in pivot columns by definition, hence $\mathcal{V}\cap\Phi_{\chi}=\emptyset$. Suppose that there
is a row $\bar{h}\in\Phi_{\varrho}$ with $0\notin\{a_{\bar{h}\alpha_{T-1}},a_{\bar{h}\alpha_{T}},a_{\bar{h}\alpha_{X}}\}$,
$X\notin\{T-1,T\}$, and first assume that $T<X$: since the column
$\alpha_{X}$ is reached at $(h_{i_{X}},\alpha_{X})$, we can substitute
the chain $(\bar{h},\alpha_{T-1})\rightarrow(\bar{h},\alpha_{T})\rightarrow\dots\rightarrow(h_{i_{X}},\alpha_{X})\rightarrow(h_{i_{X+1}},\alpha_{X})$
with $(\bar{h},\alpha_{T-1})\rightarrow(\bar{h},\alpha_{X})\rightarrow(h_{i_{X+1}},\alpha_{X})$
in the path. At $X<T-1$, likewise, we can change $(h_{i_{X}},\alpha_{X})\rightarrow(h_{i_{X+1}},\alpha_{X})\rightarrow\dots\rightarrow(\bar{h},\alpha_{T-1})\rightarrow(\bar{h},\alpha_{T})$
with $(h_{i_{X}},\alpha_{X})\rightarrow(\bar{h},\alpha_{X})\rightarrow(\bar{h},\alpha_{T})$.
In both cases, the result is a path connecting $\alpha_{s}$ to $\alpha_{r}$
with shorter length, which can still be used to compare $\varrho_{r}(g)$
and $\varrho(g)$. Similar substitutions can be carried out for columns $\bar{c}\in\Phi_{\chi}$ that appear as components of more than two elements of the path. So we focus on paths of minimal length: the previous
construction shows that the submatrix extracted from $\boldsymbol{a}$
selecting the rows in $\Phi_{\varrho}$ and columns in $\Phi_{\chi}$
associated with a minimal path $\Phi$ from $(g,\alpha_{s})$ to $(g,\alpha_{r})$ has exactly two non-vanishing
elements per row and column. Hence, the condition $|\Delta_{\boldsymbol{a}}(\mathcal{V}\setminus\{\nu_{i}:\,i\in\Phi_{\varrho}\}\cup\Phi_{\chi})|=|\Delta_{\sigma(\boldsymbol{a})}(\mathcal{V}\setminus\{\nu_{i}:\,i\in\Phi_{\varrho}\}\cup\Phi_{\chi})|$
can be expressed as in (\ref{eq: compatibility three, b}), that is
\begin{equation}
\varrho(g)\cdot\chi(\alpha_{1})\cdot\left(\prod_{T=2}^{L}\varrho(h_{i_{T}})\cdot\chi(\alpha_{T})\right)=\left(\prod_{T=1}^{L-1}\varrho(h_{i_{T+1}})\cdot\chi(\alpha_{T})\right)\cdot\varrho_{2}(g)\cdot\chi(\alpha_{L})
\label{eq: product +1 on a closed path}
\end{equation}
which gives $\varrho_{r}(g)=\varrho(g)$.  

These steps can be repeated while $\mathfrak{r}_{r}\neq\emptyset\neq\mathfrak{c}_{r}$:
in this process we start from a node $h_{1}$ of the graph $\mathcal{G}$
and follow a path of adjacent vertices. If $\mathfrak{r}_{r}=\emptyset$,
then we can follow this path backwards until we reach $\mathfrak{r}_{u}$,
$u\in[r-1]$, such that there exists $\tilde{h}_{u}\in\mathfrak{r}_{u}$
with $\tilde{h}_{u}\neq h_{w}$, $w\in[r-1]$. Then, these operations
can be repeated along another path of adjacent nodes starting from
$\tilde{h}_{u}$. At each stage an additional sign is selected compatibly with the previous
ones. Such a process explores each node in the connected
component of $\mathcal{G}$ containing $h_{1}$ exactly once. Repeating these steps for all the connected components of $\mathcal{G}$
we give a sign to all the rows of $\boldsymbol{a}$. Since each column
$\alpha$ contains at least one non-vanishing element $a_{h_{r}\alpha}\neq0$
and all the rows are visited, the index $\alpha$ belongs to $\mathfrak{c}_{r}$
at certain step $r$. Hence, all the columns are labelled by a sign
as well. By construction, one has $\sigma(i,\alpha)=\varrho(i)\cdot\chi(\alpha)$
for all $(i,\alpha)$ such that $a_{i\alpha}\neq0$. Clearly, this assignment produces a determinantal choice of signs. 
\end{proof}
\begin{rem}
\label{rem: signs for arrays from deformation} The proof of Theorem
\ref{thm: compatible choices of signs, determinant} is constructive:
it generates one of the possible sign configurations that induce a
given choice $\Sigma$. This configuration is not unique, e.g. switching
the sign of two distinct rows of $\boldsymbol{A}$ or $\boldsymbol{a}$
returns the same $\Sigma$. The uniqueness of the previous construction
follows from the choice of a (arbitrary) $\sigma$ in (\ref{eq: associated choice of signs entries RREF})
for the entries of the reduced row echelon form $\boldsymbol{a}$
and signs of distinguished nodes (such as $h_{1}$) in the connected
components of $\mathcal{G}$. Furthermore, the algorithm attributes a sign
to vanishing entries of $\boldsymbol{a}$ too. The labelling $+0$ or
$-0$ can be thought as the sign of the associated entry in a perturbed
matrix $\boldsymbol{b}$ such that $\mathrm{sign}(\Delta_{\boldsymbol{a}}(\mathcal{I}))=\mathrm{sign}(\Delta_{\boldsymbol{b}}(\mathcal{I}))$
for all $\mathcal{I}\in\mathcal{P}_{k}[n]$ with $\Delta_{\boldsymbol{a}}(\mathcal{I})\neq0$.
These points are relevant when $\boldsymbol{a}$ has many vanishing
entries. 
\end{rem}

\section{\label{sec: Tropical limit} Remarks on the tropical limit} 

With regard to the issues dealt with in the present work, tropical
methods have proved useful in the study of algebraic amoebas \cite{Maeda2007,Passare2012}
and the analysis of KdV and KP II soliton solutions and their singularities
\cite{DM-H2011,DM-H2014,KodamaWilliams2013}. Here we briefly discuss
how the models we have described give a concrete realization to the
concepts discussed in \cite{Angelelli2017}, especially the role
of order and enumeration in the tropical limit in statistical physics. Relevant
quantities, like the free energy $-k_{B}T\cdot\ln\mathcal{Z}$ associated
with (\ref{eq: standard partition function}), fit naturally in this
process, and their tropical limit is suitable for the description
of phenomena like exponential degenerations of energy levels and limiting
temperatures, see \cite{AK2015}. 

Determinantal partition functions (\ref{eq: Cauchy-Binet formula, solitons, a})
give a concrete realization of the \textit{nested} tropical limit
discussed in \cite{Angelelli2017}. For the sake of concreteness, let us consider the case when $g_{\mathcal{I}}\geq0$ for all $\mathcal{I}\in\mathcal{P}_{k}[n]$. Outside the locus of points $\boldsymbol{x}\in\mathbb{R}^{d}$
where $\varphi_{\alpha}(\boldsymbol{x})=\varphi_{\beta}(\boldsymbol{x})$
for some $\alpha\neq\beta$, one has $\varphi_{\pi(1)}(\boldsymbol{x})>\dots>\varphi_{\pi(n)}(\boldsymbol{x})$
for a permutation $\pi\in\mathcal{S}_{n}$. Here, for large values
of $\boldsymbol{x}$, there exists exactly one dominant term $\Lambda_{\mathcal{D}}(\boldsymbol{x})>\Lambda_{\mathcal{H}}(\boldsymbol{x})$,
$\mathcal{H}\in\mathcal{P}_{k}[n]\setminus\{\mathcal{D}\}$, as follows
from the polynomial form of $\varphi_{\alpha}(\boldsymbol{x})$. In
particular, $\mathcal{D}$ is the least set in $\mathcal{P}_{k}[n]$
associated with a non-vanishing $g_{\mathcal{D}}$ with respect to
the lexicographical order induced by $(\pi(1),\dots,\pi(n))$. Hence
one has ${\displaystyle \Lambda_{\mathcal{H}}(\boldsymbol{x})}\ll\Lambda_{\mathcal{D}}(\boldsymbol{x})$
for all $\mathcal{H}\neq\mathcal{D}$ and $\tau_{\mathcal{I}}(\boldsymbol{x})<0$
if and only $\mathcal{D}\parallel\mathcal{I}$. Moreover, if $g_{\mathcal{I}}>0 for all \mathcal{I}\in\mathcal{P}_{k}[n]$, then the number of such subsets $\mathcal{I}$ with odd intersection with $\mathcal{D}$ is $\Omega(n,s;k)$, that is the dual quantity of $\Omega(n,k;s)$ as
in (\ref{eq: duality s<->k}) and (\ref{eq: +/- amoeba duality}).
The nested form (see \cite{Angelelli2017}, \textsection 6) comes from the identification
of dominant term $\mathcal{D}$: if one sets $\mathfrak{D}(r):=\left\{ \mathcal{H}\in\mathcal{P}_{k}[n]:\,\max\{i\in[k]\cup\{0\}:\,\{\pi(1),\dots,\pi(i)\}\subseteq\mathcal{H}\cap\mathcal{D}\}=r\right\} $,
then $\tau$ can be expressed as 
\begin{eqnarray}
\hspace*{-2cm}
& & \tau(\boldsymbol{x})\nonumber \\ 
& = & e^{\varphi_{\pi(1)}(\boldsymbol{x})}\left(e^{\varphi_{\pi(2)}(\boldsymbol{x})}\left(\dots e^{\varphi_{\pi(k-1)}(\boldsymbol{x})}\left(e^{\varphi_{\pi(k)}(\boldsymbol{x})}\left(g_{\mathcal{D}}+\sum_{\mathcal{I}\in\mathfrak{D}(k-1)}g_{\mathcal{I}}(\boldsymbol{x}) e^{-\varphi_{\pi(k)}(\boldsymbol{x})}\prod_{\alpha\in\mathcal{I}\setminus\pi([k-1])}e^{\varphi_{\alpha}(\boldsymbol{x})}\right)\right.\right.\right.\nonumber \\
\hspace*{-2.5cm} & + & \left.\left.\sum_{\mathcal{I}\in\mathfrak{D}(k-2)}g_{\mathcal{I}}(\boldsymbol{x}) e^{-\varphi_{\pi(k-1)}(\boldsymbol{x})}\prod_{\alpha\in\mathcal{I}\setminus\pi([k-2])}e^{\varphi_{\alpha}(\boldsymbol{x})}\right)+\dots\sum_{\mathcal{I}\in\mathfrak{D}(1)}g_{\mathcal{I}}(\boldsymbol{x})e^{-\varphi_{\pi(2)}(\boldsymbol{x})}\prod_{\alpha\in\mathcal{I}\setminus\{\pi(1)\}}e^{\varphi_{\alpha}(\boldsymbol{x})}\right)\nonumber \\
\hspace*{-2.5cm} & + & \left.\sum_{\mathcal{I}\in\mathfrak{D}(0)}g_{\mathcal{I}}(\boldsymbol{x}) e^{-\varphi_{\pi(1)}(\boldsymbol{x})}\prod_{\alpha\in\mathcal{I}}e^{\varphi_{\alpha}(\boldsymbol{x})}\right).\label{eq: nested form tau function}
\end{eqnarray}
In such a way, one can recursively find the first $k$ dominant terms,
thus realising a refinement of the tropical limit of the exponents
$\varphi_{\alpha}$ through the limit for the collections $\sum_{\alpha\in\mathcal{I}}\varphi_{\alpha}$,
$\mathcal{I}\in\mathcal{P}_{k}[n]$.

The tropical limit drastically affects the enumeration process too,
mainly due to idempotence. This property can be seen as a symmetry
of the statistical system with respect to creation of copies of its
constituents: if the copying process involves all the constituents
at once, e.g. all the distinct terms in the partition function, the
symmetry is global, otherwise it is local (for more details, see \cite{Angelelli2017},
\textsection 8). Matrix models also provide a realization of such a symmetry:
for each $\alpha\in[n]$, one can introduce a ``copy'' the $\alpha$th
soliton as a copy of the associated row/column, formally 
\begin{eqnarray}
\boldsymbol{A} & \mapsto & \mathrm{T}_{\alpha}(\boldsymbol{A}):=\boldsymbol{A}\cdot\boldsymbol{T}_{\alpha},\nonumber \\
\boldsymbol{K} & \mapsto & \mathrm{T}_{\alpha}(\boldsymbol{K}):=\boldsymbol{T}_{\alpha}^{T}\cdot\boldsymbol{K},\nonumber \\
\boldsymbol{\Theta}(\boldsymbol{x}) & \mapsto & \mathrm{T}_{\alpha}(\boldsymbol{\Theta}(\boldsymbol{x})):=\boldsymbol{\pi}_{\mathrm{cyc},\alpha}^{-1}\cdot\left(\boldsymbol{\Theta}(\boldsymbol{x})\oplus\left(e^{\varphi_{\alpha}(\boldsymbol{x})}\right)\right)\cdot\boldsymbol{\pi}_{\mathrm{cyc},\alpha} \label{eq: copy creation tau function}
\end{eqnarray}
where $\boldsymbol{\pi}_{\mathrm{cyc},\alpha}$ is the matrix associated with the cyclic permutation $(\alpha+1\,\alpha+2\,\dots\,n+1)$ and
\begin{equation}
(\boldsymbol{T}_{\alpha})_{\gamma\beta}:=\sum_{\omega=1}^{\alpha}\delta_{\gamma,\beta}\cdot\delta_{\gamma,\omega}+\sum_{\omega=\alpha}^{n}\delta_{\gamma,\beta-1}\cdot\delta_{\gamma,\omega},\quad\alpha,\gamma\in[n],\,\beta\in[n+1].\label{eq: copying matrix}
\end{equation}
If the same number of copies, say $\ell$, are created for all $\alpha\in[n]$,
then the resulting $\tau$-function gets a multiplicative factor $(\ell+1)^{k}$
which disappears in the derivation (\ref{eq: solution from tau}),
so the soliton solution is preserved. Thus, there is an intrinsic
tropical global symmetry in this particular class of solutions, which
is also compatible with the $GL_{k}(\mathbb{R})$-action defining
the Grassmannian. Further, the effect of the tropical copy of a single
soliton $\kappa_{\alpha_{0}}$ on the singular locus defined by $\det\left(\boldsymbol{A}\cdot\boldsymbol{\Theta}(\boldsymbol{x})\cdot\boldsymbol{K}\right)=0$
is the same as its erasure, since 
\begin{eqnarray}
& & \det\left(\mathrm{T}_{\alpha_{0}}^{(\ell)}(\boldsymbol{A})\cdot\mathrm{T}_{\alpha_{0}}^{(\ell)}(\boldsymbol{\Theta}(\boldsymbol{x}))\cdot\mathrm{T}_{\alpha_{0}}^{(\ell)}(\boldsymbol{K})\right)\nonumber \\ 
& = & \ell\cdot\sum_{\alpha_{0}\in\mathcal{I}}^{\mathcal{I}\in\mathcal{P}_{k}([N])}\Delta_{\mathcal{I}}(\boldsymbol{A})\Delta_{\mathcal{I}}(\boldsymbol{K})\exp\left(\sum_{\beta\in\mathcal{I}}\varphi_{\beta}(\boldsymbol{x})\right)\nonumber \\ 
& = & -\ell\cdot\sum_{\alpha_{0}\notin\mathcal{I}}^{\mathcal{I}\in\mathcal{P}_{k}([N])}\Delta_{\mathcal{I}}(\boldsymbol{A})\Delta_{\mathcal{I}}(\boldsymbol{K})\exp\left(\sum_{\beta\in\mathcal{I}}\varphi_{\beta}(\boldsymbol{x})\right). \nonumber \\ \label{eq: local symmetry tau function singular locus}
\end{eqnarray}
This process is independent on the choice of $\ell>1$, since the
factor $\ell$ in (\ref{eq: local symmetry tau function singular locus})
disappears in the derivation (\ref{eq: solution from tau}). In this
sense, the invariance under tropical copies is local on the singular
locus. 

\end{document}